\pdfoutput=1
\documentclass[12pt,reqno]{amsart}   
\usepackage{LocalPackages}
\usepackage[top=1in, left=1.25in, right=1.25in, bottom=1in]{geometry}

\begin{document}

\title[Statistical Estimation]{Statistical Estimation: From Denoising to Sparse Regression and Hidden Cliques}

\author[E. W. Tramel]{Eric W. Tramel}

\author[S. Kumar]{Santhosh Kumar}

\author[A. Giurgiu]{Andrei Giurgiu}

\author[A. Montanari]{Andrea Montanari}

\thanks{%
These are notes from the lecture of Andrea Montanari given at the autumn school ``Statistical Physics, Optimization, Inference, and Message-Passing Algorithms'', that took place in Les Houches, France from Monday September 30th, 2013, till Friday October 11th, 2013. The school was organized by Florent Krzakala from UPMC \& ENS Paris, Federico Ricci-Tersenghi from La Sapienza Roma, Lenka Zdeborov\'a from CEA Saclay \& CNRS, and Riccardo Zecchina from Politecnico Torino.}

\thanks{A.M. was partially supported by the NSF grant CCF-1319979 and the grants AFOSR/DARPA
FA9550-12-1-0411 and FA9550-13-1-0036.}

\thanks{E.W.T. was supported by the ERC under the European Union's 7th Framework Programme Grant Agreement 307087-SPARCS}

\begin{abstract}
These notes review six lectures given by Prof. Andrea Montanari on the 
topic of statistical estimation for linear models. The first two lectures
cover the principles of signal recovery from linear measurements in terms of 
minimax risk. Subsequent lectures demonstrate the application
of these principles to several practical problems in science and engineering.
Specifically, these topics include
denoising of error-laden signals, 
recovery of compressively sensed signals, reconstruction of low-rank matrices,
and also the discovery of hidden cliques within large networks.
\end{abstract}

\maketitle
\clearpage
\tableofcontents

\section*{Preface}
These lectures provide a gentle introduction to some modern topics in
high-dimensional statistics, statistical learning and signal
processing, for an audience without any previous background in these
areas. 
The point of view we take is to connect the recent advances to
basic background in statistics (estimation, regression and the
bias-variance trade-off), and to classical --although non-elementary-- 
developments (sparse estimation and wavelet denoising).

The first three sections will cover these basic and classical
topics. We will then cover more recent research, and discuss sparse
linear regression in Section \ref{lecture_4}, and its analysis for
random designs in Section \ref{lecture_5}. Finally, in Section
\ref{lecture_6} we discuss an intriguing example of a class of
problems whereby sparse and low-rank structures have to be exploited simultaneously.

Needless to say, the selection of topics presented here is very partial.
The reader interested in a deeper understanding can choose from a number of
excellent options for further study.
Very readable introductions to the fundamentals of statistical estimation
can be found in the books by Wasserman 
\citeyear{wasserman2004all,wasserman2006all}. More advanced references
(with a focus on high-dimensional and non-parametric settings) are the
monographs by  Johnstone \citeyear{JohnstoneBook} and Tsybakov
\citeyear{tsybakov2009introduction}.
The recent book by B\"uhlmann and van de Geer
\citeyear{buhlmann2011statistics} provides a useful survey of recent
research in high-dimensional statistics. For the last part of these
notes, dedicated to most recent research topics, we will provide
references to specific papers.

\section{Statistical estimation and linear models}

\subsection{Statistical estimation}

The general problem of statistical estimation is the one of estimating
an unknown object from noisy observations. 
To be concrete, we can consider the model
\begin{equation}
	y = f(\theta; \text{noise})\, ,\label{eq:BasicModel}
\end{equation}
where $y$ is a set of observations, $\theta$ is 
the unknown object, for instance a vector, a set of parameters,  or a function. 
Finally, $f(\, \cdot\,; \noise)$ is an
observation model which links together the observations and
the unknown parameters which we wish to estimate. 
Observations are corrupted by random noise according to this model.
The objective is to produce an estimation $\htheta = \htheta(y)$ that
is accurate under some metric.
The estimation 
of $\theta$ from $y$ is commonly aided by some hypothesis about the structure, 
or behavior, of $\theta$.  Several examples
are described below. 

Statistical estimation can be regarded as a subfield of statistics,
and lies at the core of a number of areas of science and engineering, 
including data mining, signal processing, and inverse problems. Each of these disciplines
provides some information on how to model data acquisition, 
computation, and how best to exploit the hidden structure of the 
model of interest. 
Numerous techniques and algorithms have been developed over a long
period of time, and they often differ in the assumptions and
the objectives  that they try to achieve. 
As an example, a few major distinctions to keep in mind are the following.
\begin{description}
\item[Parametric versus non-parametric] In parametric estimation,
  stringent assumptions are made about the unknown object, hence
  reducing $\theta$ to be determined by a small set of parameters. In contrast,
  non-parametric estimation strives to make minimal modeling
  assumptions, resulting in 
  $\theta$ being an high-dimensional or infinite-dimensional object
  (for instance, a function).
\item[Bayesian versus frequentist] The Bayesian approach assumes 
$\theta$ to be a random variable as well, whose `prior' distribution
plays an obviously important role. From a frequentist point of view,
$\theta$ is
instead an arbitrary point in a set of possibilities. In these
lectures we shall mainly follow the frequentist point of view, but
we stress that the two are in fact closely related.
\item[Statistical efficiency versus computational efficiency]
Within classical estimation theory, a specific estimator $\htheta$ is mainly
evaluated in terms of its accuracy: How close (or far) is $\htheta(y)$
to $\theta$ for typical realizations of the noise? We can broadly refer to
this figure of merit as to `statistical efficiency.'

Within modern applications, computational efficiency has arisen as a
second central concern. Indeed $\theta$ is often high-dimensional:
it is not uncommon to fit models with millions of parameters. The
amounts of observations has grown in parallel.  It becomes therefore
  crucial to devise estimators whose complexity scales gently with
the dimensions, and with the amount of data.
\end{description}
We next discuss informally a few motivating examples.

\subsubsection{Example 1: Exploration seismology}

Large scale statistical estimation plays a key role 
in the field of exploration seismology. This technique uses seismic measurements on the
earth surface to reconstruct geological structures, composition and 
density field of a geological substrates in \cite{HFY2012}.
Measurements are acquired, generally, by sending some known seismic
wave through the ground, perhaps through a controlled explosive detonation,
and measuring the response at multiple spatially dispersed sensors. 

Below is a simple dictionary that points at the various elements of
the model (\ref{eq:BasicModel}) in this example.
\begin{center}
\begin{tabular}{l l}
\multicolumn{2}{c}{Exploration Seismology}\\
\hline
\hline
$y$ 				& seismographic measurements \\
$\theta$ 			& earth density field \\
\textit{Hypothesis} & smooth density field \\
\hline
\end{tabular}
\end{center}

The function $f(\,\cdots\,)$ in  Eq.~(\ref{eq:BasicModel}) expresses
the outcome of the seismographic measurements, given a certain density
field $\theta$ and a certain source of seismic waves (left implicit
since it is known). While this relation is of course complex, and ultimately 
determined by the physics of wave propagation, it is in principle
perfectly known.

Because of the desired resolution of the recovered earth density field,
this statistical estimation problem is often  ill-posed, as sampling
is severely limited by the cost of generating the source signal and the
distribution and set-up of the receivers.
Resolution can be substantially improved by using some structural
insights into the nature of the earth density field. For instance, one
can exploit the fact that this is mostly smooth with the exception of
some discontinuity surfaces.

\begin{figure}[h]
\centering
	\includegraphics[width=\figurewidth]{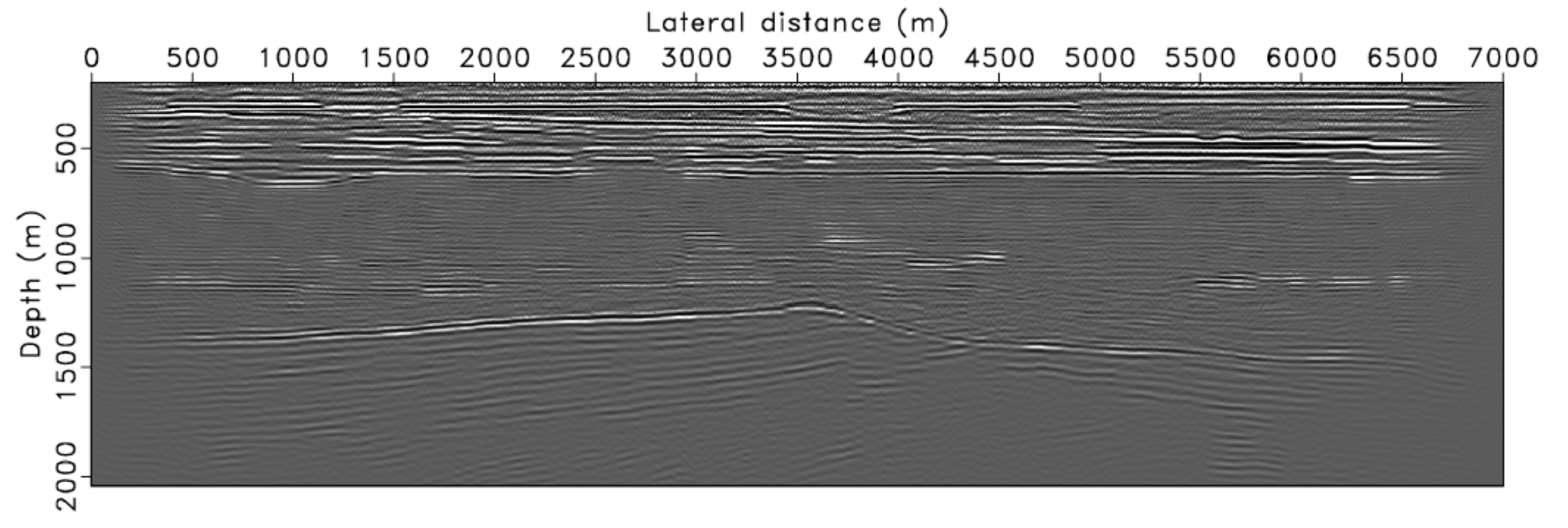}
\caption{A recovered earth density field, from \cite{HFY2012}.}
\end{figure}

\subsubsection{Example 2: Hidden structure in networks}

Many modern data sets are relational, i.e. they express pairwise
relations within a set of objects. This is the case in social
networks, communication networks, unsupervised learning and so on.

In the simplest case, for each pair of nodes in a network, we know
whether they are connected or not.
Finding a hidden structure in such a network is a recurring problem
with these datasets. A highly idealized but nevertheless very
interesting problem requires to find a highly connected subgraph in a
otherwise random graph.

\begin{center}
\begin{tabular}{l l}
\multicolumn{2}{c}{ Hidden Network Structure}\\
\hline
\hline
$y$ 				& large network \\
$\theta$ 			& hidden subset of nodes \\
\textit{Hypothesis} & hidden network is highly connected \\
\hline
\end{tabular}
\end{center}

From Figure \ref{fig:hidden_net}, it is 
apparent that the discovery of such 
networks can be  a difficult task. 

\begin{figure}[ht]
\centering
\subfigure[Subgraph easily visible.]{%
	\includegraphics[width=0.5\figurewidth]{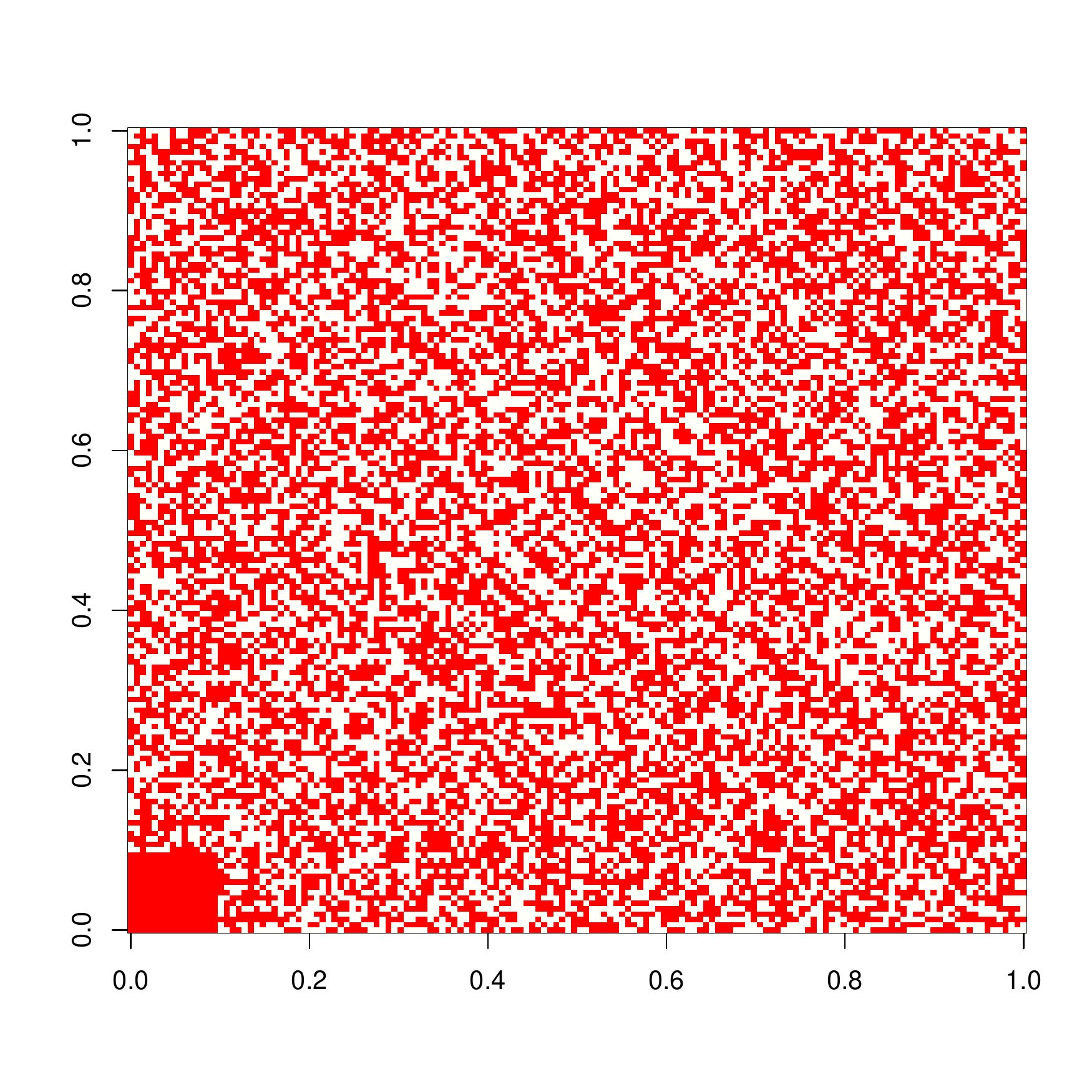}}
\quad
\subfigure[Subgraph hidden.]{%
	\includegraphics[width=0.5\figurewidth]{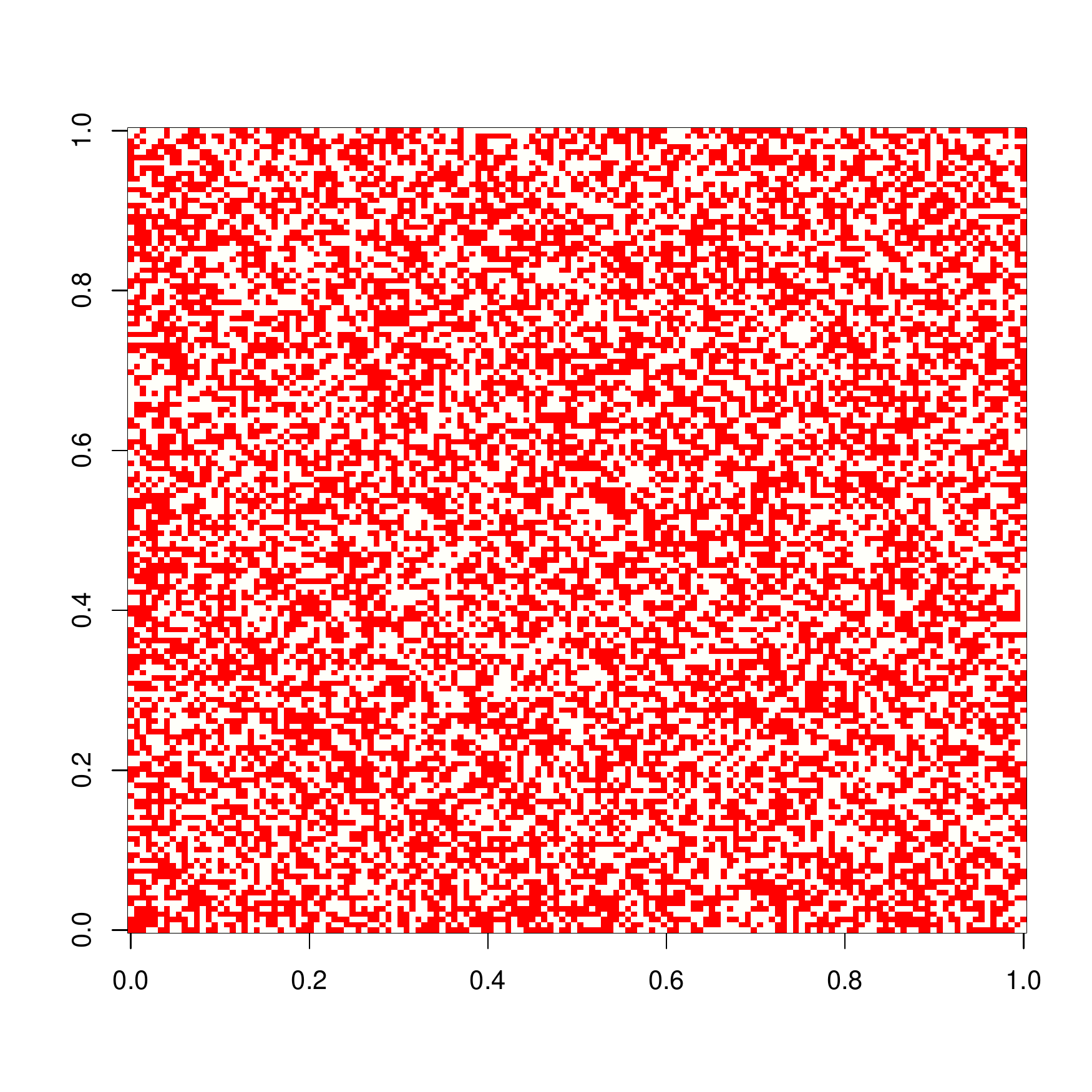}}
\quad
\subfigure[Subgraph revealed]{%
	\includegraphics[width=0.5\figurewidth]{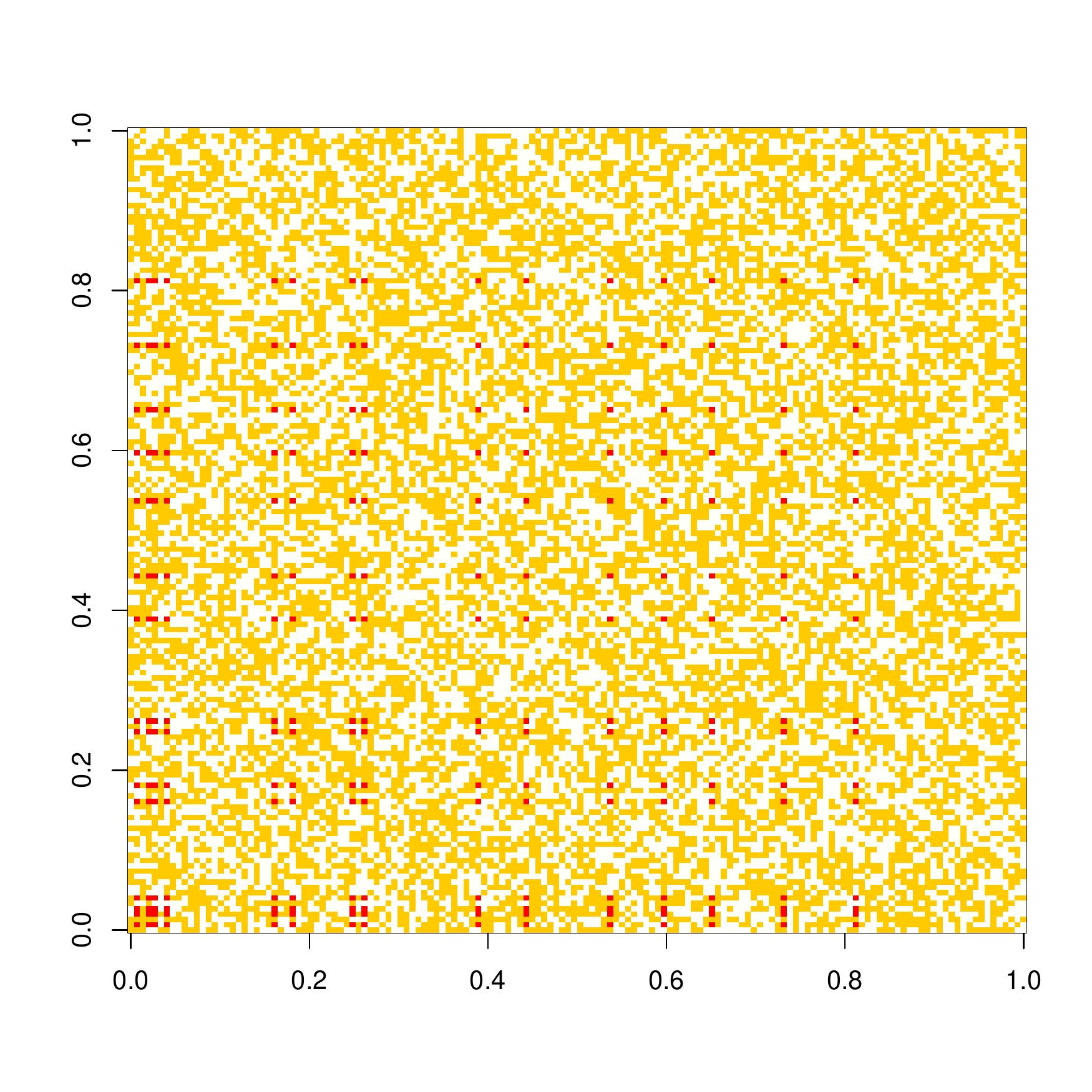}}
\caption{The same network adjacency matrix, is shown in (a) and (b), but 
		 the nodes permuted in (b). In (c), the hidden subgraph is revealed.
		 \label{fig:hidden_net}}
\end{figure}

\subsubsection{Example 3: Collaborative filtering}

Recommendation systems are ubiquitous in e-commerce and web
services. They aim at personalizing each user's experience through an
analysis of her past behavior, and --crucially-- the past behavior of
similar users. The algorithmic and statistical techniques that allow
to exploit this information are referred to as `collaborative
filtering.' Amazon, Netflix, YouTube all make intensive use of
collaborative filtering technologies.

In a idealized model for collaborative filtering, each user of a
e-commerce site is associated to a row of a matrix, and each product
to a column. Entry $\theta_{i,j}$  in this matrix corresponds to the evaluation that user $i$
gives of product $j$.  A small subset of  the entries is observed because of feedback
provided by the users (reviews, ratings, purchasing behavior). 
In this setting, collaborative filtering aims at estimating the whole
matrix, on the basis of noisy observations of relatively few of its entries.

While  this task is generally hopeless, it is observed empirically
that such data matrices are often well approximated by low-rank
matrices. This corresponds to the intuition that a small number of
factors (corresponding to the approximate rank) explain the opinions
of many users concerning many items.
The problem is then modeled as the one of estimating a low-rank matrix
from noisy observations of some of its entries. 
\begin{center}
\begin{tabular}{l l}
\multicolumn{2}{c}{Collaborative Filtering}\\
\hline
\hline
$y$ 				& small set of entries in a large matrix\\
$\theta$ 			& unknown entries of matrix \\
\textit{Hypothesis} & matrix has a low-rank representation \\
\hline
\end{tabular}
\end{center}

A toy example of this problem is demonstrated in 
Figures \ref{fig:mat_recovery_low}-\ref{fig:mat_recovery_high}. It 
can be observed that an accurate estimation of the original matrix 
is possible even when very few of its coefficients are known.

\begin{figure}[ht]
	\centering
	\begin{minipage}[b]{0.47\linewidth}
	\begin{mdframed}
	\centering
	\subfigure[low-rank matrix $M$]{%
		\includegraphics[width=\smallfigurewidth]{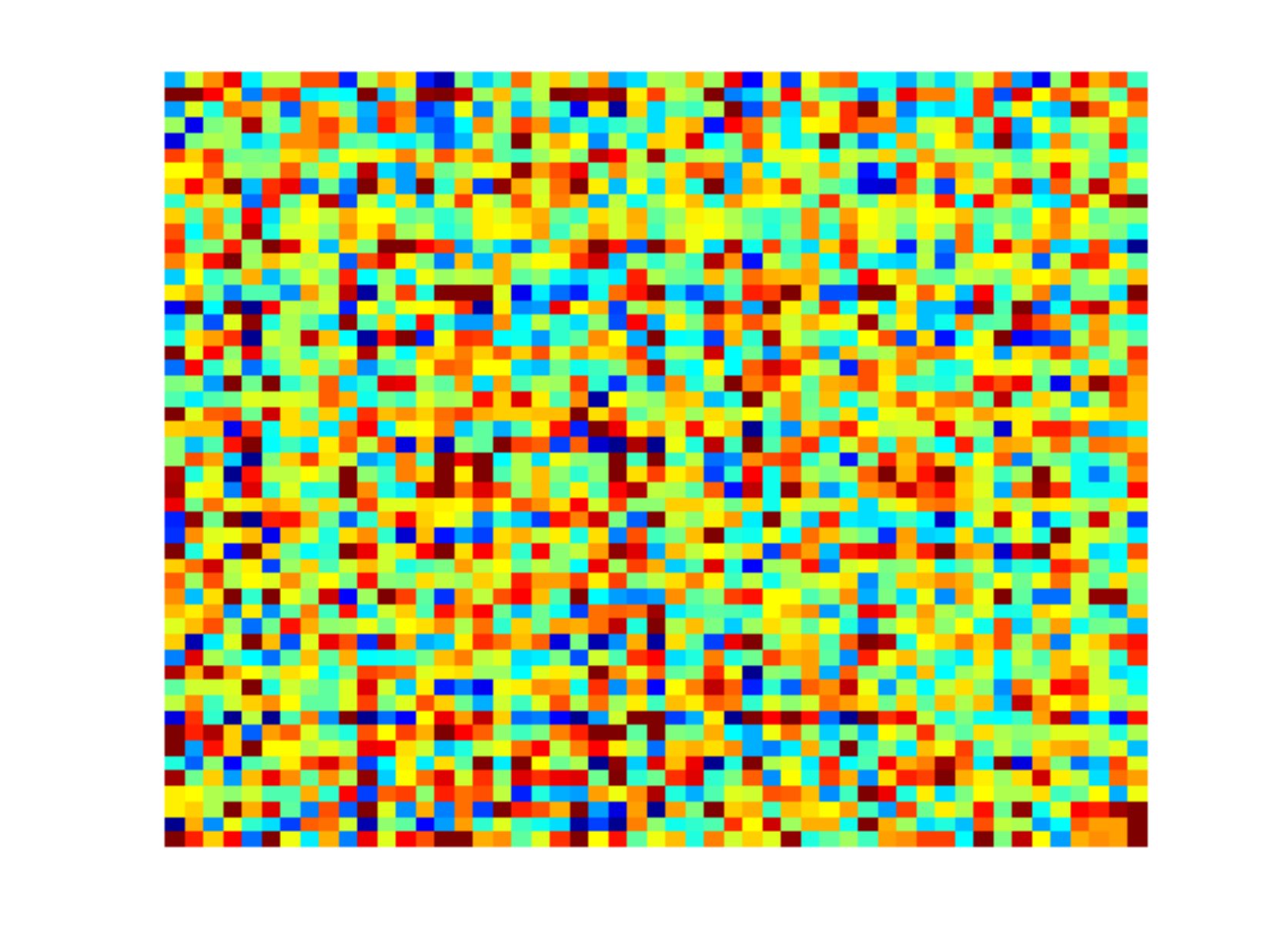}}
	\quad
	\subfigure[sampled matrix $M^\mathbb{E}$]{%
		\includegraphics[width=\smallfigurewidth]{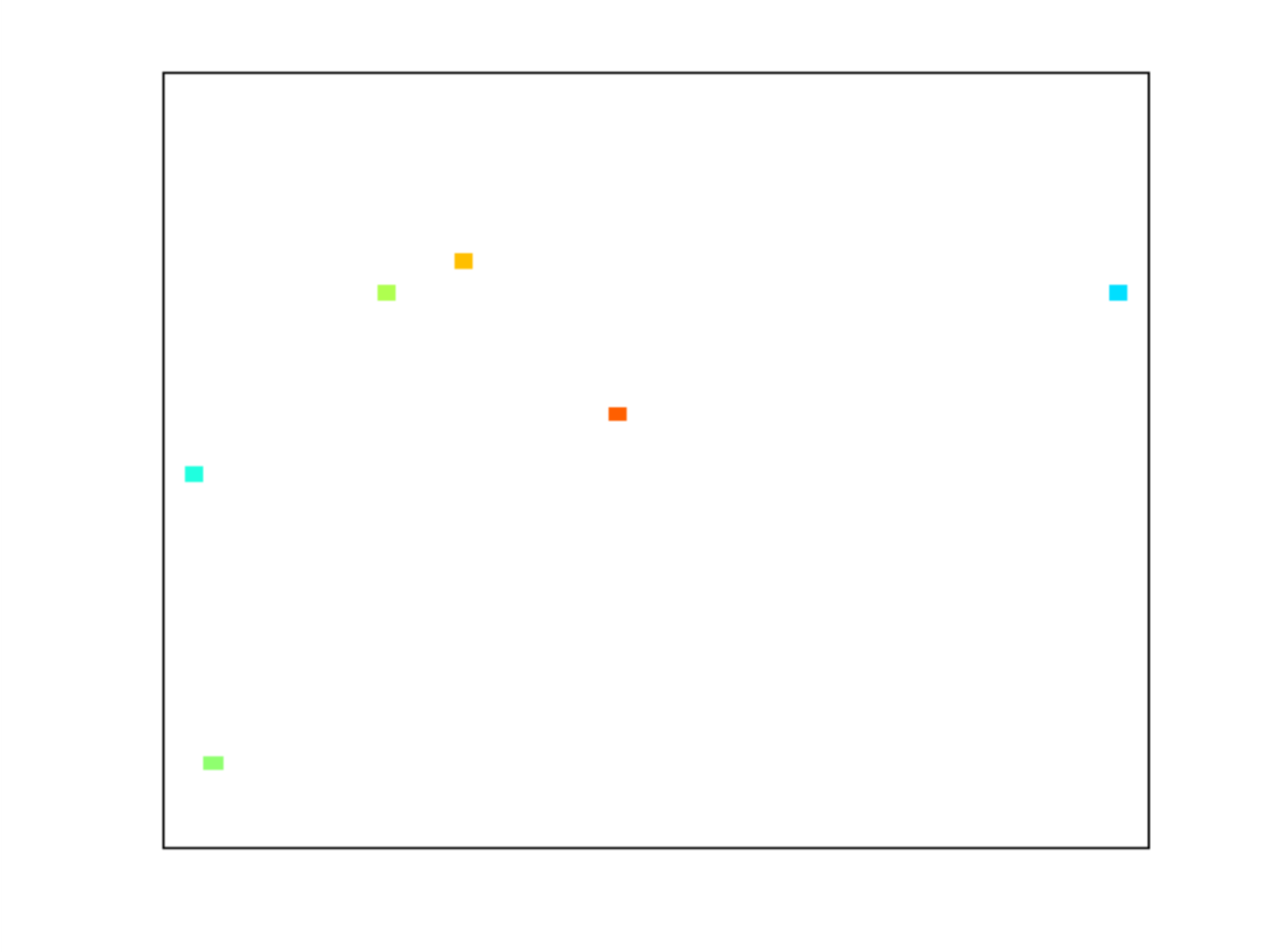}}

	\subfigure[output $\hat{M}$]{%
		\includegraphics[width=\smallfigurewidth]{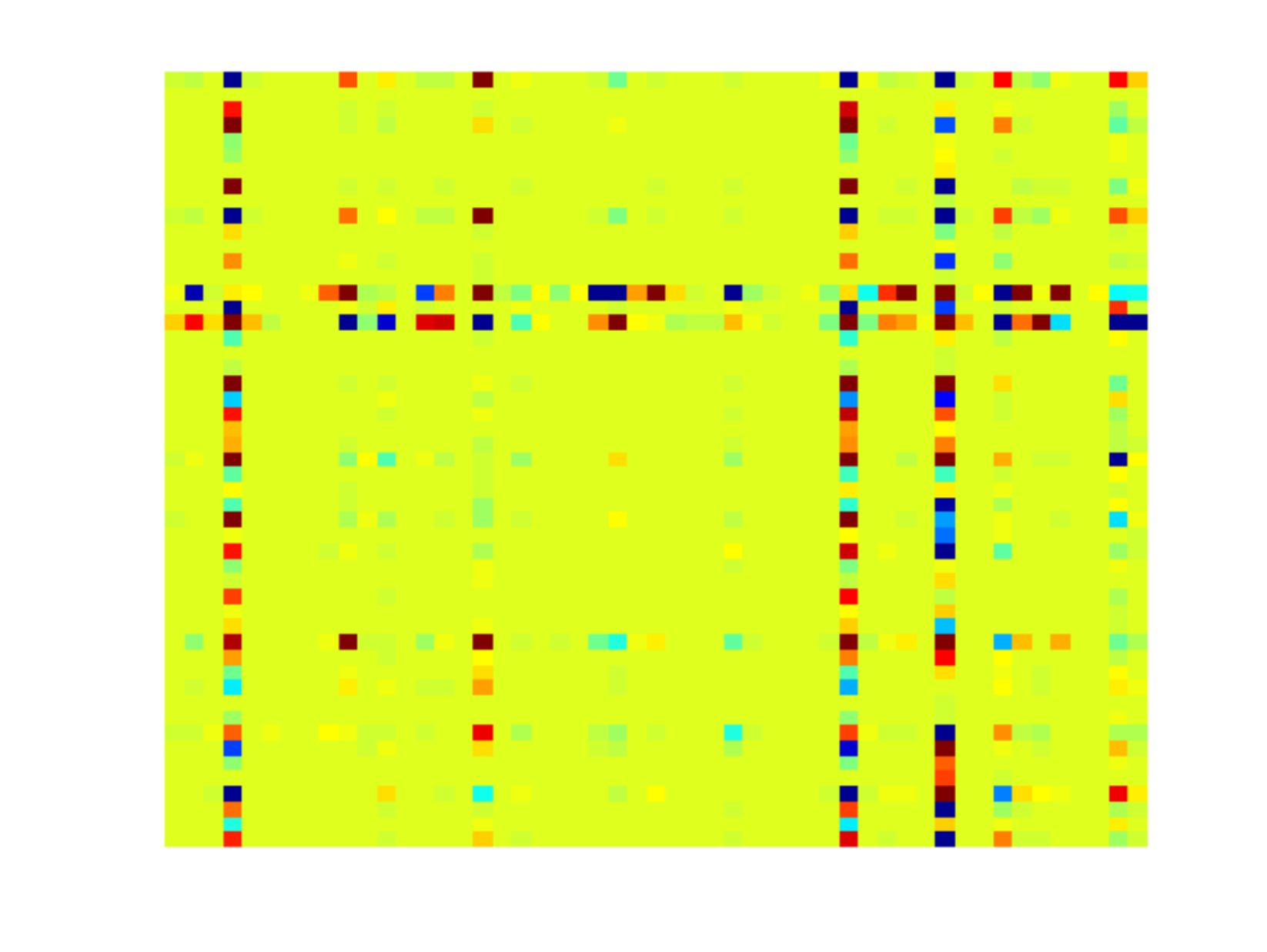}}
	\quad
	\subfigure[squared error $\left( M - \hat{M} \right)^2$]{%
		\includegraphics[width=\smallfigurewidth]{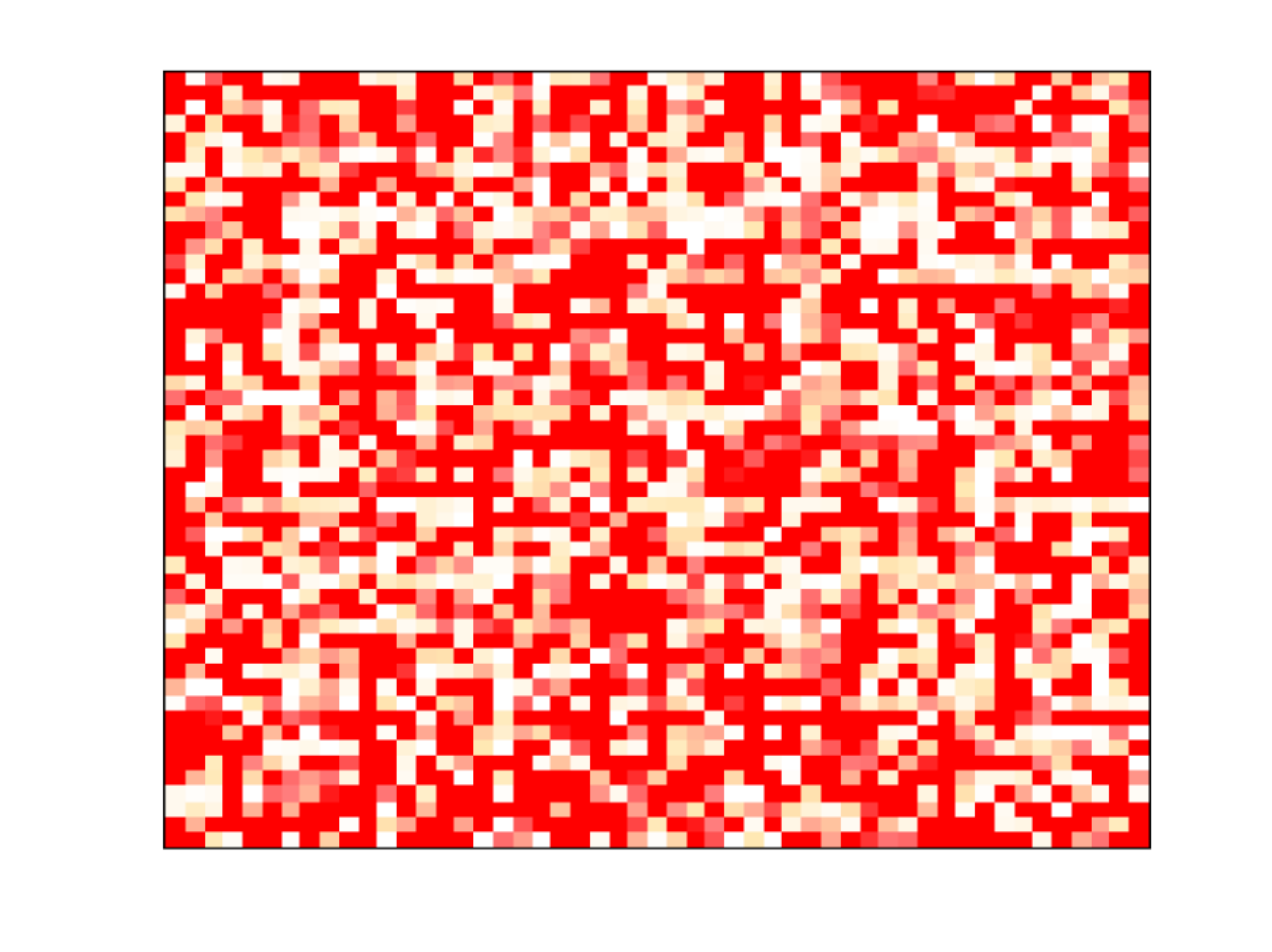}}
	\end{mdframed}	
	\caption{Recovering a $2000\times2000$ rank-8 matrix from $0.25\%$ of its values.
			\label{fig:mat_recovery_low}} 
	\end{minipage}
	\begin{minipage}[b]{0.47\linewidth}
	\begin{mdframed}
	\centering
	\subfigure[low-rank matrix $M$]{%
		\includegraphics[width=\smallfigurewidth]{ChapMontanari/figures/numerical_org.pdf}}
	\quad
	\subfigure[sampled matrix $M^\mathbb{E}$]{%
		\includegraphics[width=\smallfigurewidth]{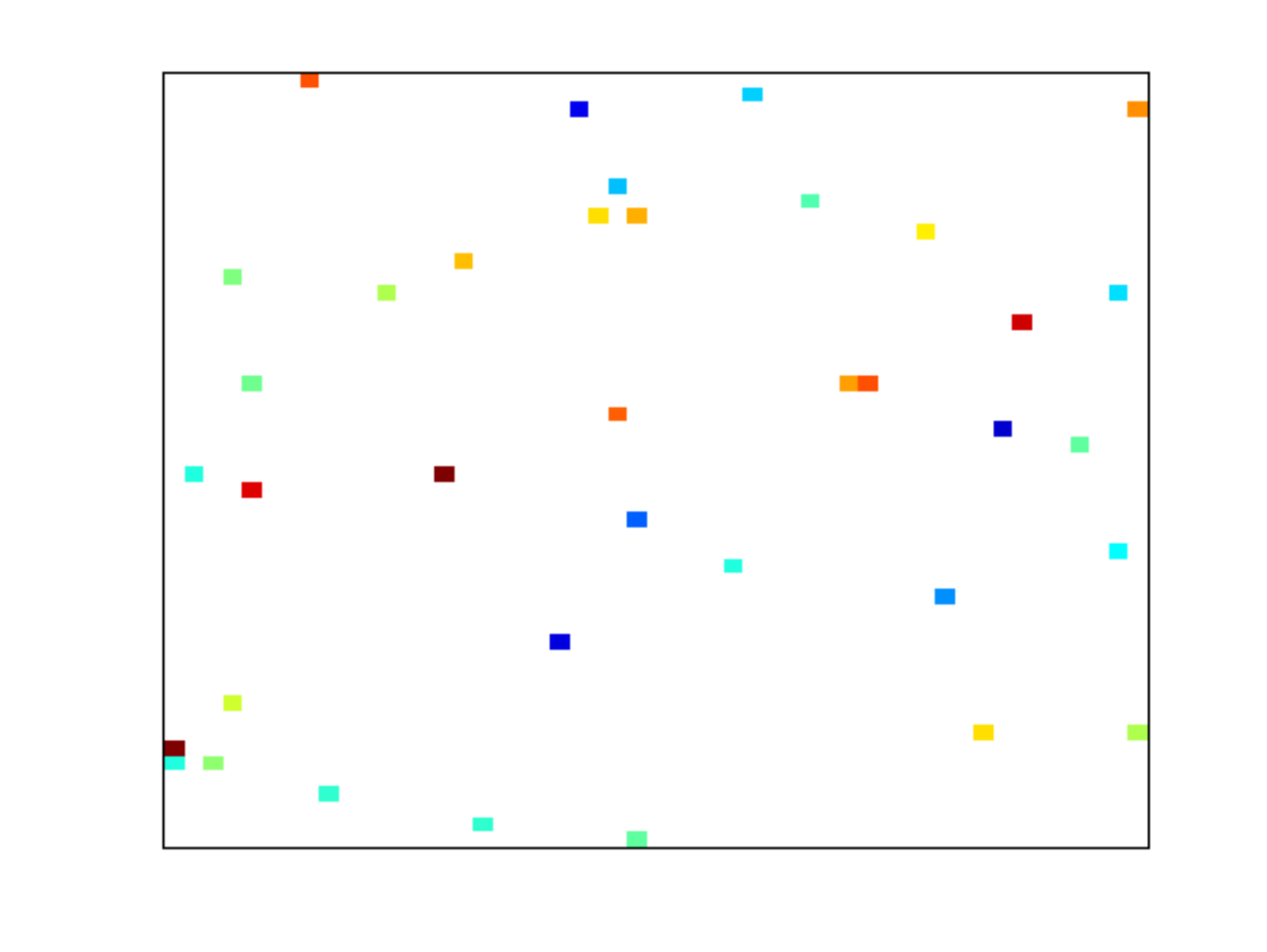}}

	\subfigure[output $\hat{M}$]{%
		\includegraphics[width=\smallfigurewidth]{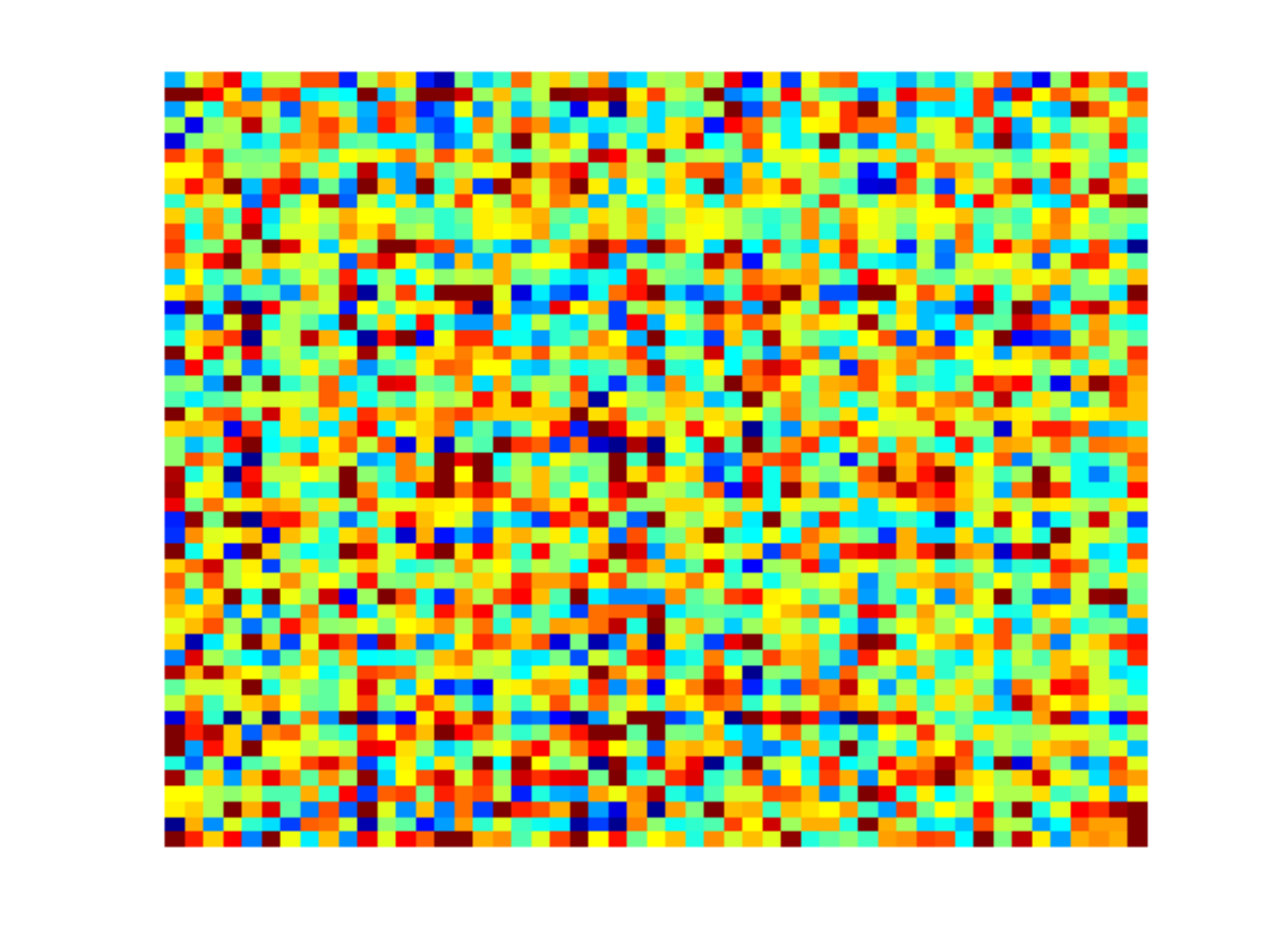}}
	\quad
	\subfigure[squared error $\left( M - \hat{M} \right)^2$]{%
		\includegraphics[width=\smallfigurewidth]{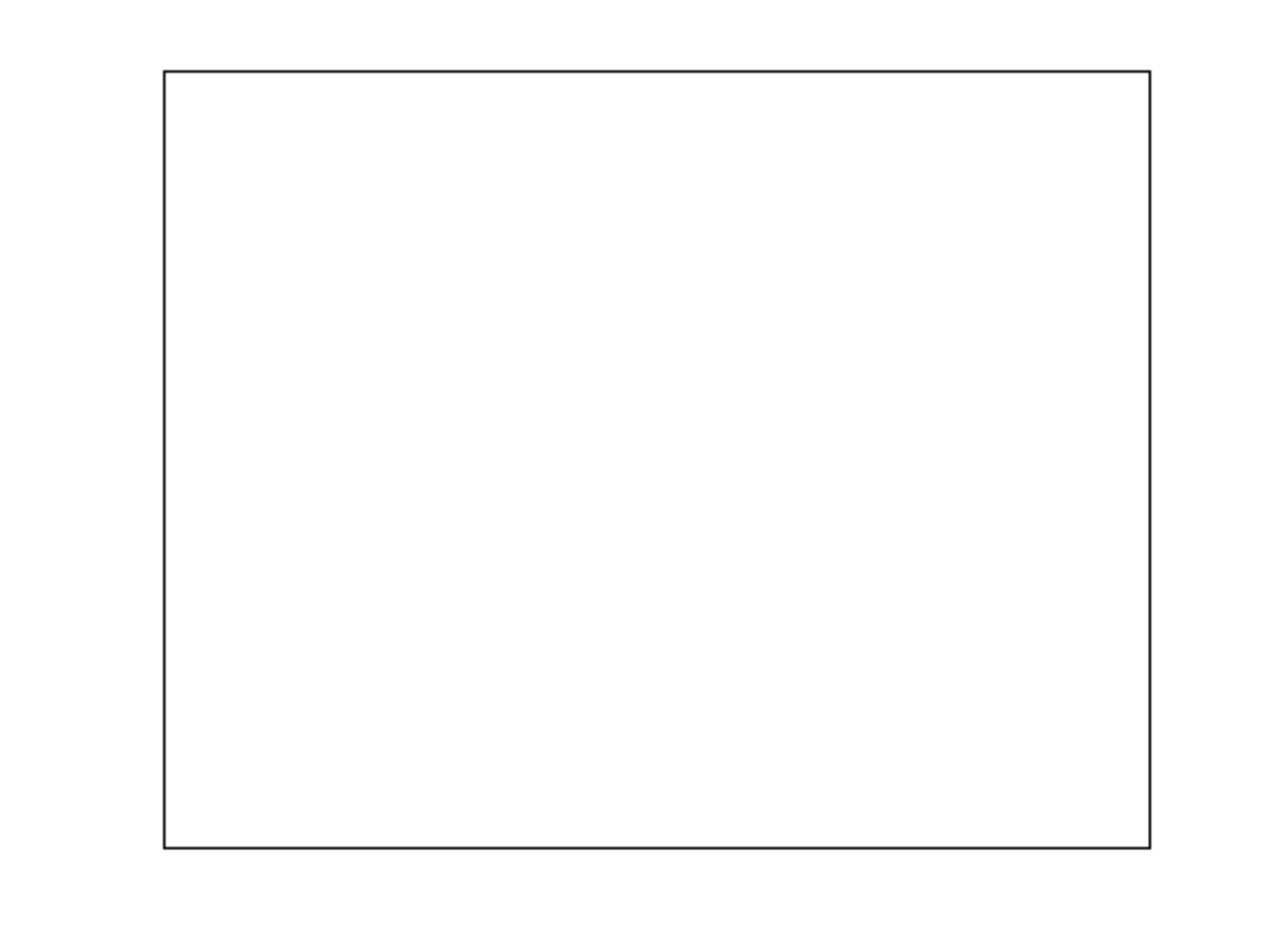}}
	\end{mdframed} 
	\caption{Recovering a $2000\times2000$ rank-8 matrix from $1.75\%$ of its values.
	\label{fig:mat_recovery_high}} 
	\end{minipage}
\end{figure}

\subsection{Denoising}

We will begin by considering in greater depth a specific statistical 
estimation problem, known as `denoising.' One the one hand, 
denoising is interesting, since it is a very common signal processing
task:  In essence, it seeks restore a signal which has been corrupted by some random
process, for instance additive white-noise. On the other, it will allow us to introduce some basic
concepts that will play an important role throughout these
lectures. Finally, recent research by \cite{donoho2013accurate} 
has unveiled a deep and somewhat
surprising connection between denoising and the rapidly developing
field of compressed sensing.
\index{Compressed sensing}

To formally define the problem, we assume that the signal to be
estimated is a function $t\mapsto f(t)$
Without loss of generality, we will
restrict the domain of $f(t)$, $f: \left[0,1 \right] \rightarrow \real$. 
We measure $n$ uniformly-spaced samples over the domain of $f$,
\begin{equation}
	y_i = f\left( i/n \right) + w_i~,
\end{equation}
where $i \in \left\{ 1, 2, \dots, n\right\}$ is the sample index,
and $w_i \sim \normal{0}{\sigma^2}$ is the additive noise term.
Each of $y_1,\dots y_n$ is a sample.

\begin{figure}[h]
\centering
	\includegraphics[width=0.5\figurewidth]{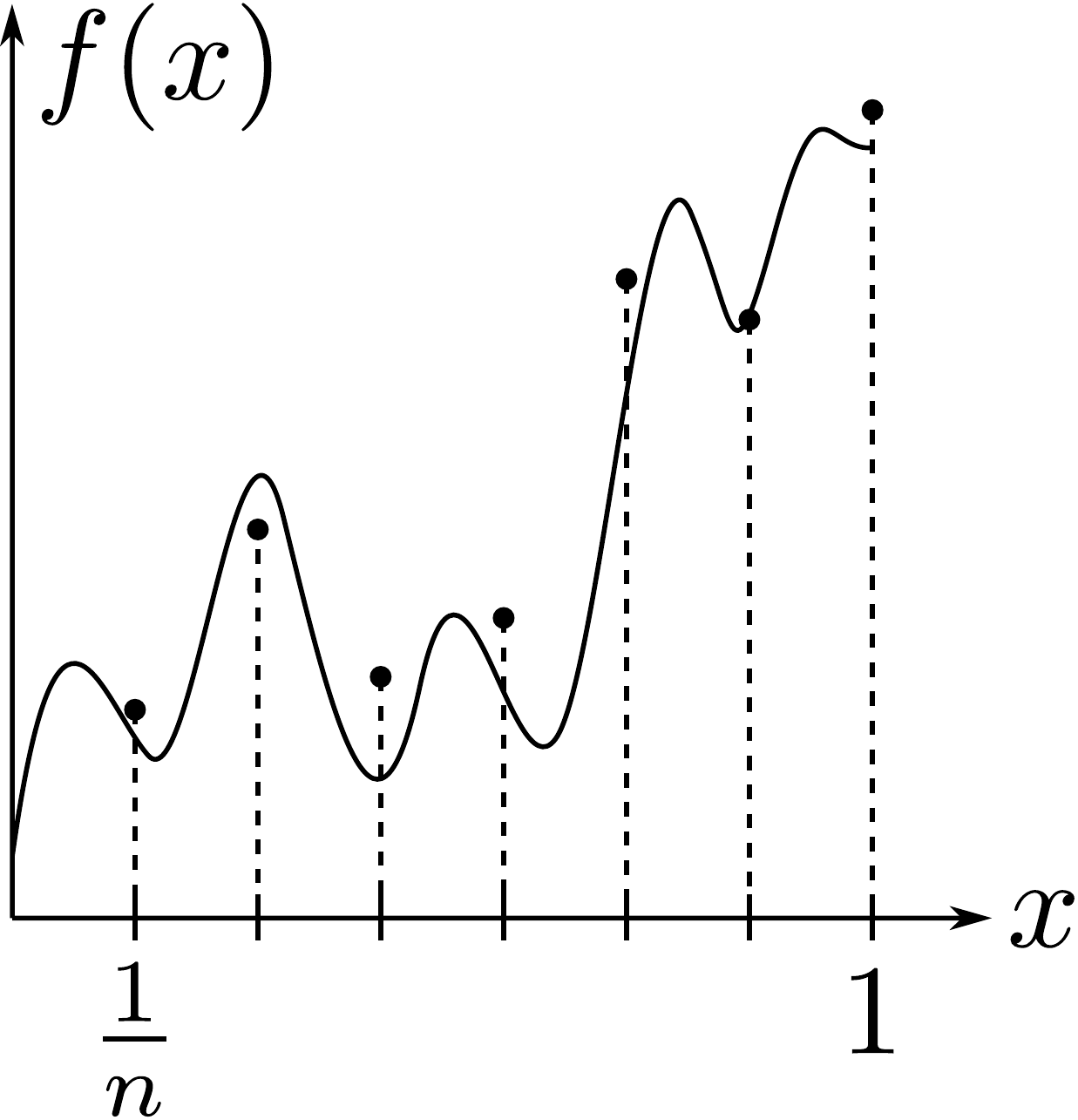}
\caption{Depiction of a discrete-time sampling of the continuous-time
function, $f(x)$. Notice that the additive noise term, $w_i$, prevents
exact knowledge of $f(i/n)$.}
\label{fig:noisy_sampling}
\end{figure}

For the denoising problem, we desire to calculate the original 
function from the noise-corrupted observables $y_i$. How might
we go about doing this? 

\subsection{Least Squares (LS) estimation}

The \textbf{Least Squares Method} dates back to Gauss and Legendre 
\citeyear{gauss1823theoria}.

A natural first idea is to parametrize the function $f$. For instance
we can tentatively assume that it is a degree $p-1$ polynomial
\begin{equation}
	f(t) = \lsum_{j=1}^{p} \theta_j~t^{j-1}.
\end{equation}
Each monomial is weighted according
to coefficient $\theta_j$ for $j \in \left\{ 1, 2, \dots, p \right\}$,
and we will collect these coefficients in a vector $\theta=(\theta_1,\theta_2,\dots,\theta_p)\in\real^p$.
Thus, the problem of recovering $f(t)$ boils down to the recovery of the
$p$ coefficients $\theta_j$ from the set of observables, $y_i$. We
therefore seek to find the set of coefficients which which generate
a function that most closely matches the observed samples.

It is natural to set this up as an optimization problem
(here RSS stands for `residual sum of squares')
\begin{align}
\htheta^{\LS}(y) 	&\equiv \argmin_{\theta}~~\text{RSS}(\theta),\label{eq:LSE1} \\
\text{RSS}(\theta)	&\equiv~\lsum_{i=1}^{n} 
		\left( y_i - \lsum_{j=1}^{p} \theta_j\left( \frac{i}{n} \right)^{j-1}
		\right)^2.
\end{align}
Fitting a low-degree polynomial to a dataset by least squares is a
very common practice, and the reader has probably tried this exercise
at least once. A moment reflection reveals that nothing is
special about the polynomials used in this procedure. In general, we can
consider a set  of functions $\{\varphi_1, \varphi_2, \dots, \varphi_p\}$,
where
\begin{equation}
\varphi_j: 
	[0,1] \rightarrow \real .
\end{equation}

Of course, the quality of our estimate depends on how well the
functions $\{\varphi_j\}$ capture the behavior of the signal $f$. 
Assuming that $f$ can be represented as a linear combination of these
functions, we can rewrite our model as
\begin{align}
y_i &=  \lsum_{j=1}^{p} \theta_{0j} \varphi_j\left(i/n\right)+ w_i.
\end{align}
Equivalently, if we define $\varphi:[0,1]\to\real^p$ by letting
$\varphi(x) = (\varphi_1(x),\varphi_2(x),\dots,\varphi_p(x))$, 
$\theta_0 = (\theta_{0,1},\theta_{0,2},\dots,\theta_{0,p})$,
and denoting by $\<a,b\> \equiv\sum_{i=1}^m a_ib_i$ the usual scalar
product in $\real^m$, we have
\begin{align}
y_i &= \< \theta_0, \varphi(i/n)\> + w_i. 
\end{align}

Before continuing further, it is convenient to pass to matrix
notation. Let us define a matrix 
$\bX \left( X_{ij}\right) \in \real^{n \times p}$
whose entry $i,j$ is given by
\begin{align}
X_{ij} &= \varphi_j \left( \frac{i}{n} \right),
\end{align}
Using this notation, and letting $y= (y_1,y_2,\dots,y_n)$,
$w = (w_1,w_2,\dots,w_n)$,  our model reads
\begin{align}
\label{eq:noisy_measurements}
y = X \theta_0 + w \, ,
\end{align}
$w \sim \normal{0}{\sigma^2 \id_n}$ (here and below $\id_n$ denotes
the identity matrix in $n$ dimensions: the subscript will be dropped if
clear from the context).

From \eqref{eq:noisy_measurements}, we see that vector of observations
$y$ is  approximated as a linear combination of the columns of
$\bX$, each columns corresponding to one of the functions $\varphi_1,
\varphi_2, \dots, \varphi_p $, evaluated on the sampling points.

This is a prototype of a very general idea in statistical learning,
data mining and signal processing. Each data point $x$ (or each point in a
complicated space, e.g. a space of images) is represented by a vector in $\real^p$. This
vector  is constructed by evaluating $p$ functions at $x$ hence
yielding the vector
$(\varphi_1(x), \varphi_2(x),\dots,\varphi_p(x))$. Of course, the
choice suitable functions $\{\varphi_j\}$ is very important and  domain-specific.

The functions $\{\varphi_j\}$ (or --correspondingly-- the columns of
the matrix $\bX$) have a variety of names. They  are known as
``covariates" and ``predictors''
in statistics, as ``features" in the context of machine learning
and pattern recognition. The set of features $\{\varphi_j\}$ 
is sometimes called a ``dictionary,'' and the matrix $\bX$ is also referred to as the
``design matrix.''
 Finding an appropriate
set of features, i.e. ``featurizing", is a problem of its own.
The observed $y_i$ are commonly 
referred to as the ``responses" or ``labels" within statistics and
machine-learning, respectively. The act of finding the true set of
coefficients $\theta_0$ is known as both ``regression" and 
``supervised learning".

So, how do we calculate the coefficients $\theta_0$ from $y$? Going
back to least squares estimation, we desire to find a set of coefficients,
$\htheta$ which best match our observations. Specifically, in matrix
notation \eqref{eq:LSE1} reads
\begin{equation}
	\htheta^{\LS}= \argmin_{\theta \in \real^p}~~\mathcal{L}(\theta),
\end{equation}
where 
\begin{align}
	\mathcal{L}(\theta) &= \frac{1}{2n} \| y - \bX \theta \|_2^2,\nonumber \\
	&= \frac{1}{2n} \lsum_{i=1}^{n}\left( y_i - \left<x_i, \theta \right>\right)^2,
\end{align}
with $x_i$ the $i$-th row of $\bX$. Here and below $\|a\|_2$ denotes
the $\ell_2$-norm of vector $a$: $\|a \|_2^2 =\lsum_{i} a_i^ 2$.\index{Norm $\ell_2$}
The minimizer can be found by noting that
\begin{align}
	\nabla\mathcal{L} (\theta) &= -\frac{1}{n} \bX^\sT (y - \bX \theta), \\
	\therefore~~~~\htheta^{\LS} &= \left( \bX^\sT \bX \right)^{-1} \bX^\sT y .  \label{eq:ls_estimator}	
\end{align}

Looking at \eqref{eq:ls_estimator}, we note that an important role is
played by the sample covariance matrix
\begin{equation}
	\hSigma = \frac{1}{n} \bX^T \bX .
\end{equation}
This is the matrix of correlations of the predictors
$\{\varphi_1,\dots,\varphi_p\}$.
The most immediate remark  is that, for $\htheta^{\LS}$ to be well defined,
$\hSigma$ needs to be invertible, which is equivalent to require
$\rank{\bX} = p$. This of course can only happen if the number of
parameter is no larger than the number of observations: $n\le p$.
Of course, if $\hSigma$ is invertible but is nearly-singular, then
$\htheta$ will be very unstable and hence a poor estimator.
A natural way to quantify the `goodness' of $\hSigma$ is through its
condition number $\kappa(\hSigma)$, that is the ratio of its largest
to its smallest  eigenvalue: $\kappa(\hSigma) = \lambda_{\rm
  max}(\hSigma)/\lambda_{\rm min}(\hSigma)$.  From this point of view,
an optimal design has minimal condition number $\kappa(\hSigma) = 1$,
which corresponds to $\bX$ to be proportional to an orthogonal matrix.
In this case $\bX$ is called an `orthogonal design' and we shall fix
normalizations by assuming $\hSigma = (\bX^{\sT}\bX/n) = \id_p$

In functional terms, we see that the LS estimator is calculated according to
the correlations between $y$ and the predictors,
\begin{align}
	\widehat{\Sigma}_{jl} &= \frac{1}{n} \lsum_{i=1}^{n} 
		\varphi_j \left(i/n\right) \varphi_l \left(i/n\right), \\
	\therefore~~~~ \htheta^{\LS}_l&= \lsum_{j=1}^p\left( \widehat{\Sigma}^{-1} \right)_{lj}
		\left( \frac{1}{n} \lsum_{i=1}^{n} \varphi_j (i/n ) y_i \right).
\end{align}

\subsection{Evaluating the estimator}

Now that we calculated the LS estimator for our problem, a natural 
question arises: is this indeed the best estimator we could use? In
order to answer
this question, we need a way of comparing one estimator to another. 

This is normally done by considering the \emph{risk function}
associated with the estimator. 
If the model depends on a set of parameters $\theta\in \real^p$,
the risk function is a function $R:\real^p\to\reals$, defined by
\begin{align}
	\label{eq:param_risk}
	R(\theta) &= \expval{\| \htheta(y) - \theta\|_2^2}, \nonumber \\
	&= \lsum_{j=1}^{p} \expval{(\hat{\theta}_j(y) - \theta_j)^2  }.
\end{align}
Here expectation is taken with respect to $y$, distributed according
to the model \eqref{eq:noisy_measurements} with $\theta_0=\theta$.
Note that the $\ell_2$-distance  is used to measure the estimation error.\index{Norm $\ell_2$}

Other measures (called `loss functions') could  be used as well, but
we will focus on this for the sake of concreteness. We can also calculate
risk over the function space and not just over the parameter space.
This is also  known as the `prediction error':
\begin{align}
	\label{eq:functional_risk}
	\Rp(\theta) &= \frac{1}{n} \lsum_{i=1}^{n} \expval{\left( 
						\widehat{f}\left(i/n\right) 
						- f\left(i/n \right) \right)^2}, \nonumber\\
				&= \frac{1}{n} \lsum_{i=1}^{n} \expval{\left[ 
						\lsum_{j=1}^{p} \bX_{ij} \left( \hat{\theta}_j - \theta_j \right)
						\right]^2}, \nonumber\\
				&= \frac{1}{n} \expval{\|\bX \left( \hat{\theta} - \theta \right)\|_2^2}.
\end{align}
In particular, for $\bX$ an orthogonal design, 
$\Rp(\theta) = c\, R(\theta)$. 

Let us apply this definition of risk to the LS estimator, 
$\hat{\theta}^{\LS}$. Returning to the signal sampling model,
\begin{align}	y &= \bX \theta_0 + w. \\
	\therefore~~~~\htheta^{\LS} &= \left( \bX^{\sT} \bX \right)^{-1} \bX^{\sT} y, \nonumber\\
	&= \theta_0 + (\bX^{\sT} \bX)^{-1} \bX^{\sT} w,
\end{align}
which shows that the LS estimator will return the true parameters, $\theta_0$,
perturbed by some amount due to noise. Now, we will calculate the risk
function
\begin{align}
	R(\theta) &= \expval{\| \htheta^{\LS}(y) - \theta_0 \|_2^2}, \nonumber\\
	&= \expval{\| \left( \bX^{\sT} \bX \right)^{-1} \bX^{\sT} w \|_2^2}, \nonumber\\
	&= \expval{w^{\sT} \bX \left( \bX^{\sT} \bX \right)^{-2} \bX^{\sT} w}, \nonumber\\
	&= \sigma^2 \trace{ \bX \left( \bX^{\sT} \bX \right)^{-2} \bX^{\sT} }, \nonumber\\
	&= \sigma^2 \trace{ \left( \bX^{\sT} \bX \right)^{-1}}, \nonumber\\
	&= \frac{\sigma^2 p}{n} \left[ \frac{ \trace{\hSigma^{-1}} }{p} \right],
\end{align}
where we add the $p$ term to the final result because we expect that 
$\frac{1}{p} \trace{\hSigma^{-1}}$ to be on the order
one, under the assumption of near-orthonormal predictors. 

Tho further illustrate this point, let us consider the case in which
the functions $\{\varphi_j\}$ are orthonormal (more precisely, they
are an orthonormal set in $L^2([0,1])$). This means that 
\begin{equation}
	\int_0^1 \varphi_i(x) \varphi_j(x) \, \de x = \delta_{ij}\, .
\end{equation}
where $\delta{ij}$ is $1$ when $i=j$ and $0$ for all $i\neq j$.
For $n$ large, this implies
\begin{equation}
	\hSigma_{jl} = \frac{1}{n} \lsum_{i=1}^{n} 
						\varphi_j \left( i/n \right)
						\varphi_l \left( i/n \right) 
						\approx \delta_{jl},	
\end{equation}
where we assumed that the sum can be approximated by an integral.
In other words, if the functions $\{\varphi_j\}$ are orthonormal, the
design is nearly orthogonal, and this approximation gets better as the
number of samples increases. Thus, in such good conditions,
$\trace{\widehat{\Sigma}^{-1}} \approx p$. Under these conditions, we can simplify
the risk function for the LS estimator in the case of orthonormal or near-orthonormal
predictors to be,
\begin{equation}
	\label{eq:ls_risk}
	R(\theta) \approx \frac{p \sigma^2}{n}.
\end{equation}

This result has several interesting properties:
\begin{itemize}
\item The risk is proportional to the noise variance. This makes
  sense: the larger is  noise, the worst we can estimate the function.
\item It is inversely proportional to the number of samples $n$: the
  larger is the number of observations, the better we can estimate
  $f$.
\item The risk is proportional to the number of parameters $p$.  
This fact can be interpreted as an over-fitting phenomenon
If we choose a large $p$, then our estimator will be more sensitive to
noise. Conversely, we are effectively searching the right parameters in a
higher-dimensional space, and a larger number of samples is required
to determine it  the same accuracy.
\item The risk $R(\theta)$ is independent of $\theta$. This is closely related to the
e linearity of the LS estimator.
\end{itemize}

At this point, two questions arise naturally:
\begin{enumerate}
\item[{\sf Q1.}] Is this the best that we can do? Is it possible to 
use a different estimator and decrease risk?
\item[{\sf Q2.}] What happens if the function to be estimated is function is not 
\textit{exactly} given by a linear combinations of the predictors?
Indeed in general, we cannot expect to have a perfect model for the
signal, and any set of predictors is only approximate:
\begin{equation}
	f(t) \neq \sum_{j=1}^p \theta_j \varphi_j(t).
\end{equation}
\end{enumerate}

To discuss these two issues, let us first change the notation for the
risk function, by making explicit the dependence on the estimator $\htheta$:
\begin{align}
	R(\theta;\htheta)\quad\text{where}\quad &\theta \in \real^{p},\\
	& \hat{\theta}: \real^n \rightarrow \real^p. \notag
\end{align}
Note that estimators, $\htheta$, are functions of $y\in\reals^n$.
For  two estimators, $\htheta^1$ and $\htheta^2$, we can compare 
$R(\theta;\htheta^1)$ and $R(\theta;\htheta^2)$. 
This leads to the next, crucial question: how do we compare these two
curves? For instance, in Figure \ref{fig:compare_risk} we sketch two
cartoon risk functions $R(\theta;\htheta^1)$ and $R(\theta;\htheta^2)$. 
Which one is the best one? The way this question is answered has
important consequences.

Note that naively, one could hope to find an estimator that is
\emph{simultaneously} the best at all points $\theta$.
Letting this ideal estimator be denoted by
$\htheta^{\text{opt}}$, we would get
\begin{align}
	R(\theta;\htheta^{\text{opt}}) \leq R(\theta;\htheta) 
	\quad \forall~\theta, \htheta.
\end{align}
However, assuming the existence of such an ideal estimator 
leads to a contradiction. To see this, we will let the predictors be,
for simplicity
\begin{equation}
	\bX = \sqrt{n}\id_{n} \, , \quad \quad p = n,
\end{equation}
which means our regression problem is now
\begin{equation}
	y = \theta + \frac{w}{\sqrt{n}}, \quad \quad w \sim \mathcal{N}(0,\sigma^2 \id_{n}).
\end{equation}
Note that in this case the LS estimator is simply $\htheta^{\LS}(y) =
y$. 

Next, fix $\xi\in\reals^p$, and consider the oblivious estimator that
always returns $\xi$:
\begin{equation}
	\htheta^{\xi}(y) = \xi\, .
\end{equation}
This has the  risk function
\begin{equation}
	R(\theta;\htheta^{\xi}) = \|\xi - \theta\|_2^2.
\end{equation}
If an `ideal' estimator $\htheta^{\text{opt}}$ as above existed, it
would beat $\htheta^{\xi}$, which implies in particular
\begin{align}
R(\xi;\htheta^{\text{opt}})  =0\, .
\end{align}
Since $\xi$ is arbitrary, this would imply that the ideal estimator
has risk everywhere equal to $0$, i..e. always reconstruct the true
signal perfectly, independently of the noise. This is of course impossible.

 One approach
would be to evaluate the Bayes risk, which would compute the expected value 
of each risk curve dependent upon the prior distribution of the parameters,
$\text{Pr}[\theta]$. However, it is not clear in every case how one might 
determine this prior, and its choice can completely skew the comparison between
$\hat{\theta}^1$ and $\hat{\theta}^2$.

\begin{figure}
\centering
	\includegraphics[width=\figurewidth]{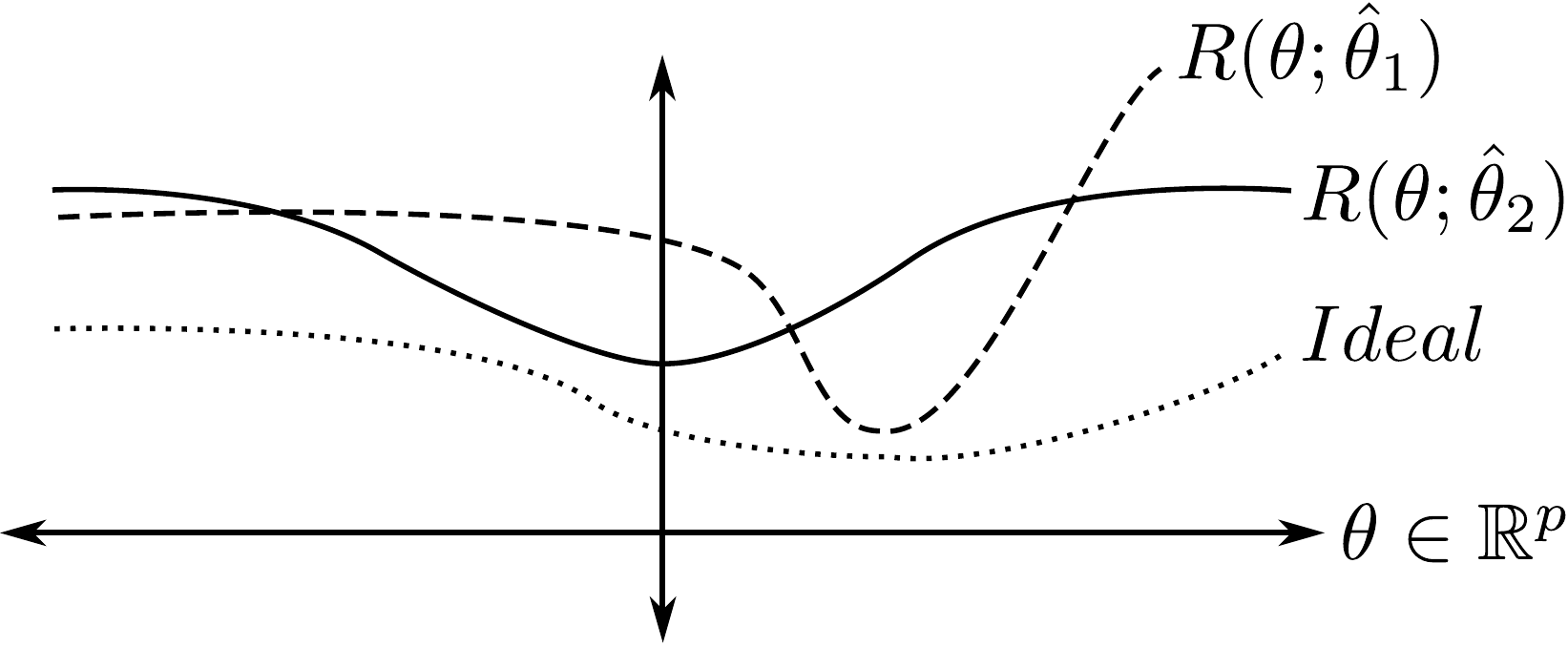}
\caption{Comparing the risk functions of two different estimators, $\hat{\theta}^1$,
$\hat{\theta}^2$ over the space of possible parameters, $\theta$. Also shown is a risk function for some estimator which is \textit{ideal} in the sense that it is below both 
both of the known estimators for all $\theta$.}
\label{fig:compare_risk}
\end{figure}

\begin{figure}
\centering
	\includegraphics[width=\figurewidth]{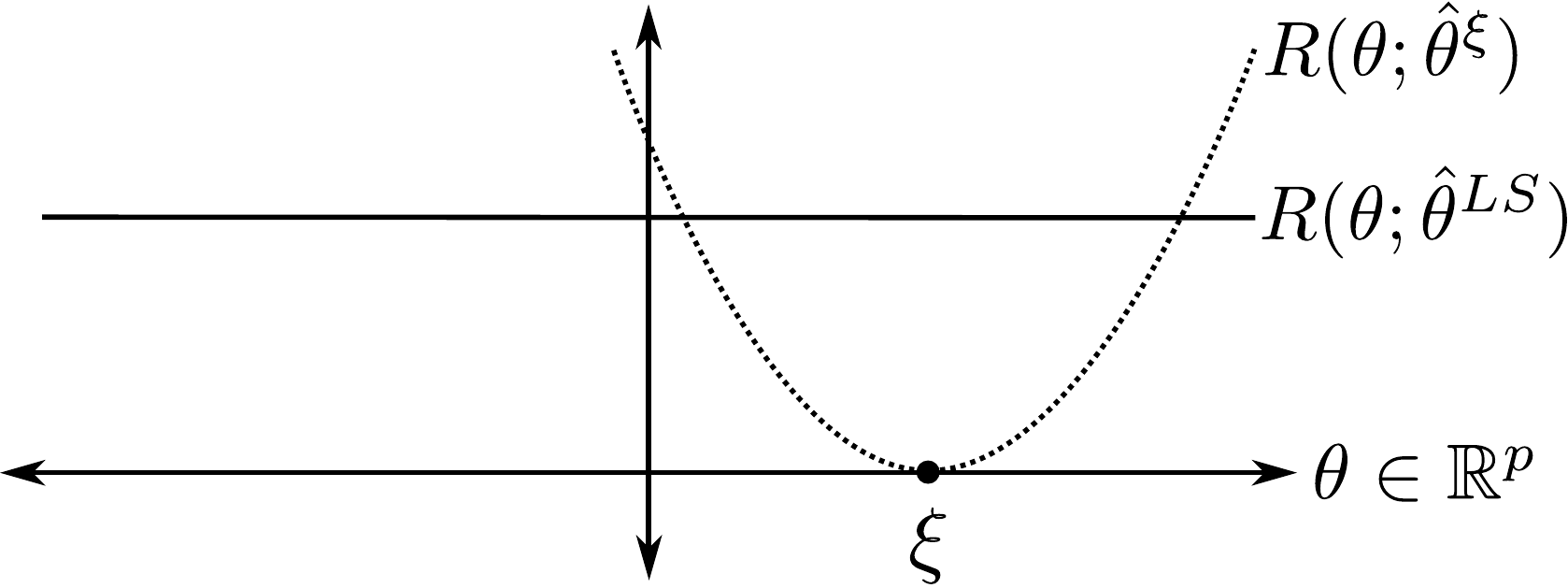}
\caption{Any arbitrary trivial estimator can minimize the point-wise
risk at point $\xi$.}
\label{fig:trivial_risk}
\end{figure}

 One approach to overcome this problem is to evaluate for each risk
 function the corresponding `Bayes risk.'  This amounts to averaging
 $R(\theta;\htheta)$ over $\theta$,  using a certain  prior distribution of the parameters
$\text{P}(\theta)$. Namely
\begin{align}
R_B(\P;\htheta) \equiv\int~R(\theta;\htheta) \; \P(\de\theta)  .
\end{align}
However, it is not clear in every case how one might 
determine this prior. Further, the choice of $\text{P}(\theta)$ can completely skew the comparison between
$\htheta^1$ and $\htheta^2$. If $\text{P}(\theta)$  is concentrated in
a region in which --say-- $\htheta^1$ is superior to $\htheta^2$, then $\htheta^1$
will obviously win the comparison, and viceversa.

In the next section we shall discuss the minimax  approach to
comparing estimators.

\section{Nonlinear denoising and sparsity}
\subsection{Minimax risk}

The previous lecture discussed estimating a set of parameters, $\theta$,
given the linear model
\begin{equation}
	y = \bX \theta + w.
\end{equation}
In this discussion, we stated that there exists no estimator which
dominates all other possible estimators in terms of risk,
$R(\theta;\htheta)$.
Still the question remains of  how to compare two different estimators
$\htheta^1$, $\htheta^2$.

\begin{figure}
\centering
	\includegraphics[width=\figurewidth]{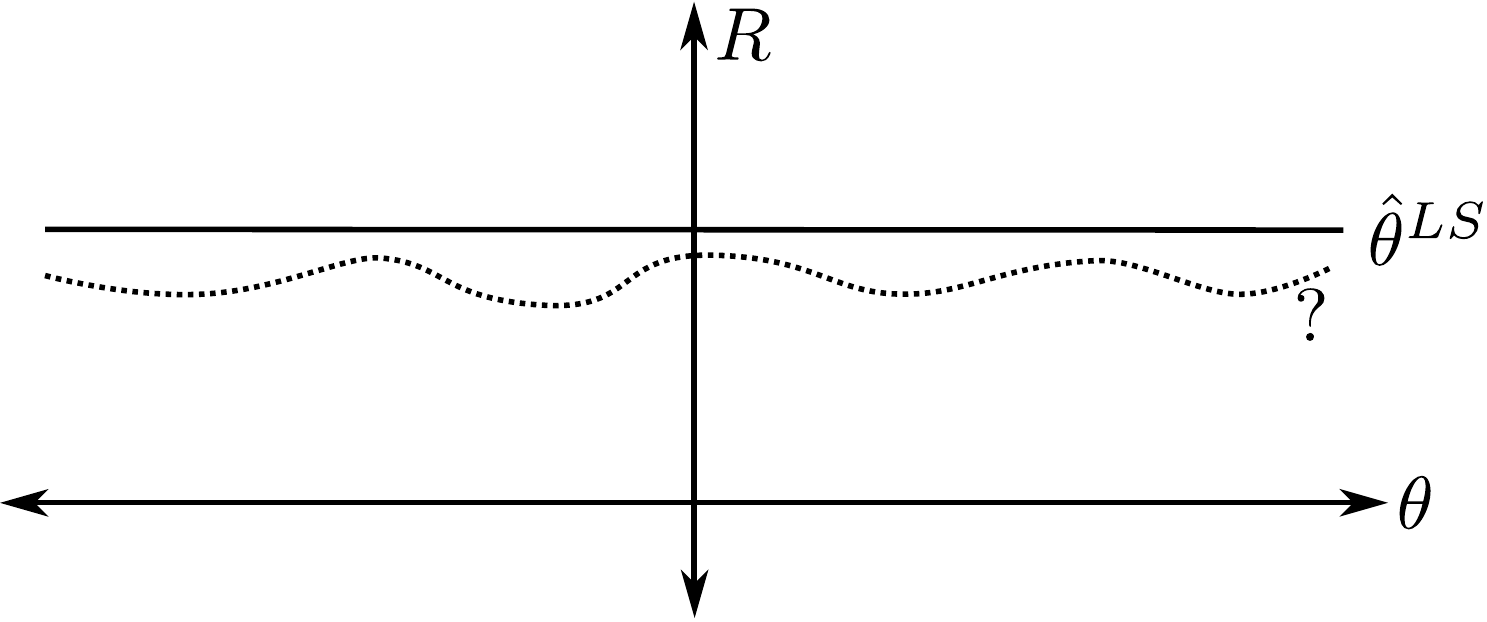}
\caption{The risk of an unknown estimator which dominates $\hat{\theta}^{LS}$ for
all $\theta$.}
\end{figure}

A fruitful approach to this question is to consider the worst case risk
over some region $\Omega \subseteq \real^{p}$.
Formally, we define the \emph{minimax risk} of $\htheta$ over $\Omega$ as
\begin{equation}
	R_*(\Omega;\htheta) = \sup_{\theta \in \Omega}~R(\theta;\htheta).
\end{equation}
Such a definition of risk is useful if we have some knowledge \textit{a priori}
about the region in which the true parameters live. The minimax risk allows us to
compare the maximal risk of a given estimator over the set $\Omega$ to find an estimator
with minimal worst case  risk. The minimax risk is also 
connected to the Bayes risk which we defined earlier,
\begin{align}
R_B(\P;\htheta)& \equiv\int~R(\theta;\htheta) \; \P(\de\theta)
\, ,\\
R_*(\Omega;\htheta) &= 
		\sup_{\text{supp}(\P) \subseteq \Omega} R_B (\P;\htheta).
\end{align}
With this definition of minimax risk, it is easy  do compute
the minimax risk of least squares
\begin{equation}
	R_*(\real^{p};\htheta^{\LS}) = \frac{p\sigma^2}{n} \left[
          \frac{\text{Tr}(\hat{\Sigma}^{-1})}{p} \right], .
\end{equation}

The least squares estimator is optimal in minimax sense.
\begin{theorem}
The least squares estimator is minimax optimal over $\real^p$.
Namely, any estimator $\htheta$ has minimax risk $R_*(\real^{p};\htheta) \ge R_*(\real^{p};\htheta^{\LS})$. 
\end{theorem}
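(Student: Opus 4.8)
The plan is to bound the minimax risk from below by a Bayes risk and then optimize over a convenient family of priors. The inequality $R_*(\real^p;\htheta)\ge R_B(\P;\htheta)$ holds for every prior $\P$ with $\text{supp}(\P)\subseteq\real^p$ (this is immediate from the identity $R_*(\Omega;\htheta)=\sup_{\text{supp}(\P)\subseteq\Omega}R_B(\P;\htheta)$ recorded above). Moreover, for quadratic loss the Bayes risk $R_B(\P;\cdot)$ is minimized over all estimators by the posterior mean $\htheta^{\P}(y)=\expval{\theta\mid y}$, since conditional expectation minimizes mean squared error pointwise in $y$. Hence, for every prior $\P$,
\[
R_*(\real^p;\htheta)\ \ge\ R_B(\P;\htheta^{\P})\, ,
\]
and it suffices to produce priors whose Bayes risk approaches $R_*(\real^p;\htheta^{\LS})=\sigma^2\,\trace{(\bX^{\sT}\bX)^{-1}}$.

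First I would take the Gaussian prior $\P_\tau=\normal{0}{\tau^2\id_p}$. Under $\P_\tau$ together with $y=\bX\theta+w$, $w\sim\normal{0}{\sigma^2\id_n}$, the vector $(\theta,y)$ is jointly Gaussian, so the posterior of $\theta$ given $y$ is Gaussian with covariance
\[
\Sigma_{\rm post}=\Bigl(\tfrac{1}{\sigma^2}\bX^{\sT}\bX+\tfrac{1}{\tau^2}\id_p\Bigr)^{-1},
\]
which does not depend on $y$. The Bayes estimator is the corresponding (affine) posterior mean, and its Bayes risk is
\[
R_B(\P_\tau;\htheta^{\P_\tau})=\expval{\|\htheta^{\P_\tau}(y)-\theta\|_2^2}=\trace{\Sigma_{\rm post}}\, ,
\]
the expectation being over $\theta\sim\P_\tau$ and $w$; this is a routine consequence of the Gaussian conditioning formulas.

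Finally I would send $\tau\to\infty$. Since $\htheta^{\LS}$ is assumed well-defined, $\bX$ has rank $p$, so $\bX^{\sT}\bX$ is invertible and $\tfrac{1}{\sigma^2}\bX^{\sT}\bX+\tfrac{1}{\tau^2}\id_p\to\tfrac{1}{\sigma^2}\bX^{\sT}\bX$ with inverse continuous near that limit; therefore
\[
\lim_{\tau\to\infty}\trace{\Sigma_{\rm post}}=\sigma^2\,\trace{(\bX^{\sT}\bX)^{-1}}=R_*(\real^p;\htheta^{\LS})\, .
\]
Combined with $R_*(\real^p;\htheta)\ge R_B(\P_\tau;\htheta^{\P_\tau})$ for every finite $\tau$, this gives $R_*(\real^p;\htheta)\ge R_*(\real^p;\htheta^{\LS})$ (the case $R_*(\real^p;\htheta)=\infty$ being trivial).

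I do not expect a serious obstacle here. The two facts one must be comfortable with are that the posterior mean is Bayes-optimal under $\ell_2$ loss and the explicit posterior-covariance formula for a Gaussian linear model; both are standard. The only point needing a little care is the passage to the limit: because $R_B(\P_\tau;\htheta^{\P_\tau})$ is a legitimate lower bound for the minimax risk for each fixed $\tau$, the bound survives taking $\tau\to\infty$, and no interchange of a limit with the supremum over $\theta$ (or with an expectation) is required.
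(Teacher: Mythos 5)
Your proposal follows essentially the same route as the paper: lower-bound the minimax risk by the Bayes risk under a centered Gaussian prior of variance $\tau^2$, compute the Bayes-optimal (posterior-mean) risk explicitly, and let $\tau\to\infty$. The paper, for simplicity, restricts to an orthogonal design $\hSigma=\id_p$ and records the per-coordinate Bayes risk $c^2\sigma^2/(c^2+\sigma^2)$ before sending $c\to\infty$, deferring the general case to a reference; your version carries out the same computation for a general full-rank $\bX$ via the posterior covariance $\Sigma_{\rm post}=(\sigma^{-2}\bX^{\sT}\bX+\tau^{-2}\id_p)^{-1}$, which is cleaner and avoids the special-case reduction. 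The argument is correct.
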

\begin{proof}
The proof of this result relies on the connection with Bayes risk.
Consider for the sake of simplicity the case of orthogonal designs,
$\hSigma = \id$.
It is not hard to show that, if $\P$ is gaussian with mean $0$ and
covariance $c^2\id_p$, then
\begin{align}
\inf_{\htheta}R_B(\P;\htheta)& = \frac{c^2\sigma^2}{c^2+\sigma^2}\, .
\end{align}
Hence, for any estimator $\htheta$
\begin{align}
\sup_{\theta\in\reals^p}R(\theta;\htheta) \ge R_B(\P;\htheta) \ge
\frac{c^2\sigma^2}{c^2+\sigma^2}\, .
\end{align}
Since $c$ is arbitrary, we can let $c\to\infty$, whence
\begin{align}
R_*(\reals^p;\htheta) \ge \sigma^2\, .
\end{align}
A full treatment of a more general result can be found, for instance
in \cite[Chapter 7]{wasserman2006all}.
\end{proof}
A last caveat. One might suspect --on the grounds of the last
theorem-- that least squares estimation is optimal `everywhere' in
$\reals^p$. This was indeed common belief among statisticians until 
the surprising discovery of the `Stein phenomenon' in the early
sixties \cite{james1961estimation}.
In a nutshell, for $p\ge 3$ there exist estimators that have risk
$R(\theta;\htheta)<R(\theta;\htheta^{\LS})$
strictly for every $\theta\in\reals^p$! (The gap vanishes as
$\theta\to\infty$.)
We refer to \cite[Chapter 7]{wasserman2006all} for further background
on this.

\subsection{Approximation error and the bias-variance tradeoff}

Until now we have assumed that the unknown function$f(t)$, could be
exactly represented by the set of predictors, corresponding to columns
of $\bX$. How is our ability to estimate
the parameters set $\theta$, and thus $f(t)$, affected when this
assumption is violated?

In order to study this case, we assume that we are given an infinite
sequence of predictors $\{\varphi_j\}_{j\ge 1}$, and use only the
first $J$ to estimate $f$.
For any fixed $J$, $f(t)$ can be approximated as a linear combination
of the 
first $J$ predictors, plus an
error term which is dependent upon $J$
\begin{equation}
	f(t) = \lsum_{j=1}^{J} \theta_{j} \varphi_j(t) + \Delta_J(t).
\end{equation}
For a complete set $\{\varphi_j\}$, we can ensure
$\lim_{J\to\infty}\|\Delta_J\|=0$ in a suitable norm. 
This can be formalized by assuming $\{\varphi_j\}_{j\ge 1}$ to be a
orthonormal basis in the  Hilbert space $L^2([0,1])$ and the above to
be the orthonormal decomposition. 
In particular, the remainder will be orthogonal
to the expansion,
\begin{align}
	\int_0^1~\Delta_J(t) \varphi_j(t) \, \de t = 0 \quad \forall
        j\in\{1,\dots,J\}\, .\
\end{align}
Alternatively,  we can require orthogonality with respect to the
sampled points (the resulting expansions are very similar for $n$ large)
\begin{align}
	\frac{1}{n} \lsum_{i=1}^{n} \Delta_J\left( i/n\right)
        \varphi_j\left( i/n\right) = 0 \quad \forall j\in\{1,2,\dots,J\}
     .
\end{align}

With the remainder $\Delta_J$, our regression model becomes
\begin{equation}
	y = \bX \theta_0 + \Delta_J + w,
\end{equation}
where $\bX \in \real^{n \times J}$, $w \sim \normal{0}{\sigma^2 I_{n}}$, and $\bX^T \Delta_J = 0$. 
Recall, the LS estimator is given by
\begin{align}
	\htheta &= (\bX^T \bX)^{-1} \bX^T y,\nonumber\\
	 &= \theta_0 + (\bX^T \bX)^{-1} \bX^T w.
\end{align}
We can compute the prediction risk as follows
\begin{align}
	R_p(f) &=~\frac{1}{n}~\expval{\lsum_{i=1}^n \Big(\hat{f}\left(i/n\right) - f\left( i/n \right) \Big)^2}, \notag\\
	&=~\frac{1}{n}~\expval{ \| \bX \htheta - \bX \theta_0 - \Delta_J\|_2^2},\notag\\
	&=~\frac{1}{n}~\expval{ \| \bX(\theta - \htheta_0)\|_2^2} + \frac{1}{n} \|\Delta_J \|_2^2, \notag\\
	&=~\frac{1}{n}~\expval{ \| \bX (\bX^T \bX)^{-1} \bX^T w \|_2^2} + \frac{1}{n} \|\Delta_J \|_2^2, \notag\\
	&=~\frac{1}{n}~\| \Delta_J \|_2^2 + \frac{\sigma^2}{n}\trace{\bX (\bX^T \bX)^{-1} \bX^T}.
\end{align}
Finally, note that  $\bX (\bX^T \bX)^{-1} \bX^T\in\reals^{n\times n}$
is the orthogonal projector o the space spanned by the columns of $\bX$.
Hence its trace is always equal to $J$.This gives us the final form of the
estimation risk at $f(\,\cdot\,)$, as a function of $J$,
\begin{align}
	R_p(f) = \frac{\|\Delta_J\|_2^2}{n} + J \frac{\sigma^2}{n}.
	\label{eq:var_bias}
\end{align}
In other words, the estimation risk associated with $f(\,\cdot\, )$ is a sum of
two terms, 
both of which are dependent upon the choice of $J$:
\begin{itemize}
\item The first term is associated with the approximation error
  induced by the choice of the predictor $\{\varphi_j\}$. It is
  independent of the noise variance $\sigma^2$, decreases with $J$.

We interpret it therefore a \emph{bias} term. 
\item The second term depends on the noise level, and is related to the
  fluctuations that the noise induces in $\htheta$. It increases with the
  number of predictors $J$, as the fi becomes more unstable.

We interpret it therefore as a \emph{variance} term.
\end{itemize}
Therefore, the optimal number of predictors to use, $J^*$ is the one which minimizes the risk
by striking a balance between bias and variance.
In other words,  we want to
find the optimal point between under- and over-fitting the model. 
Note that how we choose the predictors themselves determines the rate at which the bias term
goes to zero as $J$ increases. 

\begin{figure}
\centering
	\includegraphics[width=\figurewidth]{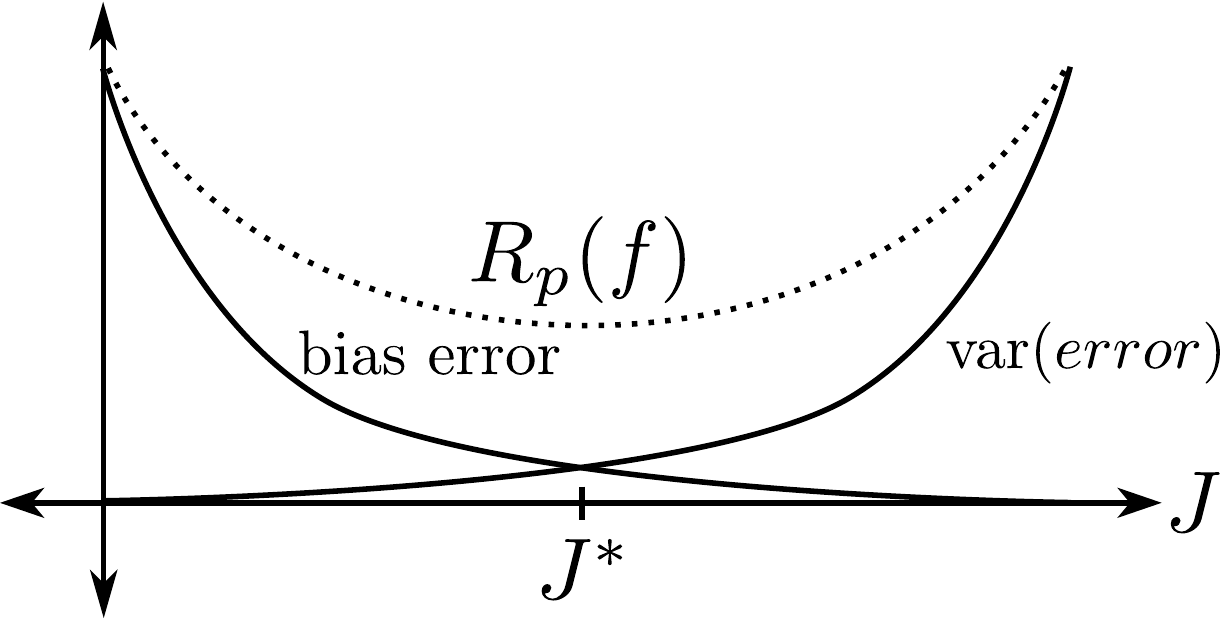}
\caption{The effect of bias and variance on estimation risk.}
\end{figure}

\subsubsection{Example: The Fourier basis}
In this example we select our set of predictors to be a Fourier basis 
\begin{equation}
	\varphi_j(t) = \sqrt{2} \cos((j-1)\pi t),
\end{equation}
for $t \in [0,1]$. If $f(\,\cdot\,)$ is square-integrable, then it can
be represented as an infinite series (converging in $L^2([0,1])$,
i.e. in mean square error)
\begin{equation}
 	f(t) = \lsum_{j=1}^{\infty} \theta_{0j} \varphi_j(t).
 \end{equation} 
 However, if only $J$ sinusoids are used, the remainder, $\Delta_J$ is
 \begin{equation}
 	\Delta_J(t) = f(t) - \lsum_{j=1}^{J} \theta_{0j} \varphi_j(t).
 \end{equation}
And, finally, the squared norm of $\Delta_J$ is, by orthogonality
\begin{align}
	\| \Delta_J \|^2_2 & = \sum_{i=1}^n \Delta_J(i/n)^2,\nonumber \\
& \approx n\int_{0}^2\Delta_J(t)^2 = n\int \lsum_{j=J+1}^{\infty} \theta_{0j}^2.
	\label{eq:fourier_remainder}
\end{align}
Here we replaced the sum by an integral, an approximation that is
accurate for $n$ large.
 
Note that the decay of the bias term mirrors the decay of the Fourier
coefficients of $f$, by  \eqref{eq:fourier_remainder}. In particular,
if  $f$ is smooth, its Fourier coefficients decay faster, and hence 
 the bias decays rapidly with $J$. In this case, the Fourier basis is
 a good set of features/predictors (a good dictionary) for our problem.

\begin{figure}
\centering
	\includegraphics[width=\figurewidth]{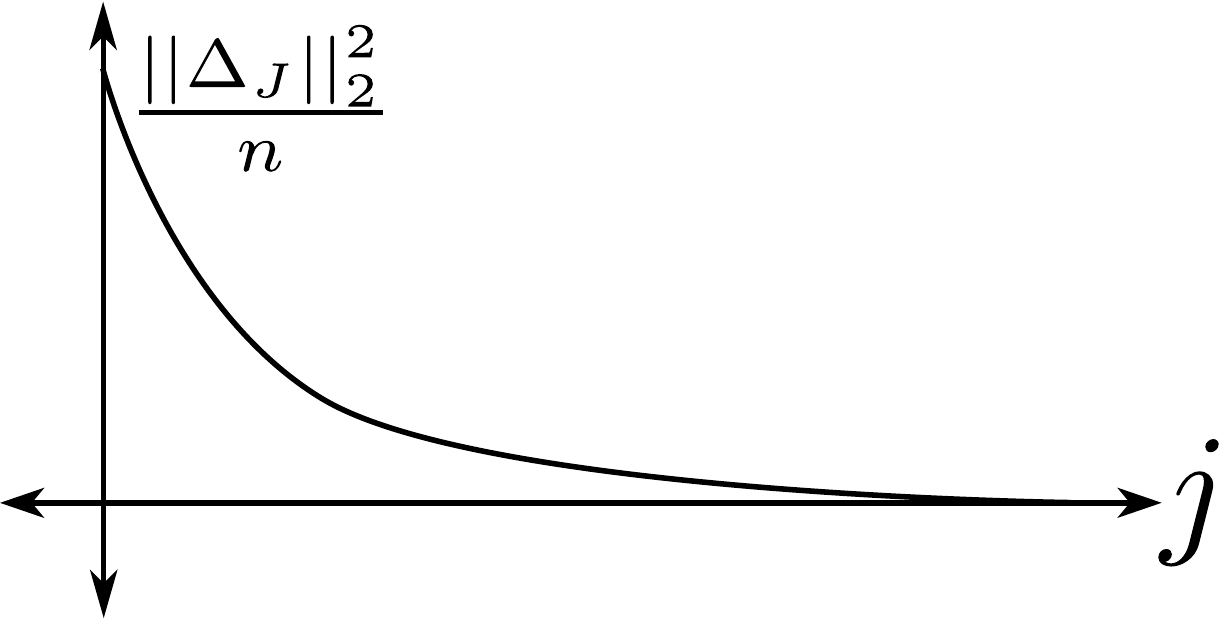}
\caption{Depiction of the rate of decay of the bias term as a function of $J$.}
\end{figure}

We now look at the case of Fourier predictors for a specific class od
smooth functions, namely functions whose second derivative is square
integrable.
Formally we define
\begin{align}
 W(C) \equiv	\left\{ \int_0^1 \left(  f^{\prime \prime} \left( t
     \right)\right)^2 \, \de t  \leq C^2 \right\} \, ,
\end{align}
and we will consider estimation over $\Omega = W(C)$. This space is
known in functional analysis as the 
`Sobolev ball of radius $C$ and order $2$.'

In terms of  Fourier coefficients,  this set of smooth functions can
be characterized as
\begin{align}
	\int_0^1 \left( \lsum_{j=1}^{\infty} \pi^2 (j-1)^2 \theta_{0j}
          \varphi_j(t) \right)^2 \de t \leq C
\end{align}
Or equivalently
\begin{align}
	\lsum_{j=1}^{\infty} \pi^4 (j-1)^4 \theta_{0j}^2 \leq C. \label{eq:bounded_second}
\end{align}
Now, in order for this to happen, we must have
\eqref{eq:bounded_second} is satisfied,
\begin{equation}
	\sum_{j=J+1}^{\infty}\theta_{0j}^2 \lesssim \frac{C^{\prime}}{J^4},
\end{equation}
and hence,
we can estimate a bound on the squared norm of the remainder term
\begin{equation}
	\| \Delta_J \|_2^2 \approx n \sum_{j = J+1}^{\infty} \theta_{0j}^2
	\lesssim \frac{nC'}{J^4}.
\end{equation}
Therefore, the prediction risk for the set of functions $\Omega = W(C)$ is upper 
bounded
\begin{equation}
	R_{p,*}(\Omega;\htheta)\lesssim \frac{C}{J^4} + \frac{\sigma^2 J}{n}.
\end{equation}
The optimum value of $J$ is achieved when the two terms are of the
same order, or by setting to $0$ the derivative with respect to $J$
\begin{align}
	&\frac{\partial\phantom{J} }{\partial J} \left\{ \frac{C}{J^4} + \frac{\sigma^2 J}{n}\right\} = 
	\frac{\sigma^2}{n} - \frac{4C}{J^5}, \\
	&~~~\therefore~~J_* \sim \left(  \frac{n}{\sigma^2} \right)^{1/5}.
\end{align}
Finally, with the optimal choice of $J$, we obtain the upper bound for the
prediction risk, in general,
\begin{equation}
	R_{p,*}(\Omega;\htheta) \lesssim \left( \frac{\sigma^2}{n}
        \right)^{4/5}\, .
\end{equation}
As in the standard parametric case, see  \eqref{eq:ls_risk}, the risk depends on the ratio of the noise
 variance to the number of samples. However the decay with the
        number of samples is slower: $n^{-4/5}$ instead of
        $n^{-1}$. This is the price paid for not knowing in advance
        the $p$-dimensional space to which $f$ belong. It can be
        proved that the exponent derived here is optimal. 

\subsection{Wavelet expansions}

As we emphasized several times, the quality of our function
estimation procedure is highly dependent on the choice of the features
$\{\varphi_j\}$.
More precisely, it depends on the ability to represent the signal of
interest with a few elements of this dictionary.
While the Fourier basis works well for smooth signals, 
it is not an adequate dictionary for many signals of interest.
For instance, the Fourier expansion does not work very well for
images.

 Why is this the case? Let us reconsider what are the LS estimates for
 the Fourier coefficients. Using orthonormality of the Fourier basis,
 we have
\begin{equation}
\htheta^{\LS}_j \approx \frac{1}{n} \lsum^{n}_{i=1}
        \varphi_j\left( i/n \right) y_i = \theta_{0,j}
        +\widehat{w}_j\, ,
\end{equation}
where $\widehat{w}_i =n^{-1} \lsum^{n}_{i=1}
        \varphi_j\left( i/n \right) y_i$. Figure
        \ref{fig:noise_spectrum} shows a cartoon of these coefficients.

\begin{figure}
\centering
	\includegraphics[width=\figurewidth]{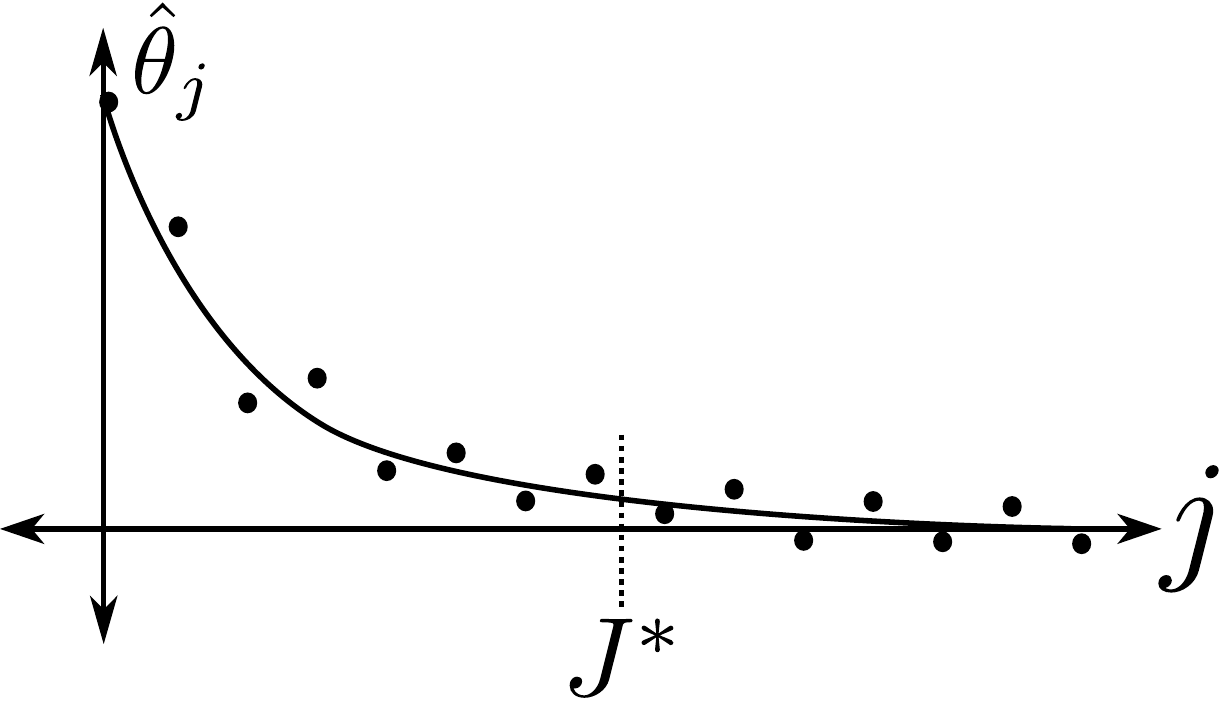}
\caption{Cutting off coefficients at $J^*$.}
\label{fig:noise_spectrum}
\end{figure}
In other words, each estimated coefficients is a sum of two
contributions:
the true Fourier coefficients $\theta_{0,j}$ and the noise
contribution $\widehat{w}_j$. Since the noise is white, its energy is 
equally spread across all Fourier modes. 
On the other hand, if the signal is smooth, its energy concentrates on
low-frequency modes.
By selecting a cut-off at $J^*$, we are sacrificing some true signal
information, in order to get rid of most of the noise.
For frequencies higher than $J^*$, the noise energy surpasses
any additional information these coefficients contain about the original signal
we wish to estimate.

In other words, by selecting $J=J_*$, we are filtering out high
frequencies in our measurements.
In `time' domain, this is essentially  equivalent to averaging the
observations over a sliding window of size of order $J_*^{-1}$.
Formally, this is done by convolving 
the observations $y$, with some smooth kernel $K(\,\cdot\, )$,
\begin{align}
	\hat{f}(t) = \frac{1}{n} \lsum_{i=1}^{n} K \left( \frac{i}{n} - t \right) y_i.\label{eq:Convolution}
\end{align}
The details of the kernel $K(\,\cdot\,)$ can be worked out in detail,
but what is important here is that it is $K(s)$ significantly
different from $0$ if and only if $|s|\le J_*^{-1}$. This point of
view gives a different perspective on the bias-variance tradeoff:
\begin{itemize}
\item For
small $J_*$, we are averaging over large window, and hence reducing the
variance of our estimates. On the other hand, we are introducing a
large bias in favor of smooth signals.
\item For large $J_*$ we average over a large window. The estimate is
  less biased, but has a lot of variance.
\end{itemize}

For truly smooth signals, this approach to denoising is adequate.
However, for many signals, the degree of smoothness changes
dramatically from one point to the other of the signal. For instance,
an image is mostly smooth, because of homogeneous surfaces
corresponding to the same object or degree of illumination.
However, they contain a lot of important discontinuities (e.g. edges)
as well. Missing  or smoothing out edges has a dramatic impact on the
quality of reconstruction.

Smoothing with a kernel with uniform width produces a very bad
reconstruction on
such signals. If the width is large the image 
becomes blurred across these edges. If the width is small, it will not
filter out noise efficiently in the smooth regions.
A different
predictor set must be used that adapts to different levels  of
smoothness in different point of the image.

Wavelets are one such basis. A wavelet expansion of a function allows for localization of
frequency terms, which means high-frequency coefficients can be localized to edges, while
smoother content of the image can be more concisely described with just a few low-frequency
coefficients. Wavelets, as in our previous example of the Fourier basis, are an orthonormal
basis of $[0,1]$. The expansion is formed via two functions, the \textit{father-wavelet}, or \textit{scaling}, function, $\varphi(\cdot)$,
and the \textit{mother-wavelet} function, $\psi_{jk}(\cdot)$. The mother-wavelet function is 
used to generate set of self-similar functions which are composed of scaled and shifted versions
of the mother-wavelet,
\begin{align}
	\psi_{jk}(t) = 2^{j/2} \psi(2^j t - k) \quad \text{where} &\quad j \in \{0,1,2,\dots\}, \\
												  &\quad k \in \{0,1,\dots,2^{j-1} \}. \notag
\end{align}
Hence $j$ is an index related to frequency, and $k$ is related to position.
The full wavelet expansion is then
\begin{equation}
	f(t) = \theta_0 \varphi(t) + \lsum_{j=0}^{\infty} \lsum_{k=0}^{2^{j-1}} \theta_{0jk} \psi_{jk}(t).
\end{equation}
There exist many families of wavelet functions, but the simplest among them
is the Haar wavelet family. For the Haar wavelet, the wavelet functions are defined
as
\[
	\varphi(t) = 
	\begin{cases}
	~~1 & \text{if}~~0 \leq t < 1, \\
	~~0 & \text{otherwise},
	\end{cases}
\]
and
\[
	\psi(t) = 
	\begin{cases}
	~~-1 & \text{if}~~0 \leq t < \frac{1}{2}, \\
	~~~~1 & \text{if}~~\frac{1}{2} \leq t < 1, \\
	~~~~0 & \text{otherwise}.
	\end{cases}
\]
\begin{figure}
\centering
	\includegraphics[height=\figurewidth]{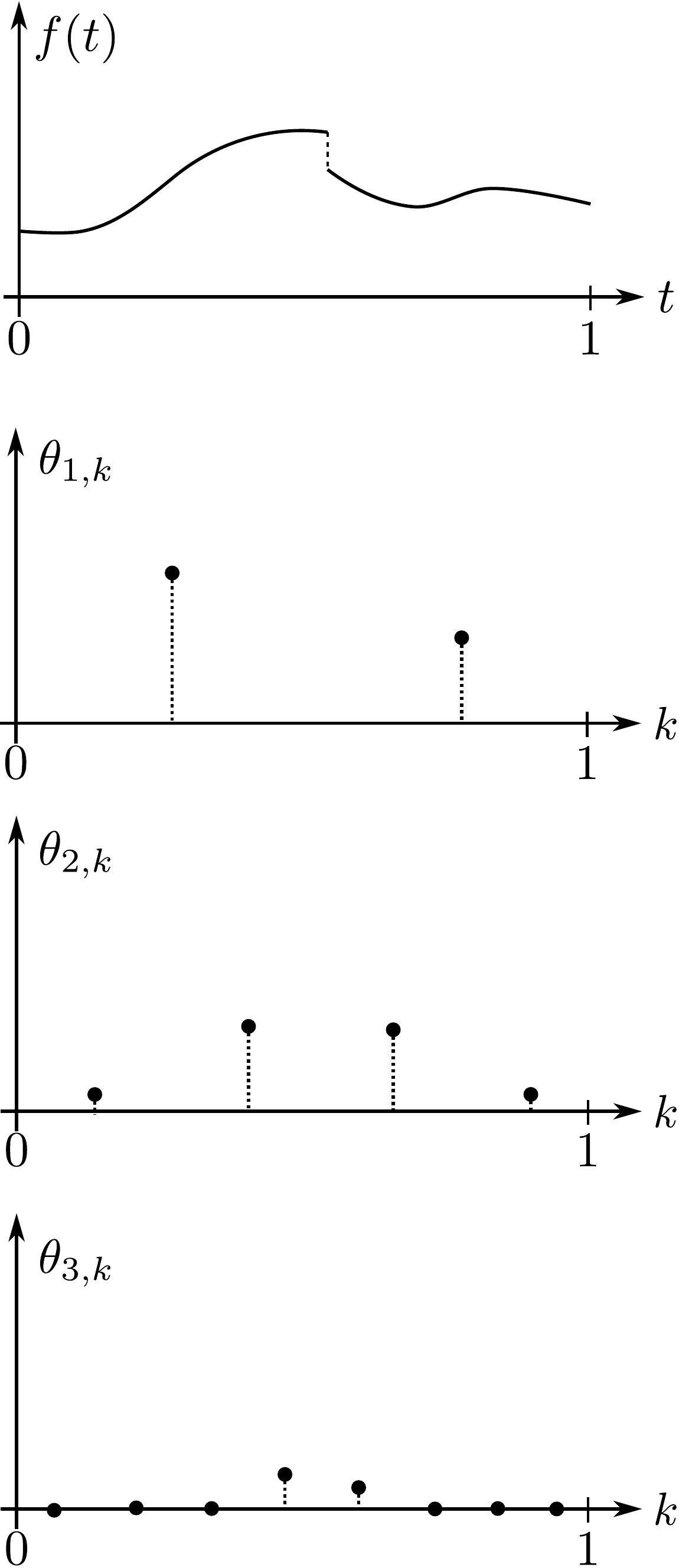}
\caption{Wavelet coefficients of a piecewise-continuous function for increasing 
scale levels.}
\label{fig:wavelet_coeffs}
\end{figure}
In Fig. \ref{fig:wavelet_coeffs} we see an example wavelet expansion of a piecewise-continuous
function. Larger magnitude wavelet coefficients will be located with the discontinuities in the
original function across all scales.

Two problems  arise naturally:
\begin{enumerate}
\item 
Unlike the Fourier basis,  wavelet coefficients have no 
natural ordering of ``importance'', since each wavelet coefficient
describes the function at a certain length scale, and in a certain
position. Hence, the simple idea of fitting all coefficients up to a
certain maximum index $J_*$ cannot be applied. If we select 
all coefficients corresponding to all positions up to a certain
maximum frequency, we will not exploit the spatial adaptivity property
of the wavelet basis.
\item Any linear estimation procedure, that is also translation
  invariant can be represented as a convolution
  cf. \eqref{eq:Convolution},
and thus incurs the problems outlined above. In order to treat
differently edges from smooth regions in an image, a nonlinear
procedure must be used.
\end{enumerate}

The simplest approach that overcomes these problems is the wavelet
denoising method that was developed in a sequence of  seminal papers
by David Donoho and Iain Johnstone
\citeyear{DJ94a,Donoho94,DoJo95,DJ98,JohnstoneBook}.
The basic idea is to truncate, not according to the wavelet index, 
but according according to the magnitude of the measured wavelet
coefficient.
 In the simplest implementation, we proceed in two steps. First we
 perform least squares estimation of each coefficient. In the case of
 orthogonal designs considered here, this yields
\begin{equation}
	\tilde{y} = \frac{1}{n} \bX^{\sT} y = \theta + \tilde{w},
\end{equation}
Here $ \tilde{w}=\bX^{\sT}w/n$ is again white noise $\tilde{w}\sim\normal{0}{(\sigma^2/n)\id_n}$.
After this, coefficients are  \textit{thresholded}, independently,
\[
	\htheta_i = 
	\begin{cases}
	~~\tilde{y_i},& \text{if}~~ | \tilde{y}_i | \geq \lambda, \\
	~~0,& \text{otherwise},.
	\end{cases}
\]
The overall effect of this thresholding is to preserve large magnitude
wavelet coefficients while zeroing those that are `below the noise level.'
Since larger coefficients corresponds to edged in the image, this
approach seek to estimate higher frequencies near edges, only
retaining low frequencies in smooth regions. This allows
 denoising  without blurring across edges.

\section{Denoising with thresholding}
In the last section we briefly described a denoising method, wavelet
thresholding, that is can adapt to a degree of smoothness
that varies across a signal (e.g. an image).  In this section, we
work out some basic properties of this method, under a simple signal model. Apart from being
interesting per se, this analysis provides key insights for
generalizing the same method to high-dimensional statistical
estimation problems beyond denoising. For an in-depth treatment we
refer, for instance, to \cite{JohnstoneBook,DJ94a}.

To recall the our set-up, we are considering the model
\begin{align}
  y = \bX \theta + w,
\end{align}
where $y \in \real^{n}$, $\bX \in \real^{n \times p}$ are observed,
and we want to estimate the vector of coefficients $\theta \in
\reals^p$. The vector $w$ is noise $w\sim\normal{0}{\sigma^2\id_n}$.
We are focusing on orthogonal designs, i.e.  on the case $n=p$ with
$\bX^{\sT} \bX = n , \id_{n \times n}$.

There is no loss of generality in carrying out least squares as a
first step, which in this case reduces to
\begin{align}
  \tilde{y} = \frac{1}{n} \bX^{\sT} y = \theta + \tilde{w}, \quad \tw \sim \mathcal{N}(0,\frac{\sigma^2}{n} \id_{n \times n}).
\end{align}
In other words, in the case of orthogonal designs we can equivalently
assume that the unknown object $\theta$ has been observed directly,
with additive Gaussian noise.

Since we expect $\theta$ to be sparse, it is natural to return a
sparse estimate $\htheta$. In particular, if $\ty_i$ is of the same
order as the noise standard deviation $\sigma$, it is natural to guess
that $\theta_i$ is actually very small or vanishing, and hence set
$\htheta_i=0$.
Two simple ways to implement this idea are `hard thresholding' and
`soft thresholding.'

Under \emph{hard thresholding}, the estimate $\htheta=(\htheta_1,\cdots,\htheta_p)$ of $\theta$ is given by
\begin{align}
  \htheta_i =
  \begin{cases}
    \ty_i & \text{if $|\ty_i| \geq \lambda$}, \\
    0 & \text{else.}
  \end{cases}
\end{align}
Under \emph{soft-thresholding}, the estimate $\htheta$ is given by
\begin{align}
  \htheta_i =
  \begin{cases}
    \ty_i - \lambda & \text{if $\ty_i \geq \lambda$}, \\
    0 & \text{if $|\ty_i| \leq \lambda$}, \\
    \ty_i + \lambda & \text{if $\ty_i \leq - \lambda$} .
  \end{cases}
\end{align}
These hard thresholding and soft thresholding functions are plotted in
Fig. \ref{fig:SoftHard}. While the two approaches  have comparable
properties (in particular, similar risk over sparse vectors), we shall
focus here on soft thresholding since it is most easily generalizable
to other estimation problems.
\begin{figure}[!ht]
  \centering
  \begin{tikzpicture}[xscale=2,yscale=1.5]
    \draw[very thin,color=gray,->] (-2.1,0) -- (2.1,0) node[right,black] {$\tilde{y_i}$}; 
    \draw[very thin,color=gray] (0,-1.1) -- (0,1.1); 

    \draw[color=red,smooth,samples=100,domain=1:2,very thick] plot (\x,\x-1);
    \draw[color=red,smooth,samples=100,domain=-2:-1,very thick] plot (\x,\x+1);
   \draw[color=red,smooth,samples=100,domain=-1:1,very thick] plot(\x,0);

    \draw[color=blue,smooth,samples=100,domain=1:2,very thick] plot (\x,\x);
    \draw[color=blue,smooth,samples=100,domain=-2:-1,very thick] plot (\x,\x);
\draw[color=blue,smooth,samples=100,domain=-1:1,very thick]
plot(\x,0.02);

    \draw (1,-0.1) node[below] {$\lambda$} -- (1,0.1);
    \draw (-1,-0.1) node[left,below] {$\lambda$}  -- (-1,0.1);
  \end{tikzpicture}
  \caption{Soft thresholding (red) and hard thresholding (blue).\label{fig:SoftHard}}
\end{figure}
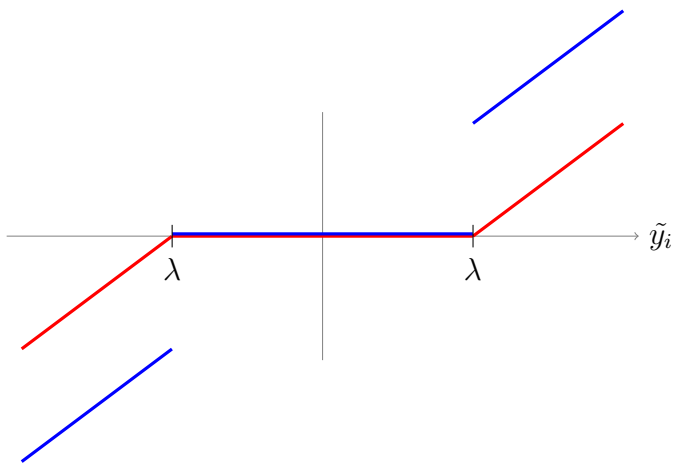

Note that both hard and soft thresholding  depend on a threshold parameter that
we denoted by $\lambda$. Entries below $\lambda$ are set
to zero: to achieve minimal risk, it is of course crucial to select an appropriate $\lambda$.
Ideally, the threshold  should cut-off the coefficients
resulting from the noise,
and hence we expect $\lambda$ to be proportional to the noise standard
deviation $\sigma$.
In order to determine the  optimal choice of $\lambda$, let us
first consider the case $\theta=0$.
Note that, when $\theta=0$, $\ty \sim
\mathcal{N}(0,\frac{\sigma^2}{n}\,\id_{n})$ is a vector with
i.i.d. Gaussian entries.
We claim that, in this case
\begin{align}
  \max_{i \in 1,\cdots,p} |\tilde{y}_i| \approx \sigma \sqrt{\frac{2
      \log p}{n}} \, ,
\end{align}
with probability very close to one\footnote{In this derivation we will be by
choice somewhat imprecise, so as to increase readability. The reader
is welcome to fill in the details, or to consult, for instance, 
\cite{JohnstoneBook,DJ94a}.}

\index{Y}
To see why this is the case, let $N(z)=\E \# \{ i\in [p] : |\ty_i| \ge
|z|\}$ be the expected number of coordinates in the vector $\ty$ that are
above level $|z|$, or below $-|z|$. By linearity of expectation, we
have
\begin{align}
N(z) = 2\, p\, \Phi\Big(-\frac{n|z|}{\sigma}\Big)\, ,
\end{align}
where $\Phi(x) = \int_{-\infty}^xe^{-t^2/2}\de t/\sqrt{2\pi}$ is the
Gaussian distribution function.
Using the inequality $\Phi(-x) \le e^{-x^2/2}/2$, valid for $x\ge 0$,
we obtain
$N(z) \le p\, \exp(-nz^2/2\sigma^2)$. In particular, for any $\delta>0$
\begin{align}
\prob\Big\{\max_{i \in 1,\cdots,p} |\tilde{y}_i| \ge \sigma \sqrt{\frac{2(1+\delta)
      \log p}{n}} \Big\}\le
N\Big(\sigma \sqrt{\frac{2(1+\delta)
      \log p}{n}}\Big)\le p^{-\delta}\, ,
\end{align}
which vanishes as $p\to\infty$. Roughly speaking, this proves that 
$\max_{i \in 1,\cdots,p} |\tilde{y}_i| \lesssim \sigma \sqrt{\frac{2
      \log p}{n}}$ with high probability. A matching lower bound can
  be proved by a second moment argument and we leave it to the reader
  (or refer to the literature). 

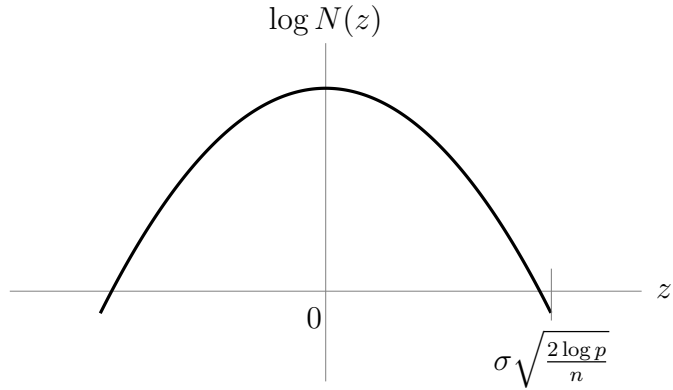
\begin{figure}[!ht]
  \centering
  \begin{tikzpicture}[xscale=3,yscale=3][domain=-1.4:1.4]
    \draw[very thin,color=gray] (-1.4,0.1) -- (1.4,0.1);
    \draw[very thin,color=gray] (0,-0.3) -- (0,1.2); 

    \draw[color=black,smooth,samples=100,domain=-0.999:0.999,very thick] plot (\x,1-\x*\x);

    \draw[color=gray] (1,-0.03) -- (1,0.2);
    \node at (1,-0.2) {$\sigma \sqrt{\frac{2 \log p}{n}}$};
    \node at (1.5,0.1) {$z$};
    \node at (-0.05,-0.02) {$0$};
    \node at (0,1.3) {$\log N(z)$};
  \end{tikzpicture}
  \caption{Sketch $\log N(z)$ (logarithm of the number of coordinates
    with noise level $z$).}\label{fig:logN}
\end{figure}
Figure \ref{fig:logN} reproduces the behavior of $\log N(z)$. The
reader with a background in statistical physics has probably noticed
the similarity between the present analysis and Derrida's treatment of
the `random-energy model' \cite{derrida1981random}. In fact the two
models are identical and there is a close relationship between the
problem addressed within statistical physics and estimation theory.

\begin{figure}[!ht]
  \centering
  \begin{tikzpicture}[xscale=3,yscale=3,
    infonode/.style={circle, inner sep = 0pt, minimum size = 0.8mm, draw=black, fill=black},
    ]
    \draw[->,thin] (-0.1,0) -- (1.5,0);
    \draw[->,thin] (0,-0.7) -- (0,0.7);
    \draw (-0.1,0.5) -- (1.3,0.5) node[right] {$\frac{\sigma}{\sqrt{n}}\sqrt{2 \log p}$};
    \draw (-0.1,-0.5) -- (1.3,-0.5) node[right] {$-\frac{\sigma}{\sqrt{n}}\sqrt{2 \log p}$};
    
    \node[infonode] at (0.1,0.33) {};
    \node[infonode] at (0.2,-0.47) {};
    \node[infonode] at (0.3,-0.29) {};

    \node[gray] at (0.6,0.05) {$\cdots$};
    \node[infonode] at (0.8,0.45) {};
    \node[infonode] at (0.9,-0.06) {};
    \node[infonode] at (1.0,0.26) {};

    \draw[very thin,gray] (0.1,-0.05) node [below] {\tiny{$1$}} -- (0.1,0.05) ;
    \draw[very thin,gray] (0.2,-0.05) node [below] {\tiny{$2$}} -- (0.2,0.05);  
    \draw[very thin,gray] (0.3,-0.05) node [below] {\tiny{$3$}} -- (0.3,0.05);

    \draw[very thin,gray] (1.0,-0.05) node [below] {\tiny{$p$}} -- (1.0,0.05);

  \end{tikzpicture}
  \caption{Cartoon of the vector of observations $\ty$  when $\theta=0$.}\label{fig:PureNoise}
\end{figure}
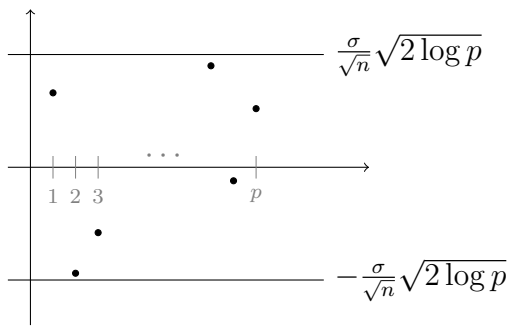

Figure \ref{fig:PureNoise} is a carton of the vector of observations
$\ty$ in the case in which the signal vanishes: $\theta=0$. All the
coordinates of $\ty$ lie between $-\sigma \sqrt{\frac{2 \log p}{n}}$
and $+\sigma \sqrt{\frac{2 \log p}{n}}$. This suggests to set the
threshold $\lambda$ as to zero all the entries that are pure noise.
This leads to the so-called following thresholding rule, proposed in \cite{donoho1994ideal}
\begin{align}
\lambda = \sigma \sqrt{\frac{2 \log p}{n}}\, .
\end{align}
We now turn to evaluating the  risk for such an estimator, when
$\theta \neq 0$ is a sparse signal: 
\begin{align}
  R(\theta ; \htheta) = \E\big\{\|\theta - \htheta \|^2\big\} =
  \sum_{i=1}^p \E\big\{ (\theta_i-\htheta_i)^2 \big\}\, .
\end{align}
We can decompose this risk as
\begin{align*}
  R = R_0 + R_{\neq 0} ,
\end{align*}
where $R_0$ (respectively, $R_{\neq 0}$) is risk from entries
$\theta_i$ that are zero (respectively, non-zero). 
The two contributions depend differently on $\lambda$: the contribution
of zeros decreases with $\lambda$ since for large $\lambda$ more
entries are set to $0$. The contribution of non-zero entries instead
increases with $\lambda$ since large $\lambda$ produces a larger
bias, see Fig.~\ref{fig:Risk0vsNon0} for a cartoon.
\begin{figure}[!ht]
  \centering
  \begin{tikzpicture}[xscale=8,yscale=2][domain=0:1]
    \draw[very thin,color=gray] (-0.1,0) -- (1.1,0) ; 

    \draw[color=red,smooth,samples=100,domain=0:1,very thick] plot (\x,{exp(1-\x) -1)});
    \node at (1,0.3) {$R_0$};
    \draw[color=blue,smooth,samples=100,domain=0:1,very thick] plot (\x,{exp(\x) - 1});
    \node at (1,1.4) {$R_{\neq 0}$};
    \draw[color=gray] (1,-0.2) -- (1,0.2);

    \node at (1,-0.4) {$\sigma \sqrt{\frac{2 \log p}{n}}$};
  \end{tikzpicture}
  \caption{Risk $R_{\neq 0}$ in comparison to $R_0$.}\label{fig:Risk0vsNon0}
\end{figure}
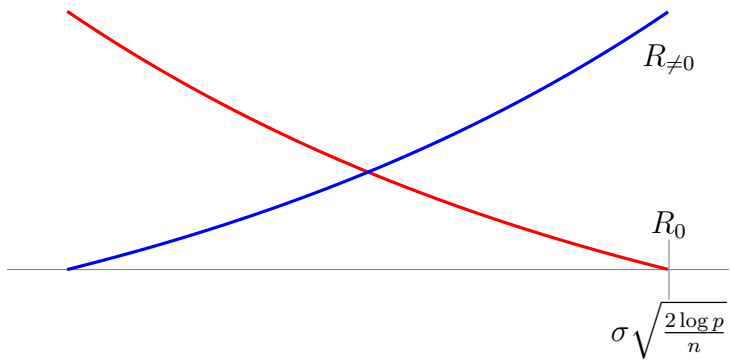

\begin{figure}[!ht]
  \centering
  \begin{tikzpicture}[xscale=8,yscale=2,
    infonode/.style={circle, inner sep = 0pt, minimum size = 0.8mm, draw=black, fill=black},
    domain=0:1,]
    \draw[very thin,color=gray] (-0.1,0) -- (1.5,0);
    \draw[very thin,color=gray] (0,-0.2) -- (0,1);

    \node at (-0.05,1) {$\tilde{y}$};
    \draw[gray] (0.1,0) -- (0.1,0.7);  \node[infonode] at (0.1,0.79) {};
    \draw[gray] (0.2,0) -- (0.2,0.8); \node[infonode] at (0.2,0.88) {};
    \draw[gray] (0.3,0) -- (0.3,0.9); \node[infonode] at (0.3,0.82) {};  
    \draw[very thin,color=gray] (0,0.55) -- (1.5,0.55); \node at (-0.05,0.55) {$\lambda$};
    \draw[gray] (0.4,0) -- (0.4,0.5); \node[infonode] at (0.4,0.53) {}; 
    \draw[gray] (0.5,0) -- (0.5,0.4); \node[infonode] at (0.5,0.32) {};

    \node at (0.9,0.1) {$\cdots$};

    \node[infonode] at (1.2,0) {};
    \node[infonode] at (1.3,0) {};
    \node[infonode] at (1.4,0) {};
  \end{tikzpicture}
  \caption{This picture illustrates the universal thresholding.
    The solid dots represent $\tilde{\theta}$ and the bars represent $\tilde{y}$.
  }
\end{figure}
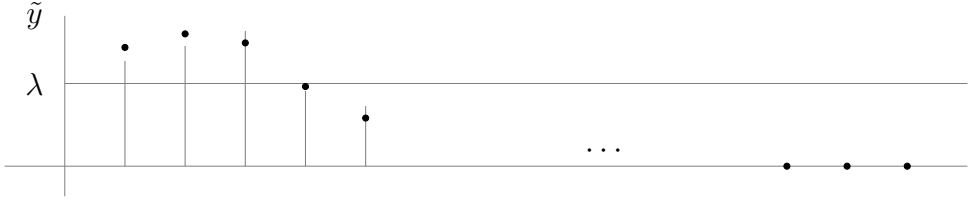

Under universal thresholding, since $\max_{i :\theta_i=0}
|\tilde{y}_i| \lesssim \sigma \sqrt{\frac{2 \log p}{n}}=\lambda$, we have $R_0
\approx 0$.
In order to evaluate the contribution of non-zero entries, we
assume that $\theta$ is $s_0$ sparse, i.e., letting $\supp(\theta)
\equiv \{ i\in [p] : \theta_i \neq 0 \}$, we have $|\supp(\theta)|\leq s_0$.
Note that soft thresholding introduces a bias of size $\lambda$ on
these entries, as soon as they are sufficiently than $\lambda$. This
gives an error per coordinate proportional to $\lambda^2$ (the
variance contribution is negligible on these entries).
This gives
\begin{align}
  R(\theta ; \htheta) \approx R_{\neq 0} \approx s_0 \lambda^2  =
  \frac{s_0\sigma^2}{n} (2 \log p)\,  .\label{eq:SparseRisk}
\end{align}
We can now step back and compare this result with the risk of least
square
estimation (\ref{eq:ls_risk}). Neglecting the factor $(2\log p)$ which
is small even for very high dimension, our formula for sparse vectors
(\ref{eq:SparseRisk}) is the same as for least squares, except that
the dimension $p$ is replaced by the number of non-zero entries $s_0$.
In other words, we basically achieve the same risk \emph{as if} we
knew a priori $\supp(\theta)$ and run least squares on that support!
The extra factor $(2\log p)$ is the price we pay for not knowing where
the support is.
For sparse vectors, we achieve an impressive improvement over least
squares.

Notice that this improvement is achieved simultaneously over all
possible sparsity levels $s_0$, and the estimator does not need to know a
priori $s_0$.

\subsection{An equivalent analysis: Estimating a random scalar}
\label{sec:RandomScalar}

There is a different, and essentially equivalent, way to analyze soft
thresholding denoising. We will quickly sketch this approach because it provides
an alternative point of view and, most importantly, because we will
use some of its results in the next sections. We will omit spelling
out the correspondence with the analysis in the last section.

We state this analysis in terms of a different --but essentially
equivalent-- problem. A source of information produces a random
variable $\Theta$ in $\reals$ with distribution $p_{\Theta}$, and we
observe it corrupted by Gaussian noise. Namely, we observe $Y$ given by
\begin{align}
  Y = \Theta + \tau\, Z\, , 
\end{align}
where $Z\sim\normal{0}{1}$ independent of $\Theta$, and $\tau$ is the
noise standard deviation. We  want to estimate $\Theta$ from $Y$. A block diagram of this proces is shown below.
\begin{figure}[!ht]
  \centering
  \begin{tikzpicture}[
    xscale=2,yscale=3,
    squarenode/.style={rectangle, inner sep = 0pt, minimum size = 10mm, draw=black},
    ]
    \node (n1) at (0,0) {\Large{$\Theta$}};
    \node[squarenode] (n2) at (1,0) {\Large{$+$}};
    \node (n3) at (2,0) {\Large{$Y$}};
    \node[squarenode] (n4) at (3,0) {\Large{$\hTheta$}};
    \node (n5) at (4,0) {\Large{$\hTheta(Y)$}};
    \node (n6) at (1,-0.5) {\Large{$\tau\, Z$}};

    \draw[->,thick] (n1) -- (n2);
    \draw[->,thick] (n2) -- (n3);
    \draw[->,thick] (n3) -- (n4);
    \draw[->,thick] (n4) -- (n5);
    \draw[->,thick] (n6) -- (n2);
  \end{tikzpicture}
\end{figure}
(A hint: the correspondence with the problem in the previous section
is obtained by setting $\tau= \sigma/\sqrt{n}$ 
and $p_{\Theta} = \frac{1}{p} \sum_{i=1}^p \delta_{\theta_i}$.)

We saw in the previous sections that sparse vectors can be used to
model natural signals (e.g. images in wavelet domain). In the present
framework, this can be modeled by the set of probability distributions
that attribute mass at least $1-\eps$ to $0$:
\begin{align}
  \mathcal{F}_{\eps} = \big\{  p_{\Theta} \in \cP\;\big|
  \;p_{\Theta}(\{0\}) \ge 1-\eps\, \big\},
\end{align}
where $\cP$ is the set of all probability distributions over the real
line $\reals$.
Equivalently, $\cF_{\eps}$ is the class of probability distributions
that can be written as $p_{\Theta} = (1-\eps)\delta_0+\eps\, Q$
where $\delta_0$ is the Dirac measure at $0$ and $Q$ is an arbitrary probability distribution.

The Bayes risk of an estimator $\hTheta$ is given by
\begin{align}
  R_{B}(p_{\Theta} ; \hTheta) = \E\big\{ [\hTheta - \Theta ]^2
  \big\}\, .
\end{align}
In view of the interesting properties of soft thresholding, unveiled
in the previous section, we will assume that $\Theta$ is obtained by
soft thresholding $Y$. It is convenient at this point to introduce
some notation for soft thresholding:
\begin{align}
  \eta(z;\lambda) =
  \begin{cases}
    z - \lambda & \text{if $z \geq \lambda$}, \\
    0 & \text{if $|z| \leq \lambda$}, \\
    z + \lambda & \text{if $z \leq - \lambda$} .
  \end{cases}
\end{align}
With an abuse of notation, we write $R_B(p_{\Theta};\lambda) =
R_{B}(p_{\Theta} ; \eta(\,\cdot\,;\lambda))$ for the Bayes risk of 
soft thresholding with threshold $\lambda$.
Explicitly 
\begin{align}
  R_{B}(p_{\Theta} ; \lambda)= \E\big\{ [\eta(Y;\lambda) - \Theta ]^2
  \big\}\, .
\end{align}

We are interested in bounding the risk  $R_{B}(p_{\Theta} ; \lambda)$ for all
sparse signals, i.e., in the present framework, for all the
probability distributions $p_{\Theta}\in\cF_{\eps}$. We then consider
the minimax risk:
\begin{align}
  R_{*}(\eps;\tau^2) = \inf_{\lambda} \sup_{p_{\Theta} \in
    \cF_{\eps}} R_{B}(p_{\Theta};\lambda)\, .
\end{align}
First note that the class $\cF_{\eps}$ is scale invariant. If
$p_{\Theta}\in\cF_{\eps}$, also the probability distribution that is
obtained by `stretching' $p_{\Theta}$ by any positive factor $s$ is in
$\cF_{\eps}$.
Hence the only scale in the problem is the noise variance $\tau^2$. It follows that
\begin{align}
  R_{*}(\eps;\tau^2) = M(\eps)\, \tau^2 .
\end{align}
for some function $M(\eps)$.
Explicit formulae for $M(\eps)$ can be found --for instance-- in
\cite[Supplementary Material]{donoho2009message} or
\cite{donoho2011noise}. A sketch is shown in Fig.~\ref{fig:Meps}: in
particular 
$M(\eps) \approx 2 \eps \log \frac{1}{\eps}$ as $\eps\to 0$.
\begin{figure}[!ht]
  \centering
  \begin{tikzpicture}[xscale=6,yscale=3][domain=0:1]
    \draw[very thin,color=gray,->] (-0.1,0) -- (0.6,0) node[right,black] {$\eps$}; 
    \draw[very thin,color=gray,->] (0,-0.1) -- (0,1.1) node[above,left,black] {$M(\eps)$}; 

    \draw[color=blue,smooth,samples=100,domain=0.00001:0.5,very thick] plot (\x,{-\x*log2(\x)-(1-\x)*log2(1-\x)});

    \node at (-0.1,-0.1) {$0$};
    \draw (0.5,-0.1) node[below] {$1$} -- (0.5,0.1);
    \draw (-0.03,1) node[left,below] {$1$}  -- (0.03,1);

    \node[left] (n1) at (-0.2,0.3) {$ \approx 2 \eps \log \frac{1}{\eps}$ when $\eps \approx 0$};
    \draw[<-] (n1) -- (0,0) ;
  \end{tikzpicture}
  \caption{Sketch of the minimax risk of soft thresholding $M(\eps)$.}
\label{fig:Meps}
\end{figure}
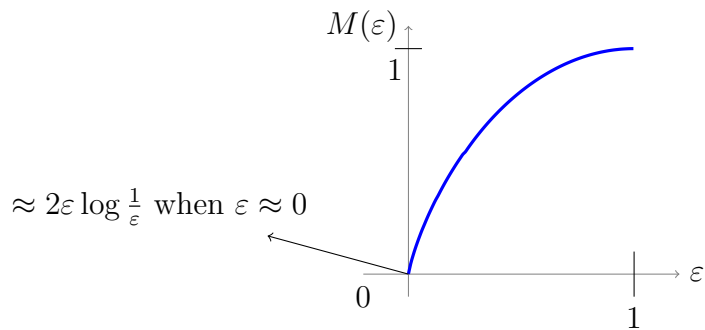
By the same scaling argument as above, the  optimal threshold $\lambda$ takes the form 
\begin{align}
  \lambda^* = \tau \ell(\eps), \quad 
\end{align}
where the function $\ell(\eps)$ can be computed as well and behaves as
$\ell(\eps)\approx \sqrt{2 \log \frac{1}{\eps}}$ for small $\eps$.
Finally, the worst case signal distribution is
\begin{align}
  p_{\Theta}^* = (1-\eps) \delta_0 + \eps \delta_{\infty} .
\end{align}

Note that the small $\eps$ behavior matches --as expected-- the very
sparse limit for vector denoising derived in the previous section.
The correspondence is obtained by substituting $\epsilon=s_0/p$ for
the fraction of non-zero entries and noting that the vector risk is
\begin{align}
  R = p R_*(\eps;\tau) = p M(\eps) \tau^2 .
\end{align}
When $s_0\ll p$, we have $\epsilon \approx 0$ and
\begin{align}
  R = p M(\eps) \tau^2 &\approx p\, 2 \eps\log \frac{1}{\eps} \,\tau^2 \\
  &= p\, 2 \frac{s_0}{p} \log \frac{p}{s_0} \,\frac{\sigma^2}{n} \\
  &= \frac{2s_0\sigma^2}{n} \log \frac{p}{s_0},
\end{align}
which matches the behavior derived earlier.

\label{sec:Denoising}

\section{Sparse regression}
Up to now we have focused on estimating $\theta\in\reals^p$ from observations of the form
\begin{align}
  y = \bX \theta + w,\label{eq:LinearModel}
\end{align}
where $y \in \reals^{n}$, $\bX \in \reals^{n \times p}$ are known 
and $w\in\reals^n$ is an unknown noise vector. We focused in the
previous case on orthogonal designs $n\ge p$ and $\bX^{\sT}\bX = n \, \id_{p \times p}$.

Over the last decade, there has been a lot of interest in the 
underdetermined case $n \ll p$, as general a $\bX$ as possible, which naturally emerges in many
applications.
It turns out that goos estimation is possible provided $\theta$ is
highly structured, and in particular when it is very sparse.
Throughout, we let  $\|\theta\|_0$ denote the `$\ell_0$ norm' of\index{Norm $\ell_0$}
$\theta$, i.e.  the number of non-zero entries in $\theta$ (note that
this is not really a norm).
The main outcome of the work in this area is that the number of
measurements
needs to scale with with the number of non-zeros $\|\theta\|_0$
instead of the ambient dimension $p\gg\|\theta\|_0$. 
This setup is the so-called \emph{sparse regression}, or
\emph{high-dimensional regression} problem.

\subsection{Motivation}

It is useful to overview a few scenarios where the above framework
applies, and in particular the high-dimensional regime $n\ll p$ plays
a crucial role.
\begin{description}
\item[Signal processing] 
  An image can be modeled, for instance, by a function
  $f:[0,1]\times[0,1]\to\reals$ if it is gray-scale. Color images
  requires three scalars at each point, three dimensional imaging
  requires to use a domain $[0,1]\times [0,1]\times [0,1]$, and so on.
Many imaging devices can be modeled (to a first order) as linear
operators $A$ collecting a vector $y\reals^n$ corrupted by noise  $w$ denotes the noise.
  \begin{align}
    y = A f + w,
  \end{align}
 To keep a useful example in mind,  $A$ can be the partial Fourier matrix,
i.e. the operator computing a subset of Fourier coefficients.
  As emphasized in the previous sections, the image $f$ is often sparse in some domain, say wavelet transform.
  That is $f = T \theta$, where $\theta$ is sparse and $T$ is the
  wavelet transform, or whatever sparsifying transform.
  This gives rise to the model
  \begin{align}
    y = A f + w =  (A T) \theta + w = \bX \theta + w, 
  \end{align}
where $\bX=AT$. Here $n$ corresponds to the number of measurements,
while $p$ scales with the number of wavelet coefficient, and hence
with the resolution that we want to achieve. The high-dimensional
regime $n\ll p$ is therefore very useful as it corresponds to simpler
measurements and higher resolution. 
\item[Machine learning] In web services, we often want to predict
  an unknown property of a user, on the basis of a large amount on
  known data about her.
  For instance, an online social network as Facebook, might want to
  estimate the income of its users, in order to display targeted advertisement.
  For each user $i$, we can construct a feature vector $x_i \in \reals^p$, where e.g.,
  \begin{align}
    \text{$x_i$ = (age, location, number of friends, number of posts,} \nonumber\\
      \text{time of first post in a day, $\cdots$)}.\nonumber
  \end{align}
  In a linear model, we assume
  \begin{align}
    \underbrace{y_i}_{\text{income}} = \<x_i,\theta\> + w_i ,\label{eq:LinearRelation}
  \end{align}
  Combining all users, we have
  \begin{align}
    y= \bX\theta + w ,
  \end{align}
where $y=(y_1,y_2,\dots,y_n)$ is a vector comprising all response
variables (e.g. the customers' income), and $\bX$ is a matrix whose
$i$-th row is the feature vector $x_i$ of the $i$-th customer.
  Typically, one constructs feature vectors with tens of thousands of
  attributes, hence giving rise to $p= 10^4$ to $10^5$. On the other
  hand, in order to fit such a model, the response variable (income) $y_i$ needs to be known for a
  set of users and this is often possible only for $n\ll p$ users.

  Luckily, only a small subset of features is actually relevant to
  predict income, and hence we are led to use sparse estimation techniques.
\end{description}

\subsection{The LASSO}
\label{sec:LASSO}

The LASSO (Least Absolute Shrinkage and Selection Operator) presented in
\cite{Tibs96}, also known as Basis Pursuit DeNoising (BPDN) \cite{BP95,chen1998atomic} is arguably the
most successful method for sparse regression.
The LASSO estimator is defined in terms of an optimization problem
\begin{align}\tag*{(LASSO)}
  \htheta = \argmin_{\theta \in \mathbb{R}^p } \Big\{
  \underbrace{\frac{1}{2n} \| y - \bX \theta \|_2^2}_{\text{Residual
      sum of squares}} + \underbrace{\lambda \| \theta
    \|_1}_{\text{Regularizer}} \Big\} \label{eq:LASSO}
\end{align}
The term $\mathcal{L}(\theta)=\frac{1}{2n}\|y-\bX\theta\|_2^2$ is the
ordinary least squares cost function, and the regularizer $\lambda
\|\theta\|_1$ promotes sparse vectors by
penalizing coefficients different from $0$.
Note that the  optimization problem  is convex and hence it can be
solved efficiently: we wil discuss a simple algorithm in the following.

\index{Y}
To gain insight as to why the LASSO is well-suited for sparse
regression, let us start by revisiting  the case of  orthogonal
designs, namely $n\ge p$ and
\begin{align}
 \bX^{\sT}\bX = n \, \id_{n \times n} .
\end{align}
Rewriting $\mathcal{L}(\theta)$:
\begin{align}
  \mathcal{L}(\theta) &= \frac{1}{2n} \left< y-\bX\theta, y-\bX\theta \right>,\nonumber \\
  &= \frac{1}{2n} \left< y-\bX\theta,\frac{1}{n} \bX\bX^{\sT} (y-\bX\theta) \right>, \nonumber\\
  &= \frac{1}{2n^2} \left< \bX^{\sT}y - n \theta , \bX^{\sT} y - n \theta  \right>, \nonumber\\
  &= \frac{1}{2} \left\| \theta - \frac{1}{n}\bX^{\sT}y \right\|^2, \nonumber\\
  &= \frac{1}{2} \left\| \theta - \tilde{y}   \right\|^2, \qquad
  \text{where $\tilde{y}=\frac{1}{n}\bX^{\sT}y$}\, .
\end{align}
Thus, in this case, the LASSO problem is equivalent to
\begin{align}
  {\rm minimize}\;\;\;\sum_{i=1}^{p} \left\{ \frac{1}{2}
    \big|\tilde{y}_i-\theta_i\big|^2 + \lambda |\theta_i| \right\} .
\end{align}
This is a `separable' cost function, and we can minimize each coordinate separately.
Let $F(\theta_i)=\frac{1}{2} (\tilde{y}-\theta_i)^2 + \lambda |\theta_i|$.
Now,
\begin{align}
  \frac{\partial F}{\partial \theta_i} &=\theta_i - \tilde{y}_i + \lambda\, \text{sign}(\theta_i),
\end{align}
where $\text{sign}(\cdot)$ denotes the sign function shown in Figure \ref{figure:sign_function}.
\begin{figure}[!ht]
  \centering
  \begin{tikzpicture}[xscale=3,
    opennode/.style={circle, inner sep = 0pt, minimum size = 1mm, draw=black, fill=white},
    fillnode/.style={circle, inner sep = 0pt, minimum size = 1mm, draw=black, fill=blue},
    ]
    \draw[very thin,color=gray,->] (-1.1,0) -- (1.1,0) node[right,black] {$\theta_i$}; 
    \draw[very thin,color=gray] (0,-1.1) -- (0,1.1) node[right,above,black] {\small{$\textrm{sign}(\theta_i)$}}; 

   \draw[thick,blue] (0,-1) -- (0,1) node[fillnode] {};
    \draw[thick,blue] (-1,-1) -- (0,-1) node[fillnode] {};
    \draw[thick,blue] (0,1) node[fillnode] {} -- (1,1);

    \node[fillnode] at (0,0) {};

    \node[left] at (0,1) {$1$};
    \node[right] at (0,-1) {$-1$};
  \end{tikzpicture}
  \caption{The sign function.}
  \label{figure:sign_function}
\end{figure}
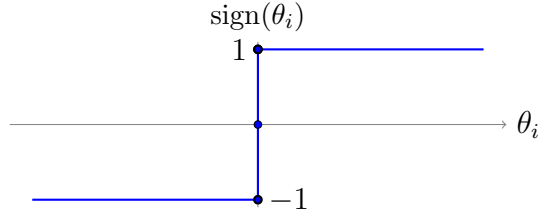
Note that $|\theta_i|$ is non-differentiable at $\theta_i=0$. How
should we interpret its derivative $\sign(\theta_i)$ in this case? For
convex functions (which is the case here) the derivative can be safely
replaced by the `subdifferential,' i.e. the set of all
possible slopes of  tangent lines at $\theta_i$ that stay below the graph of the
function to be differentiated. The subdifferential coincides with the
usual derivative where the function is differentiable. For the function $\theta_i\to
|\theta_i|$, it is an easy exercise to check that the subdifferential
at $\theta_i = 0$ is given by the interval $[-1,1]$. In other words,
we can think of Figure \ref{figure:sign_function} as the correct graph
of the subdifferential of $|\theta_i|$ if we interpret its value at
$0$ as given by the whole interval $[-1,1]$.

The minimizer of $F(\theta_i)$ must satisfy
\begin{align}
\tilde{y}_i =\theta_i+ \lambda\, \text{sign}(\theta_i) \, .
\end{align}
Hence we can obtain the minimizer  as a function of $\ty_i$ by adding
$\theta_i$ to the graph i figure (\ref{figure:sign_function}) and
flipping the axis. The result is plotted in the next figure.
\begin{figure}[!ht]
  \centering
  \begin{tikzpicture}[xscale=2,yscale=2]
    \draw[very thin,color=gray,->] (-2.1,0) -- (2.1,0) node[right,black] {$\tilde{y_i}$}; 
    \draw[very thin,color=gray] (0,-1.1) -- (0,1.1) node[right,black] {$\eta(\tilde{y}_i;\lambda)$}; 

    \draw[color=blue,smooth,samples=100,domain=1:2,very thick] plot (\x,\x-1);
    \draw[color=blue,smooth,samples=100,domain=-2:-1,very thick] plot
    (\x,\x+1);
\draw[color=blue,smooth,samples=100,domain=-1:1,very thick] plot (\x,0);

    \draw (1,-0.1) node[below] {$\lambda$} -- (1,0.1);
    \draw (-1,-0.1) node[left,below] {$\lambda$}  -- (-1,0.1);
  \end{tikzpicture}
\end{figure}
The reader will recognize the soft thresholding function
$\eta(\,\cdot\,;\lambda)$ already encountered in the previous section.
Summarizing, in the case of orthogonal designs, the LASSO estimator
admits the explicit representation
\begin{align}
\htheta = \eta\Big(\frac{1}{n}\bX^{\sT}y;\lambda\Big)\, ,
\end{align}
where it is implicitly understood that the soft thresholding function
is applied component-wise to the vector $(1/n)\bX^{\sT}y$. As we saw in
the previous section, component-wise soft-thresholding has nearly
optimal performances on this problem, and hence the same holds for the LASSO.

In the high-dimensional setting $p \gg n$ and $\bX$ is obviously not
orthogonal,
and the LASSO estimator is non-explicit. Nevertheless it can be
computed efficiently, and we will discuss next a  simple algorithm
that is guaranteed to converge. It is an example of a generic method
for convex optimization known as a `subgradient' or `projected
gradient' approach. The important advantage of these algorithms (and
more generally of  `first order methods')  is that their complexity
per iteration scales only linearly in the dimensions $p$ of the
problem, and are hence well suited for high-dimensional applications \cite{juditsky2011first}.
They are not as fast to converge as --for instance-- Newton's method,
but this is often not crucial. For statistical problems a `low
precision' solution is often as good as a `high precision' one since
in any case there is an unavoidable statistical error to deal with.

We want to minimize the cost function
\begin{align}
 F(\theta) = \frac{1}{2n} \| y - \bX \theta \|^2 + \lambda \| \theta
 \|_1\, .
\end{align}
At each iteration, the algorithm constructs an approximation $\theta^{(t)}$
of the minimizer $\htheta$.
In order to update this state, the idea is to  construct an upper bound to $F(\theta)$ that is easy
to minimize and is a good approximation of $F(\theta)$ close to $\theta^{(t)}$.
Rewriting $\mathcal{L}(\theta)$:
\begin{align}
  \mathcal{L}(\theta)&=\frac{1}{2n} \|y-\bX\theta\|^2_2,\nonumber \\
  &= \frac{1}{2n} \|y-\bX\theta^{(t)} - \bX(\theta-\theta^{(t)})\|^2_2,\nonumber \\
  &= \frac{1}{2n} \|y-\bX\theta^{(t)}\|_2^2 - \frac{1}{n} \left\<
    \bX(\theta-\theta^{(t)}) , y-\bX \theta^{(t)} \right\> +
  \frac{1}{2n} \|\bX(\theta-\theta^{(t)})\|_2^2 \, .
\end{align}
Note that the first two terms are `simple' in that they are linear in
$\theta$. The last term is `small' for $\theta$ close to
$\theta^{(t)})$ (quadratic in $(\theta-\theta^{(t)})$). We will upper
bound the last term.
Suppose the largest eigenvalue of $\frac{1}{n} \bX^\sT\bX$ is bounded by $L$,
\begin{align}
  \lambda_{\max}\left( \frac{1}{n} \bX^{\sT}\bX \right) \leq L,
\end{align}
and let $v=\frac{1}{n} \bX^{\sT}(y-\bX\theta^{(0)})$.
Then
\begin{align}
  \mathcal{L}(\theta)
  &= \frac{1}{2n} \|y-\bX\theta^{(t)}\|_2^2 -
  \left\<v,\theta-\theta^{(t)}\right\> + 
\frac{1}{2} \left\< \theta-\theta^{(t)},\frac{1}{n}\bX^\sT\bX(\theta-\theta^{(t)})\right\>,\nonumber \\
  &\leq \frac{1}{2n} \|y-\bX\theta^{(t)}\|_2^2 - \left\< v , \theta - \theta^{(t)} \right\> + \frac{L}{2} \|\theta - \theta^{(t)}\|_2^2,\nonumber \\
  &= \underbrace{\frac{1}{2n} \|y-\bX\theta^{(t)}\|_2^2 - \frac{1}{2L}
    \|v\|_2^2}_{\triangleq C} + \frac{L}{2} \|\theta - \theta^{(t)} -
  \frac{1}{L}v\|_2^2 \, .
\end{align}
We therefore obtain the following upper bound, whereby $C$ is a
constant independent of $\theta$, 
\begin{align}
  F(\theta) \leq C + \lambda \|\theta\|_1 + \frac{L}{2} \Big\|\theta -
  \theta^{(0)} - \frac{v}{L}\Big\|_2^2 .
\end{align}
We compute the next iterate $\theta^{(t+1)}$ by minimizing the above upper bound:
\begin{align}
  \text{minimize} \quad \quad&\frac{\lambda}{L} \|\theta\|_1 \!+\!
  \frac{1}{2} \|\theta-\tilde{\theta}^{(0)}\|^2, \,\, \nonumber\\
&\tilde{\theta}^{(0)}
=\!\theta^{(0)}+\frac{1}{nL}\bX^{\sT}(y\!-\!\bX\theta^{(0)}) \, .
\end{align}
We already solved this problem when discussing  the case of orthogonal designs.
The solution is given by the soft thresholding operator:
\begin{align}
  \theta^{(t+1)} = \eta \left(  \theta^{(t)} + \frac{1}{nL}
    \bX^{\sT}(y-\bX\theta^{(t)}) ; \frac{\lambda}{L}  \right) .
\end{align}
This yields an iterative procedure known as Iterative Soft
Thresholding that can be initialized
arbitrarily, e.g. with $\theta^{(0)} = 0$. 
This algorithm is guaranteed to always converge, as shown in \cite{daubechies2004iterative,beck2009fast},
in the sense that
\begin{align}
  F(\theta^{(t)}) - F(\htheta) \leq
  \frac{\text{constant}}{t} .
\end{align}
Note that this is much slower that the rate achieved by Newton's
method. However it can be proved that no first order method (i.e. no
method using only gradient information) can achieve global convergence
rate faster than $1/t^2$ for any problem in the class of the LASSO. We
refer to \cite{juditsky2011first} for a recent  introduction to first order methods.
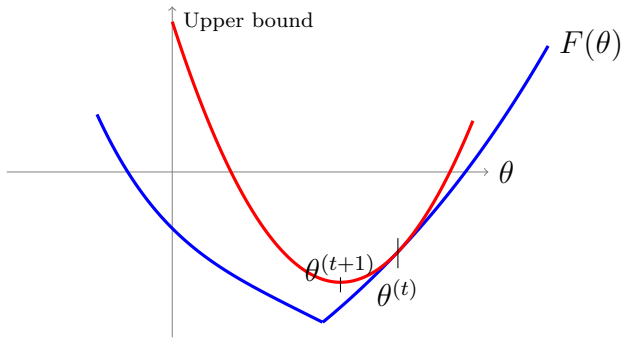
\begin{figure}[!ht]
  \centering
  \begin{tikzpicture}[xscale=2,yscale=2]
    \draw[very thin,color=gray,->] (-1.1,0) -- (2.1,0) node[right,black] {$\theta$}; 
    \draw[very thin,color=gray,->] (0,-1.1) -- (0,1.1); 

    \draw[color=blue,smooth,samples=100,domain=-0.5:1,very thick] plot (\x,{((\x-1)^4)/8+(1-\x)/2-1});
    \draw[color=blue,smooth,samples=100,domain=1:2.5,very thick] plot (\x,{exp(1)^(\x/2)-exp(1/2)-1}); 

    \draw[color=red,smooth,samples=100,domain=0:2,very thick] plot (\x,{1.3864*\x*\x - 3.1008*\x + 1}); 

    \node[right] at (2.5,0.8416) {$F(\theta)$};
    \node[right] at (0,1) {\tiny{Upper bound}};

    \draw (1.5,-0.64) node[below] {$\theta^{(t)}$} -- (1.5,-0.44);
    \draw (1.1183,-0.8) node[above] {$\theta^{(t+1)}$} -- (1.1183,-0.7);
  \end{tikzpicture}
  \caption{Subgradient approach: construct a convenient upper bound to $F(\theta)$.}\label{fig:Subgradient}
\end{figure}

To conclude this section, it is instructive to quickly consider two
special cases: $\lambda \to \infty$ and $\lambda \to 0$.
We rewrite the minimization problem as
\begin{align}
  \htheta &= \argmin_{\theta \in \mathbb{R}^p} F(\theta) \, ,\nonumber\\
  \frac{1}{\lambda} F(\theta) &= \frac{1}{2n\lambda}
  \|y-\bX\theta\|_2^2 + \|\theta\|_1 .
\end{align}
When $\lambda \to \infty$, the first term vanishes and $\htheta \to \argmin_{\theta} \|\theta\|_1 =0$.
In fact, $\htheta=0$ for all $\lambda \geq \lambda_*$ for some critical $\lambda_*$.
When $\lambda \to 0$, the weight in front of the first term goes to
infinity, and hence the equality $y=\bX\theta$ is enforced strictly.
In the high-dimensional regime $p>n$, this linear system is
underdetermined and has multiple solution. The most relevant is
selected by minimizing the $\ell_1$ norm.\index{Norm $\ell_1$}
In other words, as $\lambda\to 0$, the LASSO estimator $\htheta$
converges
to the solution of the following problem (known as `basis pursuit')
\begin{align}
  \text{minimize} &\qquad \|\theta\|_1 \, ,\nonumber\\
  \text{subject to} &\quad y=\bX\theta .
\end{align}

\subsection{Behavior of the LASSO under Restricted Isometry Property}

A significant amount of theory has been developed to understand and
generalize the remarkable properties of the LASSO estimator and its empirical
success. The theory establishes certain optimality properties under
suitable assumptions on the design matrix $\bX$. The most popular of
these assumptions goes under the name of restricted isometry
property (RIP) and was introduced in the groundbreaking work of
Cand\'es, Tao and collaborators
\citeyear{CandesTao,candes2007dantzig}. Several refinements of this
condition were developed in recent years (the restricted eigenvalue
condition of \cite{BickelEtAl}, the compatibility condition of \cite{buhlmann2011statistics} and so on).

In order to motivate the RIP, we notice that 
the LASSO estimator performs well when the columns of the matrix $\bX$
are orthogonal, $\bX^\sT\bX=n\, \id_{n \times n}$. Indeed this is the
case of orthogonal designs explored above.
The orthogonality condition $\bX^\sT\bX=n\, \id_{n \times n}$ is equivalent to
\begin{align}
  \|\bX v\|^2 = n \|v\|^2 \quad \text{for all $v \in \mathbb{R}^p$}
\end{align}
This is of course impossible in the high-dimensional regime $p > n$ 
(indeed the null space of $\bX$ has dimension at least $p-n$).
The idea is to relax this condition, by requiring  that $\bX$ is
``almost orthogonal'' instead of orthogonal, and only when it acts on  sparse vectors.
Explicitly, we say that $\bX$ satisfies the condition RIP$(k,\delta)$
for some integer $k$ and $\delta\in (0,1)$ if 
\begin{align}
  (1\!-\!\delta) \|v\|^2 \leq \frac{1}{n}\|\bX v\|^2 \leq (1\!+\!\delta)\|v\|^2 \quad \text{for all $v \in \mathbb{R}^p$ with $\|v\|_0 \!\leq\! k$} \quad \text{(RIP property)}
\end{align}

It is possible to show that this definition is non-empty and indeed
--in a certain sense--  most matrices satisfy it.
For instance if $\bX$ has iid entries $\bX_{ij} \sim
\text{Unif}\{+1,-1\}$ or $\bX_{ij} \sim \normal{0}{1}$, then with high
probability, $\bX$ satisfies RIP$(k,\delta)$ for a fixed $\delta$ and
$n \geq C k \log \frac{p}{k}$.
The RIP property has been established for a large number of matrix
constructions. For instance partial Fourier matrices\footnote{That is,
  matrices obtained by subsampling randomly the rows of the $p\times
  p$ discrete Fourier transform.} satisfy RIP  with high probability for
$n \geq C_1 \, k (\log p)^{4}$, as shown in \cite{rudelson2008sparse}.

The following theorem illustrates the utility of RIP matrices for
sparse estimation. It is a simplified version of stronger results
established in \cite{BickelEtAl} (without any attempt at reproducing
optimal constants, or the explicit dependence of all the quantities). Results of the same nature were proved earlier in
\cite{candes2007dantzig} for a closely related estimator, known as the
`Dantzig selector.'
\begin{theorem}[Candes, Tao 2006 and Bickel, Ritov, Tsybakov 2009]
  If $\theta$ is $s_0$-sparse and $\bX$ satisfies RIP$(10\, s_0,0.1)$,
  then, by choosing $\lambda=\sigma \sqrt{\frac{5 \log p}{n}}$, we
  have, with high probability for a suitable constant $C>0$,
\begin{align}
    \|\htheta-\theta\|_2^2 \leq \frac{C\, s_0 \sigma^2}{n} \log
    p\, .
\end{align}
\end{theorem}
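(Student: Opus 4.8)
\emph{Proof strategy.} The plan is the classical ``basic inequality $\to$ cone condition $\to$ restricted eigenvalue'' argument. Write $h=\htheta-\theta$ and let $S$ be the support of $\theta$, so $|S|\le s_0$. Since $\htheta$ minimizes the LASSO objective, comparing it with the competitor $\theta$ and substituting $y=\bX\theta+w$ gives, after expanding the quadratic,
\begin{align*}
\frac{1}{2n}\|\bX h\|_2^2 \;\le\; \frac{1}{n}\langle \bX^\sT w,\,h\rangle + \lambda\big(\|\theta\|_1-\|\htheta\|_1\big)\, .
\end{align*}
I would carry out the rest of the argument on the ``noise event'' $\mathcal{E}=\{\tfrac1n\|\bX^\sT w\|_\infty\le \lambda/2\}$. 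The first thing to check is that $\mathcal{E}$ holds with high probability once $\lambda$ exceeds a suitable multiple of $\sigma\sqrt{\log p/n}$ (the constant $5$ in $\lambda=\sigma\sqrt{5\log p/n}$ being the non-optimized choice made here): RIP$(10s_0,0.1)$ forces each column of $\bX$ to have squared norm in $[0.9n,1.1n]$, hence each coordinate of $\tfrac1n\bX^\sT w$ is a centred Gaussian of variance at most $1.1\sigma^2/n$, and a union bound over the $p$ coordinates together with the Gaussian tail bound $\Phi(-x)\le e^{-x^2/2}/2$ --- the same maximum-of-Gaussians estimate used in Section~\ref{sec:Denoising} --- gives $\prob(\mathcal{E}^c)=o(1)$.

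On $\mathcal{E}$ one bounds $\tfrac1n\langle\bX^\sT w,h\rangle\le \tfrac\lambda2\|h\|_1$ and uses $\|\theta\|_1-\|\htheta\|_1\le \|h_S\|_1-\|h_{S^c}\|_1$ (triangle inequality, together with $\htheta_{S^c}=h_{S^c}$). Splitting $\|h\|_1=\|h_S\|_1+\|h_{S^c}\|_1$ and combining, the basic inequality becomes
\begin{align*}
0 \;\le\; \frac{1}{2n}\|\bX h\|_2^2 \;\le\; \frac{3\lambda}{2}\|h_S\|_1 - \frac{\lambda}{2}\|h_{S^c}\|_1\, ,
\end{align*}
which yields at once the \emph{cone condition} $\|h_{S^c}\|_1\le 3\|h_S\|_1$ and the upper bound $\tfrac1{2n}\|\bX h\|_2^2\le \tfrac{3\lambda}{2}\|h_S\|_1\le \tfrac{3\lambda}{2}\sqrt{s_0}\,\|h\|_2$.

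The main obstacle --- and the only step that genuinely uses RIP rather than merely the column norms --- is to turn the cone condition into a \emph{restricted eigenvalue} lower bound $\tfrac1n\|\bX h\|_2^2\ge \kappa\,\|h\|_2^2$ for all $h$ in the cone, with $\kappa=\kappa(\delta)>0$ at $\delta=0.1$. The standard device is a ``shelling'' argument: order the coordinates of $h_{S^c}$ by decreasing magnitude, peel them off in consecutive blocks $T_1,T_2,\dots$ each of size a fixed multiple of $s_0$ (this is where the generous order $10s_0$ in the hypothesis is spent), and apply RIP to $h_{S\cup T_1}$ and to each $h_{T_j}$. The elementary bound $\|h_{T_{j+1}}\|_2\le |T_j|^{-1/2}\|h_{T_j}\|_1$ gives $\sum_{j\ge2}\|h_{T_j}\|_2\lesssim s_0^{-1/2}\|h_{S^c}\|_1\lesssim \|h_S\|_2$, so by the triangle inequality $\|\bX h\|_2\ge \|\bX h_{S\cup T_1}\|_2-\sum_{j\ge2}\|\bX h_{T_j}\|_2$ is bounded below by a positive multiple of $\sqrt n\,\|h_{S\cup T_1}\|_2$ (the constant staying positive because the cone ratio is the fixed number $3$ and $\delta=0.1$ is small), and since $\|h\|_2^2\le \|h_{S\cup T_1}\|_2^2+\big(\sum_{j\ge2}\|h_{T_j}\|_2\big)^2$ is comparable to $\|h_{S\cup T_1}\|_2^2$, this produces the claimed $\kappa$. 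Combining with the previous paragraph, $\kappa\|h\|_2^2\le \tfrac1n\|\bX h\|_2^2\le 3\lambda\sqrt{s_0}\,\|h\|_2$, hence $\|h\|_2\le 3\lambda\sqrt{s_0}/\kappa$ and
\begin{align*}
\|\htheta-\theta\|_2^2 \;\le\; \frac{9\,s_0\,\lambda^2}{\kappa^2} \;=\; \frac{45}{\kappa^2}\cdot\frac{s_0\sigma^2}{n}\log p \;=:\; \frac{C\,s_0\sigma^2}{n}\log p\, ,
\end{align*}
which holds on $\mathcal{E}$, i.e.\ with high probability. Every step except the passage from RIP to the restricted eigenvalue is routine bookkeeping; that passage is where the real work, and the role of the RIP hypothesis, lies.
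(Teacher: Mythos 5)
The paper states this theorem without proof, citing only Bickel--Ritov--Tsybakov and Cand\`es--Tao, so there is no in-paper argument to compare against. Your route --- basic inequality, cone condition, restricted eigenvalue via shelling --- is the standard and correct one, but the arithmetic does not close with the stated constants in two places.

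First, the noise event $\mathcal{E}=\{\tfrac1n\|\bX^\sT w\|_\infty\le\lambda/2\}$ does \emph{not} hold with high probability for $\lambda=\sigma\sqrt{5\log p/n}$. With $\|\bX_j\|_2^2\le 1.1n$ from RIP, each coordinate $\tfrac1n\bX_j^{\sT}w$ has variance at most $1.1\sigma^2/n$, so $\prob\{|\tfrac1n\bX_j^{\sT}w|>\lambda/2\}\lesssim \exp\{-(\lambda/2)^2 n/(2.2\sigma^2)\}= p^{-5/8.8}\approx p^{-0.57}$, and the union bound over $p$ coordinates gives $p^{0.43}\to\infty$. One must instead work on $\{\tfrac1n\|\bX^\sT w\|_\infty\le c\lambda\}$ with $c>\sqrt{2.2/5}\approx 0.66$, which is fine, but then the cone ratio in the basic inequality becomes $(1+c)/(1-c)>3$, and all downstream constants must be readjusted.

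Second, and more substantively, the triangle-inequality form of the shelling argument you describe yields a \emph{negative} RE constant under RIP$(10s_0,0.1)$. To apply the RIP upper bound to $h_{S\cup T_1}$ you need block size $m\le 9s_0$. The cone condition gives $\sum_{j\ge 2}\|h_{T_j}\|_2\le 3\sqrt{s_0/m}\,\|h_{S\cup T_1}\|_2$, while for the triangle inequality $\|\bX h\|_2\ge\sqrt{(1-\delta)n}\,\|h_{S\cup T_1}\|_2-\sqrt{(1+\delta)n}\sum_{j\ge 2}\|h_{T_j}\|_2$ to produce something positive you need $3\sqrt{s_0/m}<\sqrt{(1-\delta)/(1+\delta)}\approx 0.905$, i.e.\ $m>11s_0$ --- incompatible with $m\le 9s_0$, so the claim that ``the constant stays positive because $\delta=0.1$ is small'' is false here. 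The standard repair is Cand\`es' polarization lemma: RIP$(k,\delta)$ implies $|\tfrac1n\langle\bX u,\bX v\rangle|\le\delta\|u\|_2\|v\|_2$ for disjointly supported $u,v$ with combined support at most $k$. Expanding $\|\bX h\|_2^2$ and bounding the cross terms $\langle \bX h_{S\cup T_1},\bX h_{T_j}\rangle$ by polarization (taking, say, $m=4s_0$ so that $|S\cup T_1|+|T_j|\le 9s_0<10s_0$) gives $\tfrac1n\|\bX h\|_2^2\ge\big[(1-\delta)-2\delta\cdot 3\sqrt{s_0/m}\,\big]\|h_{S\cup T_1}\|_2^2$, which is strictly positive even after the cone ratio is enlarged to accommodate the first fix. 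With both repairs, your argument closes, up to the value of $C$.
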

A few observations are in order.
It is --once again-- instructive to compare this bound with the risk
of ordinary least squares, cf. Eq.~(\ref{eq:ls_risk}). 
Apart from the $\log p$ factor, the error scales as if $\theta_0$ was
$s_0$-dimensional. As in the case of orthogonal designs discussed
above, we obtain roughly the same scaling as if the support of
$\theta$ was known.
Also the  choice of $\lambda$ scales as in the case where $\bX$ is
orthogonal.

Finally, as  $\sigma \rightarrow 0$, we have $\htheta\to \theta$
provided the RIP condition is satisfied. As mentioned above, this
happens for random design matrices if
$n \geq C s_0 \log p$. In other words, we can reconstruct exactly an $s_0$-sparse vector
from about $s_0\log p$ random linear observations.
\subsubsection{Modeling the design matrix $\bX$}

The restricted isometry property and its refinements/generalizations allow to build a
develop a powerful theory of high-dimensional statistical estimation
(both in the context of linear regression and beyond).
This approach has a number of strengths:
\begin{enumerate}
\item[$(a)$] Given a matrix $\bX$, we can characterize it in terms of its
  RIP constant, and hence obtain a bound on the resulting estimation
  error. The bound holds uniformly over all signals $\theta$.
\item[$(b)$] The resulting bound is often nearly optimal.
\item[$(c)$] Many  class of random matrices of interest have been proved to
  possess RIP.
\item[$(d)$] RIP allows to decouple the analysis of the statistical error,
  e.g. the risk of the LASSO estimator $\htheta$, (which is the main
  object interest of statisticians) from the development
  of algorithms to compute $\htheta$ (which is the focus within the
  optimization community). 
\end{enumerate}

The RIP theory has also some weaknesses. It is useful to understand
them since this exercise leads to several interesting research directions that
are --to a large extent-- still open: 
\begin{enumerate}
\item[$(a)$]  In practice, given a matrix $\bX$ it is  NP-hard to whether it
  has  RIP. Hence, one has often to rely on the intuition provided by
  random matrix constructions.
\item[$(b)$] The resulting bounds typically optimal  \emph{within a
    constant}, that can be quite large.
This makes it difficult to compare different estimators for the same
problem. If  estimator  $\htheta^{(1)}$ has risk that is --say-- twice
as large as the one of $\htheta^{(2)}$, this is often not captured by
this theory.
\item[$(c)$] As a special case of the last point, RIP theory provides little
  guidance for the practically important problem of selecting the
  right amount of regularization $\lambda$. It is observed in practice
  that changing $\lambda$ by a  modest amount has important effects on the
  quality of estimation, but this is hardly captured by RIP theory.
\item[$(d)$] Since RIP theory aims at bounding the risk uniformly over all
  (sparse) vectors $\theta$, it is typically driven by the `worst
  case'  vectors, and is overly conservative for most $\theta$'s. 
\end{enumerate}
Complementary information on the LASSO, and other high-dimensional
estimation methods, can be gathered by studying simple random models
fr the design matrix $\bX$. This will be the object of the next lecture.

\label{lecture_4}

\section{Random designs and Approximate Message Passing}
In this lecture we will revisit the linear model
(\ref{eq:LinearModel}) and the LASSO estimator, while assuming a
very simple probabilistic model for the design matrix $\bX$. Before
proceeding, we should therefore ask: Is there any application for
which probabilistic design matrices are well suited?
Two type of examples come to mind
\begin{itemize}
\item In statistics and machine learning, we are given pairs
  $($response variable, covariate vector$)$, 
  $(y_1,x_1)$, \dots, $(y_n,x_n)$ and postulate a relationship as for
  instance in Eq.~(\ref{eq:LinearRelation}).
  These pairs can often be thought as samples from a larger
  `population,' e.g. customers of a e-commerce site are samples of
  a population of potential customers. 

One way to model this, is to assume that the covariate vectors
$x_i$'s, i.e. the rows of $\bX$ are i.i.d. samples from a
distribution. 
\item In compressed sensing, the matrix $\bX$ models a sensing or
\index{Compressed sensing}
  sampling device, that is designed within some physical constraints. 
Probabilistic constructions have been proposed and implemented by
several authors, see e.g. \cite{tropp2010beyond} for an example. 
A cartoon example of these constructions is obtained by sampling
i.i.d. random rows from the discrete $p\times p$ Fourier transform.
\end{itemize}

In other words, random design matrices $\bX$ with i.i.d. rows can be used to
model several applications.
Most of the work has however focused on the special case in which the
rows are i.i.d. with distribution $\normal{0}{\id_{p\times
    p}}$. Equivalently, the matrix $\bX$ has i.i.d. entries
$\bX_{i,j}\sim\normal{0}{1}$.
Despite its simplicity, this model has been an important playground for
the development of many ideas in compressed sensing, starting with
the pioneering work of  Donoho \citeyear{donoho2006high}, and Donoho
and Tanner \citeyear{donoho2005neighborliness,donoho2005sparse}.
Recent years have witnessed an explosion of contributions also thanks
to the convergence of powerful ideas from high-dimensional convex
geometry and Gaussian processes, see e.g.
 \cite{chandrasekaran2012convex,candes2013simple,stojnic2013framework,oymak2013squared,amelunxen2013living}.
Non-rigorous ideas from statistical physics were also used in \cite{kabashima2009typical,rangan2009asymptotic,guo2009single,krzakala2012statistical}.

\index{Compressed sensing}

Here we follow a rigorous approach that builds upon ideas from statistical
physics, information theory  and graphical models, and is based on the analysis of an
\index{Graphical model}
highly efficient reconstruction  algorithm. We will sketch the main
ideas referring to
\cite{donoho2009message} for the original idea,
to \cite{donoho2011noise,bayati2011dynamics,bayati2012lasso} for the
analysis of the LASSO, and to
\cite{donoho2013accurate,javanmard2013state,donoho2013information} for extensions.
This approach was also used in \cite{bayati2014universality} to
establish universality of the compressed sensing phase transition for
non-Gaussian i.i.d. entries $\bX_{i,j}$.\index{Phase Transition}

\subsection{Message Passing algorithms}

The plan of our analysis is as follows:
\begin{enumerate}
\index{Approximate Message Passing}
\item We define an approximate message passing (AMP)  algorithm to
  solve the LASSO optimization problem. 
The derivation presented here
  starts from the subgradient method described in Section
  \ref{sec:LASSO} and obtain a slight --but crucial-- modification of
  the same algorithm.  Also in this case the algorithm is iterative
  and computes a sequence of iterates $\{\theta^{(t)}\}$.

An alternative approach (susceptible of
  generalizations --for instance-- to Bayesian estimation) is
  presented in \cite{donoho2010message}.
\item Derive an exact asymptotic characterization of the same
  algorithm as  $n,p \to\infty$, \emph{for $t$ fixed}. The characterization
  is given in terms of the so-called state evolution method developed 
  rigorously in \cite{bayati2011dynamics} (with generalizations in \cite{javanmard2013state,bayati2014universality}).
\item Prove that AMP converges fast to the optimized $\htheta$, namely
  with high probability as $n,p\to\infty$ we have
  $\|\theta^{(t)}-\htheta\|_2^2/p\le c_1,\, e^{-c_2 t}$, with
  $c_1,c_2$ two dimension-independent constants. A full proof of this
  step can be found in \cite{bayati2012lasso}.
\item Select $t$ a large enough constant and use the last two result
  to deduce properties of the optimizer $\htheta$.
\end{enumerate}

We next provide a sketch of the above steps. We start by considering iterative soft thresholding with $L  = 1$:
\begin{align}
\begin{cases}
	&\theta^{(t + 1)} = \eta (\theta^{(t)} + \frac{1}{n} \bX^\sT
        r^{(t)}; \gamma_t)\, ,\\
	&r^{(t)} = y - \bX \theta^{(t)}\, ,
\end{cases}\label{eq:IST_Repeat}
\end{align}
where we introduced the additional freedom of an iteration-dependent
threshold $\gamma_t$ (instead of $\lambda$). 
Component-wise, the iteration takes  the form
\begin{align}
\begin{cases}
	&\theta^{(t + 1)}_i = \eta (\theta^{(t)}_i + \frac{1}{n}
        \sum_{a = 1}^n \bX_{ai} r^{(t)}_a; \gamma_t)\, ,\\
	&r^{(t)}_a = y_a - \sum_{i = 1}^p \bX_{ai} \theta^{(t)}_i\, .
\end{cases}\label{eq:IST_coordinate}
\end{align}
We next derive a message passing version of this iteration\footnote{We
  use the expression `message passing' in the same sense attributed in
  information theory and graphical models.} (we refer for instance to
\cite{richardson2008modern,mezard2009information}
for background). The
motivation for this modification is that message passing algorithms
have appealing statistical properties. For instance, they admit an
exact asymptotic analysis on locally tree-like graphs. While
--in the present case-- the underlying graph structure is not locally
tree-like, the conclusion (exact asymptotic characterization)
continues to hold.

In order to define the message-passing version, we need to associate a
factor graph to the LASSO cost function:
\begin{align}
F(\theta) = \frac{1}{2n} \sum_{a  = 1}^n \big(y_a - \<x_a, \theta\>
\big)^2 + \lambda \sum_{i = 1}^p |\theta_i|\, .
\end{align}
Following a general prescription from \cite{mezard2009information}, we
associate a factor node to each term $(y_a - \<x_a, \theta\>)^2/(2n)$ in the cost function
indexed by $a\in \{1,2,\dots,n\}$ (we do not need to represent the singletons $|\theta_i|$ by factor
nodes), and we associate a variable node to each variable, indexed by
$i\in \{1,2,\dots, p\}$. We connect factor node $a$ and variable node
$i$ by an edge $(a,i)$   if and only if term $a$ depends on variable
$\theta_i$, i.e. if $\bX_{ai} \neq 0$. Note for Gaussian design
matrices, all the entries $\bX_{ai}$  are non-zero with probability
one. Hence, the resulting factor graph is a complete bipartite graph
with $n$ factor nodes and $p$ variable nodes. 
 
The message-passing version of the iteration (\ref{eq:IST_coordinate})
has iteration variables (messages) associated to directed edges of the factor graph. Namely,
for each edge $(a,i)$ we introduce a message $r^{(t)}_{a\to i}$ and a
message $\theta^{(t)}_{i\to a}$. We replace the update rule
(\ref{eq:IST_coordinate}) by the following 
\begin{align}
\begin{cases}
	&\theta^{(t + 1)}_{i\to a} = \eta \Big(\frac{1}{n} \sum_{b \in
          [n]\setminus a} \bX_{bi} r^{(t)}_{b\to i}; \gamma_t\Big)\, ,\\
	&r^{(t)}_{a\to i} = y_a - \sum_{j \in [p]\setminus i}
        \bX_{aj} \theta^{(t)}_{j\to a}\, .
\end{cases}\label{eq:MP}
\end{align}
The key property of this iteration is that an outgoing message from
node $\alpha$ is updated by evaluating a function
of all messages incoming in the same node $\alpha$, except the one
along the same edge.
An alternative derivation of this iteration follows by considering the
standard belief propagation algorithm  (in its sum-product or min-sum
forms), and using a second order approximation of the messages as in \cite{donoho2010message}.

Note that, with respect to standard iterative soft thresholding,
cf. Eq.~(\ref{eq:IST_Repeat}), the algorithm (\ref{eq:MP}) has higher
complexity, since it requires to keep track of $2np$ messages, as
opposed to the $n+p$ variables in Eq.~(\ref{eq:IST_Repeat}). Also,
there is obvious interpretation to the fixed points of the iteration
(\ref{eq:MP}).

It turns out that a simpler algorithm can be defined, whose state as
dimension $n+p$ as for iterative soft thresholding, but tracks
closely the iteration (\ref{eq:MP}). 
This builds on the remark that the messages $\theta^{(t)}_{i\to a}$
issued from a node $i$ do not differ to much, since their definition
in Eq.~(\ref{eq:MP}) only differ in one out of $n$ terms.
A similar argument applies to the messages $r^{(t)}_{a\to i}$ issued
by node $a$/
We then write $\theta^{(t)}_{i\to a} = \theta_i^{(t)} +
\delta\theta_{i\to a}^{(t)}$, $r^{(t)}_{a\to i}= r_a^{(t)} +
\delta r_{a\to i}^{(t)}$ and linearize the iteration (\ref{eq:MP}) in $\{\delta\theta_{i\to a}^{(t)}\}$, $\{\delta r_{a\to i}^{(t)}\}$.
After eliminating these quantities \cite{donoho2010message}, the resulting iteration takes the
form, known as approximate message passing (AMP)
\begin{align}\tag*{(AMP)}
\begin{cases}
	&\theta^{(t + 1)} = \eta \Big(\theta^{(t)} + \frac{1}{n} \bX^\sT
        r^{(t)}; \gamma_t\Big)\, ,\\
	&r^{(t)} = y - \bX \theta^{(t)} + \ons_t r^{(t-1)},
\end{cases}\label{eq:AMP}
\end{align}
where $\ons_t \equiv \|\theta^{(t)}\|_0/n$ is a scalar.
In other words we recovered iterative soft thresholding except for
the memory term  $\ons_t r^{(t-1)}$ that is straightforward to
evaluate. In the context of statistical physics, a similar correction
is known as
the Onsager term. Remarkably, this memory term changes the statistical
behavior of the algorithm.
\index{Onsager}

It is an instructive exercise (left to the reader) to prove that fixed
points of the AMP algorithm (with $\gamma_t = \gamma_*$ fixed) 
\index{Approximate Message Passing}
are minimizers of the LASSO. In particular, for Gaussian sensing
matrices, such minimizer is unique with probability one.

We notice in passing that there is nothing special about the least
squares objective, or the $\ell_1$ regularization in our
derivation. Indeed similar ideas were developed and applied to a large
number of problems, see 
\cite{som2012compressive,som2012compressive,RanganGAMP,donoho2013accurate,donoho2013high,metzler2014denoising,tan2014compressive,barbier2014replica}
for a a very incomplete list of examples.

\subsection{Analysis of AMP and the LASSO}
\index{Approximate Message Passing}

We next carry out a heuristic analysis of AMP, referring to
\cite{bayati2011dynamics} for a  rigorous treatment that uses ideas
developed by Bolthausen in the context of mean-field spin glasses
\cite{bolthausen2014iterative}. 

We use the message passing version of the algorithm, cf. Eq.~(\ref{eq:MP})
and we will use the  assumption that the pairs $\{(r_{a\to i}^{(t)},
\bX_{ai})\}_{a\in [n]}$ are ``as if'' independent, and likewise for
$\{(\theta_{a\to i}^{(t)}, \bX_{ai})\}_{i\in [p]}$. This assumption is
only approximately correct, but leads to the right asymptotic conclusions.

Consider the first equation in (\ref{eq:MP}), and further assume 
(this assumption will be verified inductively)
\begin{align}
\E (r_{a\to i}^{(t)}) = \bX_{ai} \theta_{i}\, , \;\;\;\;
\Var(r^{(t)}_{a\to i}) =\tau_t^2\, .
\end{align}
Letting $\tr^{(t)}_{a\to i} \equiv r^{(t)}_{a\to i}-\E (r_{a\to i}^{(t)}) $,
the argument of $\eta(\,\cdot\,;\gamma_t)$ in Eq.~(\ref{eq:MP}) can be
written as 
\begin{align}
\frac{1}{n} \sum_{b \in [n]\setminus a} \bX_{bi} r_{b\to i}^{(t)} =
\frac{1}{n} \sum_{b \in [n]\setminus a} \bX_{bi}^2 \theta_{i} +
\frac{1}{n} \sum_{b \in [n]\setminus a} \bX_{bi} \tilde r_{b\to
  i}^{(t)} \approx  \theta_{i} + \frac{\tau_t}{\sqrt n}\, Z_{i\to a}^{(t)},
\end{align}
where, by central limit theorem,  $Z_{i\to a}^{(t)}$ is approximately distributed as $\normal{0}{1}$.

Rewriting the first  equation in  (\ref{eq:MP}) , we obtain
\begin{align}
\theta^{(t + 1)}_{i\to a} = \eta\Big(\theta_{i} + \frac{\tau_t}{\sqrt
  n}Z_{i\to a}^{(t)}; \gamma_t)\, .
\end{align}
In the second message equation, we substitute $y_a = w_a + \sum_{j =
  1}^p \bX_{aj} \theta_{j}$, thus obtaining
\begin{align}
r^{(t+1)}_{a\to i} = w_a + \bX_{ai} \theta_{i} - \sum_{j \in
  [p]\setminus i} \bX_{aj} (\theta^{(t+1)}_{j\to a} - \theta_{j})\, .
\end{align}
The first and the last terms have $0$ mean thus confirming the
induction hypothesis
$\E (r_{a\to i}^{(t+1)}) = \bX_{ai} \theta_{i}$.
The variance of $r^{t+1}_{a\to i}$ is given by (neglecting sublinear terms)
\begin{align}
\tau_{t+1}^2 = \sigma^2 + \sum_{j = 1}^p \left[\eta(\theta_{j} + \frac{\tau_t}{\sqrt n} Z_j ; \gamma_t) - \theta_{j}\right]^2.
\end{align}
It is more convenient to work with the rescaled quantities $\tilde
\theta_{i} = \theta_{i} \sqrt n$ and $\tilde \gamma_t = \gamma_t \sqrt
n$ (this allows us to focus on the most interesting regime, whereby
$\theta_i$ is  of the same order as
the noise level $\tau_t/\sqrt{n}$). Using the scaling property of the
thresholding function $\eta(ax, a\gamma) = a\eta(x,\gamma)$,
the last equation becomes
\begin{align}
\tau_{t+1}^2 = \sigma^2 + \frac{1}{n}\sum_{j = 1}^p \left[\eta(\ttheta_{j} + \tau_t Z_j ; \tgamma_t) - \ttheta_{j}\right]^2.
\end{align}
We now define the probability measure $p_{\Theta}$ as the asymptotic 
empirical distribution of
$\ttheta$, $p^{-1}\sum_{j =1}^p \delta_{\tilde\theta_{j}}$
(formally,  we assume that $p^{-1}\sum_{j =1}^p \delta_{\tilde\theta_{j}}$ converges weakly to  
$p_{\Theta}$, and that low order moments converge as well).
We also let  $\delta = \lim_{n\to\infty}(n/p)$ be the asymptotic
aspect ratio of $\bX$. We then obtain
\begin{align}
\tau_{t+1}^2 = \sigma^2 + \frac{1}{\delta}\E\left\{\left(\eta(\Theta + \tau_t Z; \tilde\gamma_t) - \Theta\right)^2\right\},
\end{align}
where expectation is with respect to $\Theta\sim p_{\Theta}$
independent of $Z\sim\normal{0}{1}$. 
The last equation is known as \emph{state evolution}: despite the many
unjustified assumptions in our derivation, it can be proved to
correctly describe the $n,p\to\infty$ asymptotics of the message
passing algorithm (\ref{eq:MP}) as well as of the AMP algorithm.
\index{Approximate Message Passing}

Reconsidering the above derivation, we can derive asymptotically exact
expressions for the risk at
$\theta$ for of the AMP
estimator $\theta^{(t+1)}$. Namely, we define the asymptotic risk
\begin{align}
R_{\infty}(\theta; \theta^{(t+1)}) = \lim_{n,p\to\infty}\E\big\{\|
\theta - \theta^{(t)} \|^2\big\}\, ,
\end{align}
the limit being taken along sequences of vectors $\theta$ with
converging empirical distribution. Then we claim that the limit exists
and is given by
\begin{align}\label{risk-form-1}
R_{\infty}(\theta; \theta^{(t+1)}) = \frac{1}{\delta} \ee
\left\{\left(\eta(\Theta+ \tau_t Z ; \tilde\gamma_t) -
    \Theta\right)^2\right\}\, ,
\end{align}
or, equivalently,
\begin{align}\label{risk-form-2}
R_{\infty}(\theta; \theta^{(t+1)}) = \tau_{t+1}^2 - \sigma^2.
\end{align}
 Thus, apart from an additive constant, $\tau_t^2$ coincides with
 risk and the latter can be tracked using state evolution.

In \cite{bayati2012lasso}, it is proved that the AMP iterates
$\theta^{(t)}$ converge rapidly to the LASSO estimator $\htheta$. We
are therefore led to consider the large $t$ behavior of $\tau_t$,
which yields the risk of the LASSO, or --equivalently-- the risk of
AMP after a sufficiently large (constant in $n,p$) number of iterations. 
Before addressing this question, we need to  set the values of
$\tgamma_t$. A reasonable choices to fix 
$\tgamma_t = \kappa \tau_t$, for some constant $\kappa$, since
$\tau_t$ can be thought as the 
``effective noise level'' at iteration $t$. There is a one-to-one
correspondence between
$\kappa$ and the regularization parameter $\lambda$ in the LASSO \cite{donoho2011noise}.
We thus define the function
\begin{align}
G(\tau^2;\sigma^2) \equiv \sigma^2 + \frac{1}{\delta}\E\left\{\left(\eta(\Theta
    + \tau\, Z; \kappa\tau) - \Theta\right)^2\right\}\, ,
\end{align}
which of course depends implicitly on $p_{\Theta}$, $\kappa$,
$\delta$. State evolution is then the one-dimensional
recursion $\tau_{t+1}^2 = G(\tau_t^2;\sigma^2)$. 
For the sequence $\tau_t$ to stay bounded we assume
$\lim_{\tau^2\to\infty}G(\tau^2;\sigma^2)/\tau^2<1$ which can always be ensured
by taking $\kappa$ sufficiently large. 

Let us first consider the noiseless case $\sigma=0$. Since $G(0;0) = 0$,
we know that $\tau = 0$ is always a fixed point. 
It is not hard to  shown \cite{donoho2009message} that indeed 
$\lim_{t\to\infty}\tau_t^2=0$ if and only if this is the unique
non-negative fixed point, see figure below.  If this condition is satisfied, AMP
reconstructs exactly the signal $\theta$, and due to the
correspondence with the LASSO, also basis pursuit (the LASSO with
$\lambda\to 0$) reconstructs exactly $\theta$. 
\index{Approximate Message Passing}
\vspace{0.5cm}
\begin{center}
   \def\svgwidth{0.5\textwidth}
\begingroup%
  \makeatletter%
  \providecommand\color[2][]{%
    \errmessage{(Inkscape) Color is used for the text in Inkscape, but the package 'color.sty' is not loaded}%
    \renewcommand\color[2][]{}%
  }%
  \providecommand\transparent[1]{%
    \errmessage{(Inkscape) Transparency is used (non-zero) for the text in Inkscape, but the package 'transparent.sty' is not loaded}%
    \renewcommand\transparent[1]{}%
  }%
  \providecommand\rotatebox[2]{#2}%
  \ifx\svgwidth\undefined%
    \setlength{\unitlength}{218.58122559bp}%
    \ifx\svgscale\undefined%
      \relax%
    \else%
      \setlength{\unitlength}{\unitlength * \real{\svgscale}}%
    \fi%
  \else%
    \setlength{\unitlength}{\svgwidth}%
  \fi%
  \global\let\svgwidth\undefined%
  \global\let\svgscale\undefined%
  \makeatother%
  \begin{picture}(1,0.99139347)%
    \put(0,0){\includegraphics[width=\unitlength]{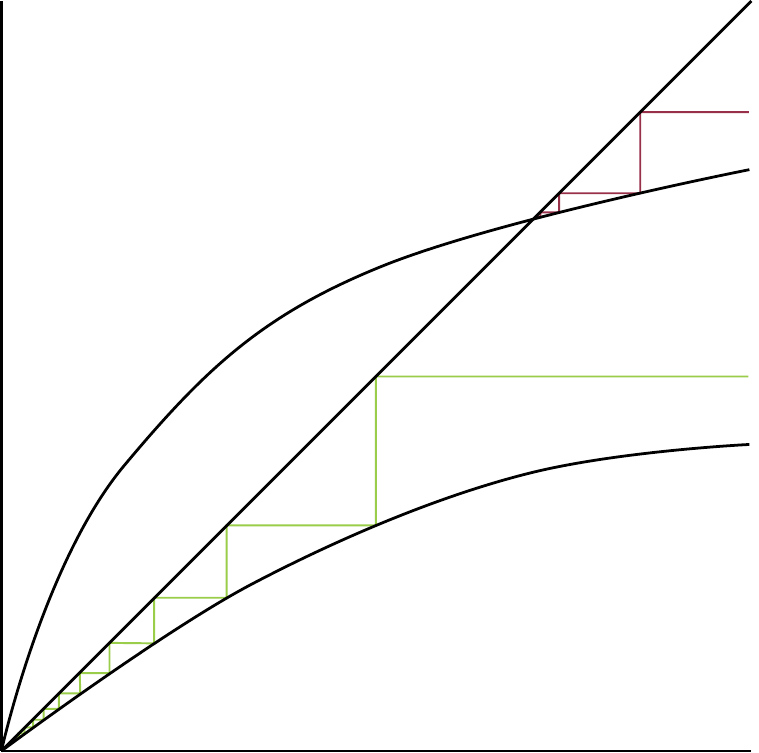}}%
    \put(0.89803919,0.02703224){\color[rgb]{0,0,0}\makebox(0,0)[lb]{\smash{$\tau^2$}}}%
    \put(0.35641031,0.72948582){\color[rgb]{0,0,0}\makebox(0,0)[lb]{\smash{bad/unstable}}}%
    \put(0.60226914,0.28028545){\color[rgb]{0,0,0}\makebox(0,0)[lb]{\smash{good/stable}}}%
    \put(0.01627489,0.9291307){\color[rgb]{0,0,0}\makebox(0,0)[lb]{\smash{$F(\tau^2)$}}}%
  \end{picture}%
\endgroup%

   \end{center}
\vspace{0.5cm}
Notice that this condition is sharp: If it is not satisfied, then AMP
and the LASSO fail to reconstruct $\theta$, despite vanishing noise. 
In order to derive the phase transition location, remember that by the
definition of minimax risk of soft thresholding, cf. Section
\ref{sec:RandomScalar}, we have, assuming $\kappa = \ell(\eps)$ to be
set in the optimal way\index{Phase Transition}
\begin{align}
G(\tau^2;0) = \frac{1}{\delta}\E\left\{\left(\eta(\Theta
    + \, Z; \kappa\tau) - \Theta\right)^2\right\}\le \frac{M(\eps)}{\delta} \tau^2\, .\label{eq:AMP_MMAX}
\end{align}
Hence $\tau_{t+1}^2\le (M(\eps)/\delta)\tau_t^2$ and, if 
\begin{align}
\delta>M(\eps)\, ,
\end{align}
  then $\tau_t^2\to 0$ and AMP (LASSO) reconstructs
$\theta$ with vanishing error. This bound is in fact tight: For
$\delta < M(\eps)$, any probability distribution $p_{\Theta}$ with
$p_{\Theta}(\{0\}) = 1-\eps$, and any threshold parameter $\kappa$,
the mean square error remains bounded away from zero.

Recalling the definition of $\delta = n/p$, the condition
$\delta>M(\eps)$ corresponds to requiring a sufficient number of
samples, as compared to the sparsity.
It is interesting to recover the very sparse regime from this point of view.
Recall from previous lectures that $M(\eps) \approx 2\eps
\log(1/\eps)$ for small $\epsilon$. The condition $\delta > M(\eps)$ then translates to
$\delta \gtrsim 2\eps \log (1/\eps)$ or, in other words, $(n/p)
\gtrsim 2  (s_0/p)\log (p/s_0)$. Thus, we obtain the condition
--already discussed before-- that the number of samples must be as
large as the number of non-zero coefficients, times a logarithmic
factor.
Reconstruction is possible if and only if
\begin{align}
n \gtrsim 2 s_0 \log \frac{p}{s_0},
\end{align}
a condition that we have seen in previous lectures.

In the noisy case, we cannot hope to achieve perfect
reconstruction. In this case,
we say that estimation is stable if  there is a constant $C$ such
that, for any $\theta\in\reals^p$, $R(\theta;\htheta ) \leq C
\sigma^2$.
This setting is sketched in the figure below. Exact reconstruction at
$\sigma = 0$ translate into a fixed point $\tau_*^2 = O(\sigma^2)$ and
hence stability. Inexact reconstruction corresponds to a fixed point
of order $1$ and hence lack of stability.
\vspace{0.5cm}
\begin{center}
   \def\svgwidth{0.5\textwidth}
\begingroup%
  \makeatletter%
  \providecommand\color[2][]{%
    \errmessage{(Inkscape) Color is used for the text in Inkscape, but the package 'color.sty' is not loaded}%
    \renewcommand\color[2][]{}%
  }%
  \providecommand\transparent[1]{%
    \errmessage{(Inkscape) Transparency is used (non-zero) for the text in Inkscape, but the package 'transparent.sty' is not loaded}%
    \renewcommand\transparent[1]{}%
  }%
  \providecommand\rotatebox[2]{#2}%
  \ifx\svgwidth\undefined%
    \setlength{\unitlength}{227.65734863bp}%
    \ifx\svgscale\undefined%
      \relax%
    \else%
      \setlength{\unitlength}{\unitlength * \real{\svgscale}}%
    \fi%
  \else%
    \setlength{\unitlength}{\svgwidth}%
  \fi%
  \global\let\svgwidth\undefined%
  \global\let\svgscale\undefined%
  \makeatother%
  \begin{picture}(1,0.95186912)%
    \put(0,0){\includegraphics[width=\unitlength]{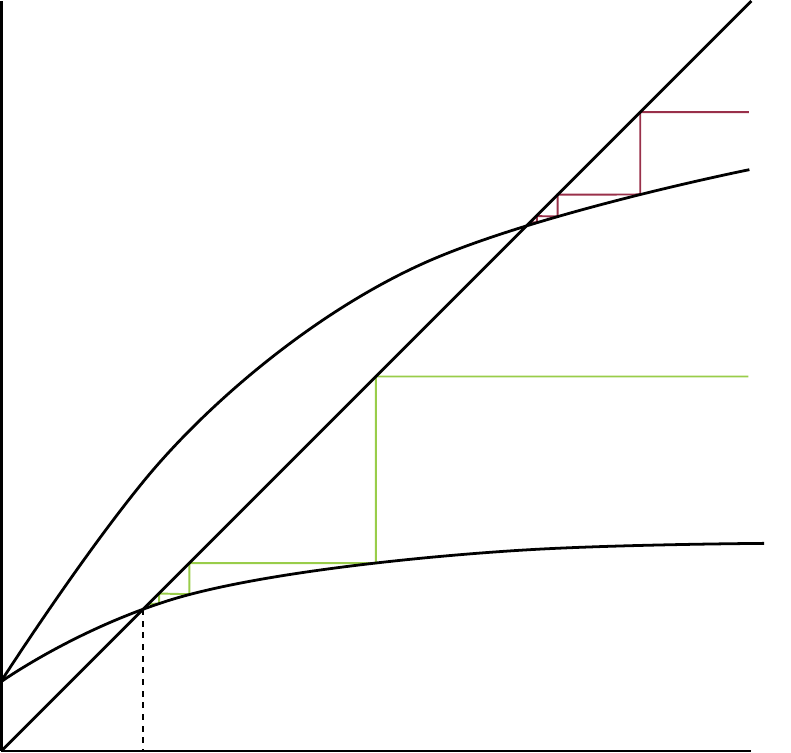}}%
    \put(0.86223664,0.02595453){\color[rgb]{0,0,0}\makebox(0,0)[lb]{\smash{$\tau^2$}}}%
    \put(0.29451042,0.71358076){\color[rgb]{0,0,0}\makebox(0,0)[lb]{\smash{bad/unstable}}}%
    \put(0.59833852,0.20636032){\color[rgb]{0,0,0}\makebox(0,0)[lb]{\smash{good/stable}}}%
    \put(0.01562605,0.89208861){\color[rgb]{0,0,0}\makebox(0,0)[lb]{\smash{$F(\tau^2)$}}}%
    \put(0.00760704,0.05282662){\color[rgb]{0,0,0}\makebox(0,0)[lb]{\smash{$\sigma^2$}}}%
    \put(0.19147691,0.13113226){\color[rgb]{0,0,0}\makebox(0,0)[lb]{\smash{$\tau^2 = O(\sigma^2)$}}}%
    \put(0.65950748,0.60116174){\color[rgb]{0,0,0}\makebox(0,0)[lb]{\smash{$\tau^2 \gg \sigma^2)$}}}%
  \end{picture}%
\endgroup%

\end{center}
\vspace{0.5cm}

Again by choosing a suitable threshold value $\kappa$, we can ensure
that the minimax bound (\ref{eq:AMP_MMAX}) is valid and hence
\begin{align}
\tau_{t+1}^2 \leq \sigma^2 + \frac{M(\eps)}{\delta} \tau_t^2\, .
\end{align}
Taking the limit $t\to\infty$, in the case that $\delta > M(\eps)$ we have that 
\begin{align}
\tau^2_* \leq \frac{\sigma^2}{1 - (M(\eps)/\delta)}.
\end{align}
This establishes that the following is an upper bound on the
asymptotic mean square error of AMP, and hence (by the equivalence
discussed above) of the LASSO,
\begin{align}
R_\infty(\theta;\htheta) = \begin{cases}
\frac{\displaystyle M(\epsilon)}{\displaystyle \delta - M(\eps)}\, \sigma^2, &\text{if } M(\eps) < \delta,\\
\infty &\text{otherwise.}
\end{cases} 
\end{align}
As proven in \cite{donoho2011noise,bayati2012lasso}, this result holds
indeed with equality (these papers have slightly different
normalizations of the noise variance $\sigma^2$).

A qualitative sketch of resulting phase diagram in $\eps$ and $\delta$ is in
the figure below.
As anticipated above, if $\delta > M(\eps)$, i.e. in the regime in
which exact reconstruction is feasible through basis pursuit in zero
noise, reconstruction is also stable with respect to noise.
\vspace{0.5cm}
\begin{center}
   \def\svgwidth{0.7\textwidth}
\begingroup%
  \makeatletter%
  \providecommand\color[2][]{%
    \errmessage{(Inkscape) Color is used for the text in Inkscape, but the package 'color.sty' is not loaded}%
    \renewcommand\color[2][]{}%
  }%
  \providecommand\transparent[1]{%
    \errmessage{(Inkscape) Transparency is used (non-zero) for the text in Inkscape, but the package 'transparent.sty' is not loaded}%
    \renewcommand\transparent[1]{}%
  }%
  \providecommand\rotatebox[2]{#2}%
  \ifx\svgwidth\undefined%
    \setlength{\unitlength}{226.34187012bp}%
    \ifx\svgscale\undefined%
      \relax%
    \else%
      \setlength{\unitlength}{\unitlength * \real{\svgscale}}%
    \fi%
  \else%
    \setlength{\unitlength}{\svgwidth}%
  \fi%
  \global\let\svgwidth\undefined%
  \global\let\svgscale\undefined%
  \makeatother%
  \begin{picture}(1,1.0100502)%
    \put(0,0){\includegraphics[width=\unitlength]{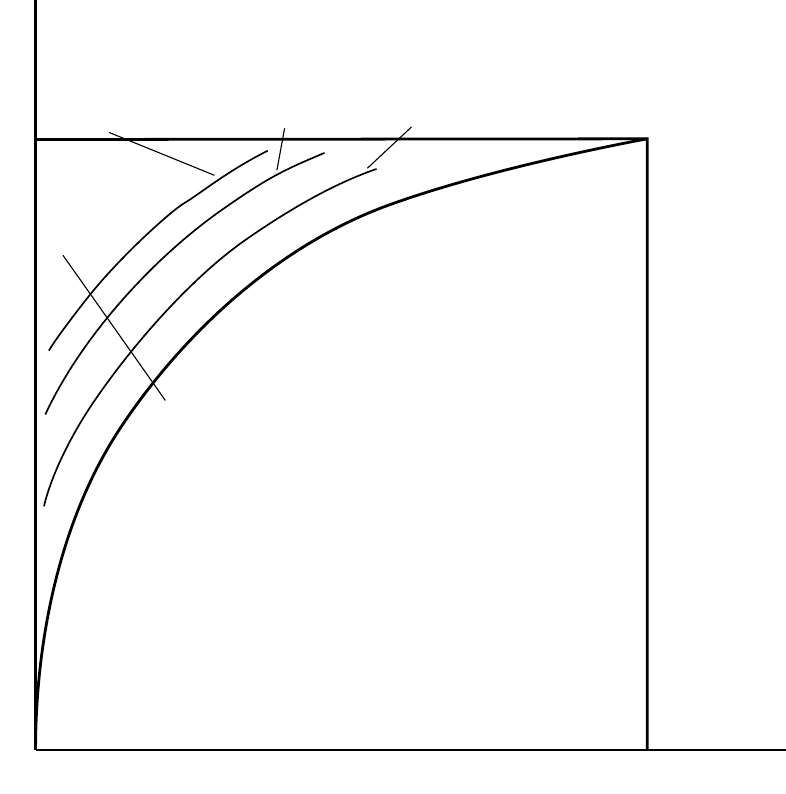}}%
    \put(0.87537519,0.09079078){\color[rgb]{0,0,0}\makebox(0,0)[lb]{\smash{$\epsilon 
= \frac{s_0}{p}$}}}%
    \put(0.06490334,0.9405366){\color[rgb]{0,0,0}\makebox(0,0)[lb]{\smash{$\delta = \frac{n}{p}$}}}%
    \put(0.43086533,0.65580027){\color[rgb]{0,0,0}\makebox(0,0)[lb]{\smash{$\delta = M(\epsilon)$}}}%
    \put(0.00554759,0.00490219){\color[rgb]{0,0,0}\makebox(0,0)[lb]{\smash{$0$}}}%
    \put(0.80887875,0.00311683){\color[rgb]{0,0,0}\makebox(0,0)[lb]{\smash{$1$}}}%
    \put(-0.00176033,0.81264676){\color[rgb]{0,0,0}\makebox(0,0)[lb]{\smash{$1$}}}%
    \put(0.29340647,0.22289421){\color[rgb]{0,0,0}\makebox(0,0)[lb]{\smash{No exact reconstruction}}}%
    \put(0.52905022,0.86466638){\color[rgb]{0,0,0}\makebox(0,0)[lb]{\smash{$\frac{R_\ast}{\sigma} = 3$}}}%
    \put(0.31750639,0.86555884){\color[rgb]{0,0,0}\makebox(0,0)[lb]{\smash{$\frac{R_\ast}{\sigma} = 2$}}}%
    \put(0.1131033,0.86645152){\color[rgb]{0,0,0}\makebox(0,0)[lb]{\smash{$\frac{R_\ast}{\sigma} = 1$}}}%
    \put(0.23717331,0.488886){\color[rgb]{0,0,0}\makebox(0,0)[lb]{\smash{Exact reconstruction
}}}%
    \put(0.23806588,0.43086743){\color[rgb]{0,0,0}\makebox(0,0)[lb]{\smash{ by $\ell_1$ minimization}}}%
  \end{picture}%
\endgroup%

\end{center}
\vspace{0.5cm}

Again, let us consider the sparse regime $\eps\to 0$. Assuming
$M(\eps)\ll \delta$, and substituting $M(\eps) \approx
2\eps\log(1/\eps)$ together with the definitions of $\eps$ and
$\delta$ we get 
\begin{align}
R_{\infty}(\theta;\htheta) =
\sigma^2\frac{M(\eps)}{\delta - M(\eps)} \approx
\frac{\sigma^2}{\delta} \, 2\eps \log \frac{1}{\eps} =
 \frac{s_0\sigma^2}{n} 2\log \frac{p}{s_0}\, .
\end{align}
We therefore rederived the same behavior already established in the
previous section under the RIP assumption. Apart from the factor
$2\log (p/s_0)$ the risk is the same `as if' we knew the support of
$\theta$. 

\label{lecture_5}

\section{The hidden clique problem}

One of the most surprising facts about sparse regression is that we
can achieve ideal estimation error, using a low complexity algorithm,
namely by solving a convex optimization problem such as the
LASSO. Indeed --at first sight-- one might have suspected it necessary
to search over possible supports of size $s_0$, a task that requires
at least $\binom{p}{s_0}$ operations, and is therefore non-polynomial.
Unfortunately, this is not always the case. There are problems in
which a huge gap exists between the statistical limits of estimation
(i.e. the minimax risk achieved by an arbitrary estimator) ant the
computational limits 
(i.e. the minimax risk achieved by any estimator computable in
polynomial time).
The hidden clique (or hidden submatrix) problem is a prototypical
example of this class of computationally hard estimation
problems. Recently, reductions to this problem were used to prove that
other estimation problems are hard as well \cite{berthet2013complexity}.

We next define the problem. Let $Q_0$ and $Q_1$ be two given
probability distributions on $\reals$. 
For a set $S \subseteq \{1, 2, \ldots, n\}$ we let $\bW \in
\reals^{n\times n}$ be  a symmetric
random matrix  with  entries
$(\bW_{ij})_{i \le j}$ independent, with distribution:
\begin{align}
\bW_{ij} &\sim Q_1, \;\;\;\mbox{if } i,j \in S\\
\bW_{ij} & \sim Q_0, \;\;\;\mbox{otherwise}\, .
\end{align}
The problem is to find the set $S$ given one realization of $\bW$.

\begin{description}
\item[Example 1] Suppose $Q_0 = \normal{0}{1}$ and $Q_1 =
  \normal{\mu}{1}$ to be two Gaussian distributions with different
  means and same known variance (which we set --without loss of
  generality-- equal to one). The model is then equivalent to the
  following
\begin{align}
\bW = \mu\, u_Su_S^{\sT} + \bZ\, ,
\end{align}
where $u_S$ is the indicator vector of the set $S$, namely $(u_S)_i =
1$ if $i\in S$, and $(u_S)_i = 0$ otherwise.
\item[Example 2] This is the original setting of the hidden clique
  problem from \cite{jerrum1992large}. Both $Q_0$ and $Q_1$ are Bernoulli
  distributions:
\begin{align}
Q_0 & = \frac{1}{2} \delta_{-1} + \frac{1}{2}\delta_{+1}\, ,\\
Q_1 & = \delta_{+1}.
\end{align}
There is a straightforward way to interpret this as a graph
problem. Let $G$ be the random graph on $n$ vertices $\{1, \ldots,
n\}$ whereby two vertices $i, j$ are joined by an edge if and only if
$\bW_{ij} = +1$. Then $G$ is an Erd\"os-Renyi random graph (with edge
density $1/2$) to which a clique has been added with support on $S$.
\end{description}
For simplicity of exposition, we will focus for the rest of this
lecture on the Bernoulli case, i.e. on the last example above.
We will use interchangeably the language of random graphs and the one
of random matrices.
All of our results can in fact be generalized to arbitrary probability
distributions $Q_0$, $Q_1$ under suitable tail conditions, as shown in \cite{deshpande2013finding}.

\begin{figure}[ht]
\centering
\subfigure[A random graph with a planted clique.]{%
	\includegraphics[width=0.5\figurewidth]{ChapMontanari/figures/p.pdf}}
\quad
\subfigure[The same graph, but with the vertices shuffled.]{%
	\includegraphics[width=0.5\figurewidth]{ChapMontanari/figures/p_rand.pdf}}
\quad
\subfigure[Retrieving the clique in the shuffled graph]{%
	\includegraphics[width=0.5\figurewidth]{ChapMontanari/figures/p_rand_high.pdf}}
\end{figure}

We will denote by $k= |S|$ the size of the hidden set. It is not hard
to see that the  problem is easy for $k$ large (both from the
statistical and the computational point of view), and hard  for $k$
small (both computationally and a statistically).
Indeed, for $k$ sufficiently large, a simple degree based heuristics
is successful. This is based on the remark that vertices in the clique
have a slightly higher degree than others. Hence sorting the vertices
by degree, the first $k$ vertices should provide a good estimate of $S$.
\begin{proposition}\label{lec6-ubound}
Let $\hS$ be the set  of $k$ vertices with larges degree in $G$.
If $k \ge\sqrt{(2+\eps)n \log n}$, then with high probability $\hS = S$.
\end{proposition}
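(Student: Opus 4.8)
The plan is to show that, with high probability, every vertex of the planted clique $S$ has strictly larger degree in $G$ than every vertex outside $S$; since $\hS$ is by definition the set of the $k$ largest-degree vertices, this forces $\hS=S$. Writing $d_v$ for the degree of $v$ in $G$, it therefore suffices to bound
\begin{align}
\prob\{\hS\neq S\}\ \le\ \prob\Big\{\min_{i\in S}d_i\ \le\ \max_{j\notin S}d_j\Big\}\, .
\end{align}

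First I would identify the two degree distributions. For $i\in S$ the $k-1$ edges inside the clique are present deterministically, while the $n-k$ potential edges to $V\setminus S$ are i.i.d.\ $\mathrm{Bernoulli}(1/2)$, so $d_i$ is distributed as $(k-1)+\mathrm{Bin}(n-k,1/2)$, with mean $(k-1)+(n-k)/2$. For $j\notin S$ all $n-1$ incident edges are i.i.d.\ $\mathrm{Bernoulli}(1/2)$, so $d_j\sim\mathrm{Bin}(n-1,1/2)$ with mean $(n-1)/2$. The crucial feature is that the gap between the two means,
\begin{align}
\E d_i-\E d_j=\frac{k-1}{2}\, ,
\end{align}
grows linearly in $k$, whereas the fluctuations of a single degree are only of order $\sqrt n$.

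The next step is to make the concentration uniform. By Hoeffding's inequality, $\prob\{|d_v-\E d_v|\ge t\}\le 2e^{-2t^2/n}$ for every vertex $v$ and every $t>0$. Taking $t$ of order $\sqrt{n\log n}$ and applying a union bound -- over the $\le n$ vertices outside $S$ for the upper tail, and over the $k$ vertices of $S$ for the lower tail -- we obtain, with high probability, that $d_j\le\E d_j+t$ for all $j\notin S$ and $d_i\ge\E d_i-t$ for all $i\in S$ simultaneously. On this event
\begin{align}
\min_{i\in S}d_i-\max_{j\notin S}d_j\ \ge\ \frac{k-1}{2}-2t\, ,
\end{align}
which is strictly positive once $k$ exceeds a suitable multiple of $\sqrt{n\log n}$; tuning $t$ so that the two union bounds just vanish yields the constant $2+\eps$ quoted in the statement.

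The main obstacle is precisely this bookkeeping of constants. The binding contribution is the maximum degree over the $\asymp n$ non-clique vertices: clearing it needs $t\approx\sqrt{(1/2)\,n\log n}$, hence $k/2\gtrsim\sqrt{(1/2)\,n\log n}$, i.e.\ $k\gtrsim\sqrt{2n\log n}$, which is where the factor $2$ comes from. To be fully rigorous one must also bound the lower-tail deviations of the $k$ clique degrees -- a union over only $k$ terms, over the honest binomial $d_i-(k-1)$ -- and since $k=\Theta(\sqrt{n\log n})$ these are of the same order $\sqrt{n\log n}$ as the gap; treating them as lower order (in the deliberately informal spirit of these notes) is what produces the clean threshold. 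Two further points are routine: degree ties are harmless because the displayed inequality is strict, and a matching impossibility statement below the threshold (via anti-concentration of the binomial) is not needed for this proposition.
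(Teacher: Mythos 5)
Your argument follows the same route as the paper's proof: identify the two degree distributions (clique vertices as $k-1+\mathrm{Bin}(n-k,1/2)$, non-clique vertices as $\mathrm{Bin}(n-1,1/2)$), apply a Hoeffding-type concentration bound to each degree, take a union bound over each class, and compare $\min_{i\in S}d_i$ to $\max_{j\notin S}d_j$, using the mean gap $(k-1)/2$. The only cosmetic difference is that the paper uses a separate, smaller deviation $t\sim\sqrt{n\log k}$ for the $k$ clique vertices (since the union is only over $k$ terms), whereas you use a single $t\sim\sqrt{n\log n}$ for both sides, which is slightly more wasteful but simpler.

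One point in your favor: you explicitly flag that the clique-side deviation is \emph{not} lower order. Indeed, when $k=\Theta(\sqrt{n\log n})$ one has $\log k\sim\tfrac12\log n$, so $\sqrt{n\log k/2}\sim\tfrac12\sqrt{n\log n}$ is comparable to the $\sqrt{n\log n/2}$ term from the non-clique maximum. Tracking both contributions honestly, the inequality $\tfrac{k}{2}>\sqrt{(1+\eps')n\log n/2}+\sqrt{(1+\eps')n\log k/2}$ really forces $k\gtrsim(1+\sqrt2)\sqrt{n\log n}$, i.e.\ a constant $3+2\sqrt2\approx5.83$ rather than $2$. The paper's proof ends with ``selecting a suitable value $\eps'$'' and never confronts this; your note that ``treating them as lower order \ldots is what produces the clean threshold'' is an accurate diagnosis of where the stated constant comes from, and of the looseness shared by both write-ups. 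In short: same approach, and you are, if anything, more candid about the constant.
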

\begin{proof}
Let $D_i$ denote the degree of vertex $i$. If $i \not\in S$, then $D_i
\sim \Binom(n-1,1/2)$. In particular, standard concentration bounds on
independent random variables yield $\prob\{D_i\ge \E D_i + t\}\le
\exp(-2t^2/n)$.
By a union bound (the same already used to analyze denoising in
Section \ref{sec:Denoising}), and using $\E(D_i) = (n-1)/2$, we have,
for any $\eps'>0$, with probability converging to
one as $n\to\infty$, 
\begin{align}
\max_{i\not\in S} D_i\le \frac{n}{2}+\sqrt{(1+\eps')\frac{n\log n}{2}}\, . 
\end{align}
On the other hand, if $i \in S$, then 
$D_i\sim k-1+\Binom(n-k,1/2)$. Hence, by a similar union bound
\begin{align}
\min_{i\in S} D_i\ge \frac{n+k}{2}-\sqrt{(1+\eps')\frac{n\log k}{2}}\, .
\end{align}
The claim follows by using together the above, and selecting a
suitable value $\eps'$.
\end{proof}

For $k$ too small, the problem becomes statistically intractable
because the planted clique is not the unique clique of size $k$.
Hence no estimator can distinguish between the set $S$ and another
set $S'$ that supports a different (purely random)  clique. The next
theorem characterizes this statistical threshold.
\begin{proposition}
Let $\eps>0$ be fixed. Then, for $k <  2(1-\eps) \log_2 n$
any estimator $\hS$ is such that $\hS\neq S$ with probability
converging to one as $n\to\infty$. 

Viceversa, for $k <  2(1-\eps) \log_2 n$ there exists an estimator
$\hS$ such that $\hS = S$ with probability converging to one as $n\to\infty$.
\end{proposition}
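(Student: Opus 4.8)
The statement is to be read as a sharp dichotomy around the clique number of the Erd\"os--R\'enyi graph $G(n,1/2)$: exact recovery of $S$ is impossible when $k\le 2(1-\eps)\log_2 n$ and possible when $k\ge 2(1+\eps)\log_2 n$ (the second displayed inequality should read $k\ge 2(1+\eps)\log_2 n$). Below the threshold the planted clique is statistically buried among the many spurious $k$-cliques present already in a plain random graph; above it, $S$ is the unique $k$-clique of $G$ and can be read off by exhaustive search. The plan is to handle the two directions separately, each by a first-moment computation.

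\emph{Achievability, $k\ge 2(1+\eps)\log_2 n$.} I would use the estimator that outputs any $k$-clique of $G$ (say the lexicographically smallest). Since $Q_1=\delta_{+1}$, the planted set $S$ is \emph{always} a clique, so it suffices to show that with high probability it is the only $k$-clique. Let $Z$ count the $k$-cliques $T\neq S$, grouped by the number of ``fresh'' vertices $m=k-|T\cap S|\in\{1,\dots,k\}$. A set with overlap $k-m$ has exactly $m(2k-m-1)/2$ of its edges incident to a vertex outside $S$, and these are independent $\mathrm{Unif}\{\pm1\}$, so
\begin{align}
\E Z=\sum_{m=1}^{k}\binom{k}{m}\binom{n-k}{m}\,2^{-m(2k-m-1)/2}\, .
\end{align}
Bounding $\binom km\le k^m$, $\binom{n-k}{m}\le n^m$ and substituting $k=2(1+\eps)\log_2 n$, I expect each summand to satisfy $\log_2(\,\cdot\,)\le m\big(\log_2 k+\log_2 n-k+(m+1)/2\big)\le-\tfrac{\eps}{2}\,m\log_2 n$ for $n$ large (using $\log_2 k=o(\log_2 n)$ and $m\le k$), hence $\E Z\le k\,n^{-\eps/2}\to 0$. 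Markov's inequality then gives $Z=0$ with high probability, and the estimator returns $S$.

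\emph{Impossibility, $k\le 2(1-\eps)\log_2 n$.} Here the natural (and only non-vacuous) reading is the Bayesian one: with $S$ uniform over the $\binom nk$ subsets of size $k$, every estimator $\hS$ has $\prob\{\hS(\bW)\neq S\}\to1$. I would argue directly: since the edges inside the planted set are forced to $+1$ while all other $\binom n2-\binom k2$ edges are fair coins, $\prob\{\bW=w\mid S=S_0\}=2^{-(\binom n2-\binom k2)}$ when $S_0$ is a clique in $w$ and $0$ otherwise. Hence, for a deterministic estimator,
\begin{align}
\prob\{\hS(\bW)=S\}&=\frac{2^{-(\binom n2-\binom k2)}}{\binom nk}\sum_{w}\sum_{S_0}\mathbf 1\{\hS(w)=S_0,\ S_0\text{ a clique in }w\}\\
&\le\frac{2^{-(\binom n2-\binom k2)}}{\binom nk}\cdot 2^{\binom n2}=\frac{2^{\binom k2}}{\binom nk}\, ,
\end{align}
because for each $w$ the inner sum equals $\mathbf 1\{\hS(w)\text{ is a clique in }w\}\le1$; randomization does not help by averaging. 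Finally $\binom nk\ge(n/k)^k$ gives $2^{\binom k2}/\binom nk\le 2^{\,k[(k-1)/2+\log_2 k-\log_2 n]}$, and $k\le 2(1-\eps)\log_2 n$ together with $\log_2 k=o(\log_2 n)$ makes the exponent $\le-\tfrac{\eps}{2}k\log_2 n\to-\infty$. Thus the success probability of every estimator vanishes, which is equivalent to the statement that $S$ is indistinguishable from the (many) other $k$-cliques of $G$.

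\emph{Where the work is.} Neither step needs a heavy tool. The only genuinely delicate point is checking that the per-$m$ bound in the first-moment sum for $\E Z$ holds \emph{uniformly} in $m$ — the borderline contributions are $m=1$ (one extra vertex) and $m=k-1$ (one common vertex) — and keeping track of lower-order terms such as $\log_2 k$. A more robust, though less elementary, alternative for the impossibility direction is a second-moment argument showing that the number of $k$-cliques of $G$ actually diverges, so that the posterior given $\bW$ is uniform over a growing family of candidate cliques; this is the route taken, together with the extension to general $Q_0,Q_1$, in \cite{deshpande2013finding}.
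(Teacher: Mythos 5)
Your proof is correct, and it actually goes further than the paper, which explicitly only sketches the argument (computing $\E\, N(\ell;n)$ for $\cG(n,1/2)$ and invoking Markov) and ``leaves to the reader the task of filling the details.'' You correctly spot that the second inequality in the statement is a typo and should read $k>2(1+\eps)\log_2 n$.

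Two places where you supply genuinely missing content. First, for achievability the paper's first-moment bound counts cliques in a \emph{plain} $\cG(n,1/2)$, but the observed graph is the planted one, in which $k$-sets overlapping $S$ are biased toward being cliques; your stratification of the spurious-clique count $Z$ by the number $m$ of fresh vertices, with the correct exponent $m(2k-m-1)/2$ for the unforced edges, is the fix that makes the first-moment argument actually apply to the observed graph. (Your remark that the exponent $\log_2 k+\log_2 n-k+(m+1)/2$ is worst at $m=k$ and still $\le -\tfrac{\eps}{2}\log_2 n$ there is the right check.) Second, for impossibility the paper only gestures at ``the planted clique is not the unique clique of size $k$,'' implicitly via a second moment; your counting identity
\begin{align}
\prob\{\hS(\bW)=S\}\le\frac{2^{\binom k2}}{\binom nk}
\end{align}
is a cleaner and more elementary route: it follows from the observation that for each realization $w$ the event ``$\hS(w)$ is a clique in $w$'' can happen at most once, so the posterior mass an estimator can collect is bounded by the inverse number of equiprobable cliques, with no second-moment control needed. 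The only caveat worth stating explicitly is the one you flag yourself: this is a Bayes/minimax statement (the prior on $S$ is uniform, or equivalently one argues there exists a bad $S_0$); for a single fixed $S$ the claim would be vacuous against the constant estimator. Both approaches land on the same $2\log_2 n$ threshold, but yours is self-contained where the paper defers.
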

\begin{proof}
We will not present a complete proof but only sketch the fundamental
reason for a threshold $k\approx 2\log_2 n$ and leave to the reader
the task of filling the details.

The basic observation is that the largest `purely random' clique is of
size approximately $2\log_2 n$. As a consequence, for $k$ larger than
this threshold, searching for a clique of size $k$ returns the planted clique.

More precisely, let $\cG(n,1/2)$ be an Erd\"os-Renyi random graph with
edge density $1/2$ (i.e. a random graph where each edge is present
independently with probability $1/2$). We will show that the largest clique in
$\cG(n,1/2)$ is with high probability of size between 
$2(1-\eps)\log_2n$ and $2(1+\eps)\log_2 n$.

This claim can be proved by a moment calculation. In particular, for
proving that the largest clique cannot be much larger than $2\log_2
n$, it is sufficient to compute the expected number of cliques of size $\ell$.
Letting $N(\ell;n)$ denote the number of cliques of size $\ell$ in
$\cG(n,1/2)$, we have
\begin{align}
 \E\, N (\ell;n) = \binom{n}{\ell} 2^{-\binom{\ell}{2}} \approx n^\ell 2^{-\ell^2/2} =2^{\ell\log_2 n - \ell^2/2}.
\end{align}
For $\ell>2(1+\eps)\log_2 n$ the exponent is negative and the
expectation vanishes as $n\to\infty$. In fact $\sum_{\ell\ge
  2(1+\eps)\log_2 n}\E N (\ell;n)$ vanishes as well. By Markov
inequality, it follows that --with high probability-- no clique has size larger than $2(1+\eps)\log_2 n$.
\end{proof}
The catch with the last proposition is that the estimator needs not to
be computable in polynomial time. Indeed the estimator implicitly
assumed in the proof requires searching over all subsets of $k$
vertices, which takes time at least $\binom{n}{k}\approx n^k$. For $k$
above the threshold, this is $\exp\{c(\log n)^2\}$, that is super-polynomial.

To summarize, with unlimited computational resources we can find
planted  cliques as soon as their size is larger  than $c\log_2
  n$ for any $c>2$. This is the fundamental statistical barrier
  towards estimating the set $S$/ On the other hand, the naive degree-based
  heuristic described above, correctly identifies the clique if $k \ge
  \sqrt{c\, n \log n}$. 
There is a huge gap between the fundamental statistical limit, and
what is achieved by a simple polynomial-time algorithm. This begs the
question as
to whether this gap can be filled by more advanced algorithmic ideas.

A key observation, due to Alon,  Krivelevich and Sudakov
\citeyear{alon1998finding}
is that the matrix $\bW$ --in expectation--  a rank-one matrix. Namely
\begin{align}
\E\{\bW\} = u_S\, u_S^{sT}\, ,
\end{align}
and therefore $S$ can be reconstructed from the eigenvalue
decomposition of $\E\{\bW\}$. Of course $\E\{\bW\}$ is not available, 
but one can hope  the random part of $\bW$ not to perturb too much
the leading eigenvector. In other words, one can compute the principal
eigenvector $v_1(\bW)$, i.e. the
eigenvector of $\bW$ with largest eigenvalue, and use its largest
entries to estimate the clique. For instance, one can take the $k$
vertices corresponding to the entries of $v_1(\bW)$ with largest
absolute value.

This spectral approach allows to reduce the minimum detectable clique
size by a factor $\sqrt{\log n}$, with respect  to the degree 
heuristics of \ref{lec6-ubound}.
\begin{theorem}[Alon,  Krivelevich and Sudakov, 1998]
There exists an algorithm that returns an estimate $\hS$ of the set
$S$, with the same complexity as computing the principal eigenvector
of $\bW$, and such that the following holds. 
If $k > 100\sqrt{n}$, then $\hS = S$ with probability converging to one
as $n\to\infty$.
\end{theorem}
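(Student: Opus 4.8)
The plan is to exploit the low-rank structure just noted: $\bW$ is a rank-one ``signal'' matrix plus a mean-zero ``noise'' matrix, and one can recover the signal eigenvector as soon as the signal strength dominates the operator norm of the noise --- the same mechanism that underlies the denoising and low-rank pictures of the earlier sections. Concretely, I would write $\bW = u_S u_S^{\sT} + \bZ$ (up to negligible diagonal terms), where $\bZ$ is symmetric with independent mean-zero entries bounded by $1$: $\bZ_{ij}$ is uniform on $\{-1,+1\}$ unless $i,j\in S$, in which case $\bZ_{ij}=0$. The matrix $u_S u_S^{\sT}$ has a single nonzero eigenvalue $k=|S|$, with eigenvector $u_S/\sqrt k$.

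First I would control the noise. A standard $\eps$-net argument (or the moment method for Wigner-type matrices, in the spirit of F\"uredi--Koml\'os) gives $\|\bZ\|_2 \le C_0\sqrt n$ with probability $1-o(1)$ for an absolute constant $C_0$ (one may take $C_0$ close to $2$, but any fixed constant is enough). Since the hypothesis $k>100\sqrt n$ makes $k$ much larger than $\|\bZ\|_2$, Weyl's inequality shows that the top eigenvalue of $\bW$ is within $\|\bZ\|_2$ of $k$ and is well separated from the rest of the spectrum, and the Davis--Kahan $\sin\theta$ theorem then yields, after choosing the sign of $v_1(\bW)$ so that $\langle v_1(\bW),u_S\rangle\ge0$,
\begin{align}
\Big\| v_1(\bW) - \frac{u_S}{\sqrt k} \Big\|_2 \;\le\; \frac{C_1\|\bZ\|_2}{k} \;\le\; \delta_0 ,
\end{align}
where $\delta_0$ is an absolute constant that can be made as small as we wish, the constant $100$ in the hypothesis being deliberately generous.

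Next I would convert this $\ell_2$ bound into approximate support recovery. Writing $v_1(\bW) = (1-O(\delta_0^2))\,u_S/\sqrt k + z$ with $z\perp u_S$, $\|z\|_2\le\delta_0$, at most $O(\delta_0^2 k)$ coordinates of $S$ can have their magnitude pulled below $1/(2\sqrt k)$ and at most $O(\delta_0^2 k)$ coordinates outside $S$ can be pushed above it; hence the set $W$ of the $k$ coordinates of $v_1(\bW)$ of largest absolute value satisfies $|W\cap S|\ge(1-O(\delta_0^2))k$. Exact recovery is then forced by a combinatorial clean-up: let $\hS$ be the set of vertices adjacent to at least three quarters of the vertices of $W$. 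Every $i\in S$ is adjacent to all of $(W\cap S)\setminus\{i\}$, which is a $(1-O(\delta_0^2))$-fraction of $W$, so $S\subseteq\hS$; conversely, for $i\notin S$ the edges to $W$ are fair i.i.d.\ coin flips, so the number of neighbours of $i$ in $W$ is $\Binom(|W|,1/2)$-distributed and exceeds $\tfrac34|W|$ with probability at most $e^{-|W|/8}$, and since $|W|\asymp k\ge 100\sqrt n$ a union bound over the $n$ vertices is $n\,e^{-\Omega(\sqrt n)}=o(1)$. Hence $\hS=S$ with high probability.

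The one genuine obstacle is that $W$ is a function of the whole matrix $\bW$, so it is \emph{not} independent of the entries $\{\bW_{ij}:j\in W\}$ used in the clean-up, and the Binomial computation above is not literally valid. I would handle this in the standard way by sample splitting: randomly partition $[n]=V_1\cup V_2$, run the spectral step on $\bW|_{V_1\times V_1}$ to produce $W\subseteq V_1$ with $|W\cap S|\ge(1-O(\delta_0^2))\,|S\cap V_1|$, and clean up the vertices of $V_2$ using only the block $\bW|_{V_2\times V_1}$, which \emph{is} independent of $W$; then exchange the roles of $V_1$ and $V_2$. Checking that this split version still satisfies the hypotheses of the perturbation step --- essentially $k$ is halved, so the constant $100$ must absorb a factor of a few, which it comfortably does --- is the only place where constants must be tracked with care; everything else is a routine assembly of the spectral-norm bound, Davis--Kahan, and Hoeffding's inequality. (A more modern route, sidestepping the clean-up altogether, is to prove an entrywise bound $\|v_1(\bW)-u_S/\sqrt k\|_\infty = o(1/\sqrt k)$, which already separates $S$ from its complement coordinate by coordinate; but the two-phase argument above is the one that matches the original Alon--Krivelevich--Sudakov proof.)
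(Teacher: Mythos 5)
Your proposal is correct and follows the same route as the paper's sketch: rank-one signal plus Wigner-type noise, a F\"uredi--Koml\'os operator-norm bound, Davis--Kahan to control the top eigenvector, selection of the top-$k$ coordinates by magnitude, and a neighbor-count clean-up. You also flag and repair, via sample splitting, the dependency between the candidate set and the edges used in the clean-up --- a step the paper's sketch leaves to ``the actual proof requires some additional steps'' --- and you spell out why the $\ell_2$ eigenvector bound forces $|W\cap S|\ge(1-O(\delta_0^2))k$, which the paper only asserts.
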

\begin{proof}[Proof sketch]
Again, we will limit ourselves to explaining the basic argument. The
actual proof requires some additional steps.

Then the matrix $\bW$ has the form
\begin{align}
\bW = u_S u_S^{\sT} + \bZ - \bZ_{SS}, \label{eq:Wbinary}
\end{align}
\index{Wigner}
where $\bZ$ is a Wigner matrix i.e. a matrix with i.i.d. zero-mean
entries $(\bZ_{ij})_{i\le j}$, and $\bZ_{S,S}$ is the restriction of $\bZ$
to indices in $S$. In the present case, the entries
distribution is Bernoulli
\begin{align}
\bZ_{ij} = \begin{cases} +1, &\text{with probability } 1/2 \\-1,
  &\text{with probability } 1/2 \end{cases}.
\end{align}
By the celebrated F\"uredi-Komlos theorem of \cite{furedi1981eigenvalues},
the operator norm of this matrix (i.e. the maximum of the largest
eigenvalue of $\bZ$ and the largest eigenvalue of $-\bZ$)  is upper
bounded as $\|\bZ\|_2\le (2+\eps)\sqrt{n}$, with high probability. By
the same argument $\| \bZ_{SS} \|_2 \le  (2+\eps) \sqrt{k}$, which is
much smaller than $\|\bZ\|_2$.

We view $\bW$ as a perturbation of the matrix $u_Su_S^{\sT}$ (whose
principal, normalized, eigenvector is $u_S/\sqrt{k}$). Matrix
perturbation theory implies that the largest eigenvector is perturbed
by an amount proportional to the norm of the perturbation and
inversely proportional to the gap between top eigenvalue and second
eigenvalue of the perturbed matrix. More precisely, Davis-Kahan
`sin theta' theorem yields (for $v_1= v_1(\bW)$) 
\begin{align}
\sin\theta(v_1,u_S) \le
\frac{\|\bZ-\bZ_{S,S}\|_2}{\lambda_1(u_Su_S^{\sT})- \lambda_2(\bW)}\, ,
\end{align}
where $\lambda_\ell(\bA)$ denotes the $\ell$-th largest eigenvalue of
matrix $\bA$, and $\theta(a,b)$ is the angle between vectors $a$ and
$b$.  We of course have 
$\lambda_1(u_Su_S^{\sT}) = \|u_S\|_2^2 = k$, and $\lambda_2(\bW) \le
\lambda_2(u_Su_S^{\sT})+\|\bZ-\bZ_{S,S}\|_2$.
Therefore, for 
\begin{align}
\sin\theta(v_1,u_S) &\le
\frac{\|\bZ-\bZ_{S,S}\|_2}{k- \|\bZ-\bZ_{S,S}\|_2} \\
&\le \frac{2.1\sqrt{n}}{k-2.1\sqrt{n}}\le frac{1}{45}\, .
\, ,
\end{align}
where the last inequality holds with high probability by
F\"uredi-Komlos theorem.
Using standard trigonometry, this bound can be immediately converted in a bound
on the $\ell_2$ distance between $v_1(\bW)$ and the unperturbed\index{Norm $\ell_2$}
eigenvector:
\begin{align}
\Big\| v_1 - \frac{u_S}{\sqrt k} \Big\|_2 \leq \frac{1}{40}\, .
\end{align}
We can then select the set $B$ of  $k$ vertices that correspond to the
$k$ entries of $v_1$ with largest absolute value.
The last bound does not guarantee that $B$ coincide with $S$, but it
implies that $B$ must have a substantial overlap with $S$.
The estimator $\hS$ is constructed by selecting the $k$ vertices in 
$\{1,2,\dots,p\}$ that have the largest number of neighbors in $B$.
\end{proof}

It is useful to pause for a few remarks on this result.
\begin{remark}
The complexity of the above algorithm is the same as the one of
computing the principale eigenvector $v_1(\bW)$. Under the assumptions of
the theorem, this is non-degenerate and in fact, there is a large gap
between the first eigenvalue and the second one, say
$\max(\lambda_2(\bW),|\lambda_n(\bW)|)\le (1/2)\lambda_1(\bW)$. 

Hence,
$v_1(\bW)$ can be computed efficiently through power iteration, i.e. by
computing the sequence of vectors $v^{(t+1)} = \bW v^{(t)}$. 
Each operation takes at most $n^2$ operations, and due to the fast
convergence, 
$O(\log n)$ iterations are sufficient for implementing the above
algorithm.
We will revisit power iteration in the following.
\end{remark}

\begin{remark}
The eigenvalues and eigenvectors of a random matrix of the form
(\ref{eq:Wbinary}) have been studied in detail in statistics (under
the name of `spiked model') and probability theory (as `low-rank
perturbation of Wigner matrices'), see e.g.\index{Wigner}
\cite{feral2007largest,capitaine2009largest,capitaine2012central}.
These work  unveil a phase transition phenomenon that,\index{Phase Transition}
in the present application, can be stated as  follows. Assume $k,
n\to\infty$
with $k/\sqrt{n}=\kappa\in (0,\infty)$. Then
\begin{align}
\lim_{n\to\infty}|\<v_1(\bW),u_S/\sqrt{k}\>| = 
\left\{
\begin{array}{ll}
0 & \mbox{ if $\kappa\le 1$,}\\
\sqrt{1-\kappa^{-2}} & \mbox{ otherwise.}
\end{array}
\right.
\end{align}
In other words, for $k\le (1-\eps)\sqrt{n}$ the principal eigenvector of $\bW$
is essentially uncorrelated with the hidden set $S$. The barrier at
$k$ of order $\sqrt{n}$ is not a proof artifact, but instead a
fundamental limit related to this phase transition.

On the other hand, a more careful analysis of the spectral method can
possibly show that it succeeds for all $k\ge (1+\eps)\sqrt{n}$.
(Here and above $\eps>0$ is an  arbitrary constant).
\end{remark}
\begin{remark}
A clever trick by Alon and collaborators \citeyear{alon1998finding}, allow
to find cliques of size $k\ge \delta\sqrt{n}$ for any fixed constant
$\delta>0$ in polynomial time. The price to pay is that the
computational  complexity increases rapidly as $\delta$ gets smaller.
ore precisely,  we can identify sets of size $k \geq \delta \sqrt n$
for any  with time complexity of order $n^{O(\log(1/\delta))}$.

To see this, we use the spectral method as a routine that is able to
find the clique with high probability provided $k\ge c\sqrt{n}$ for
some constant $c$.
First assume  that an oracle gives us one node in the clique.
We can solve the problem with $k \gtrsim c\sqrt{n/2}$. Indeed we can
focus our attention on the set of neighbors of the node provided by
the oracle. There is about $n/2$ such neighbors, and they contain a
clique of size $k-1$, hence the spectral method will succeed under the
stated condition.

We then observe that we do not need an such: we can search for the
vertex that the oracle would tell us by 
blowing up the runtime by a factor at most  $n$ (indeed only a
$\sqrt{n}$ factor is sufficient,  since one every $\sqrt{n}$
vertices is in the clique). In this way we can trade a factor of
$\sqrt{2}$ in $k$ by an $n$-fold increase of the runtime.
This construction can be repeated $O(\log(1/\delta))$ times to achieve
the trade-off mentioned above.
\end{remark}

\subsection{An iterative thresholding approach}

Throughout this section, we shall normalize the data and work with the
matrix $\bA = \bW/\sqrt{n}$.
As we saw in the previous section, the principal eigenvector of $\bA$
carries important information about the set $S$, and in particular it
is correlated with the indicator vector $u_S$, if the hidden set $S$
is large enough.
Also an efficient way to compute the principal eigenvector is through
power iteration
\begin{align}
v^{(t+1)} = \bA\, v^{(t)}\, .
\end{align}
Note that the resulting vector $v^{(t)}$ will not --in general-- be
sparse, if not, a posteriori,  because of the correlation with $u_S$.  
It is therefore a natural idea to modify the power iteration by
introducing a non-linearity that enforces sparsity:
\begin{align}
\theta^{(t + 1)} = \bA\,  f_{t} (\theta^{(t)}),\label{eq:NonlinearPower}
\end{align}
where $\theta \in \reals^n$ and 
$f_{t}: \reals^n\to\reals^n$ is a non-linear function that enforces
sparsity.  To be definite, we will assume throughout that the
initialization is $\theta^{(0)} = (1,1,\dots ,1)$, the all-ones vector.

For ease of exposition, we shall focus on separable functions 
and denote by $f_t$ the action of this function on each component.
In other words, with a slight abuse of notation, we will write
$f_t(v) = (f_t(v_1),f_t(v_2),\dots,f_t(v_n))$ when $v =
(v_1,v_2,\dots,v_n)$. Example of such a function might be
\begin{itemize}
\item \emph{Positive soft thresholding:} $f_t(x) = (x-\lambda_t)_+$ for some
  iteration-dependent threshold $\lambda_t$. The threshold can be
  chosen so that, on average $f_t(\theta^{(t)})$ has a number of
  non-zeros of order $k$.
\item \emph{Positive hard thresholding:} $f_t(x) = x\, \ind(x\ge
  \lambda_t)$ (here $\ind$ is the indicator function: $\ind(B) =1$ if
  $B$ is true and $=0$ otherwise). Again $\lambda_t$ is a threshold.
\item \emph{Logistic nonlinearity:} 
\begin{align}
f_t(x) = \frac{1}{1+\exp(-a_t(x-\lambda_t))}\, ,
\end{align}
where $\lambda_t$ plays the role of a `soft threshold.'
\end{itemize}
Which function should we choose? Which thresholds? Will this approach
beat the simple power iteration (i.e. $f_t(x) = x$)?

In order to address these questions, we will carry out a simple
heuristic analysis of the above  non-linear power
iteration. Remarkably, we will see in the next section that this
analysis yields the correct answer for a modified version of the same
algorithm --a message passing algorithms.
Our discussion is based on \cite{deshpande2013finding}, and we refer
to that paper for al omitted details, formal statements and derivations.

 The heuristic analysis
requires to consider separately vertices in $S$ and outside $S$:
\begin{enumerate}
\item For $i\not\in S$, the non-linear power iteration
  (\ref{eq:NonlinearPower})  reads
\begin{align}
\theta_i^{(t+1)} = \sum_{j = 1}^n \bA_{ij} f_t(\theta_j^{(t)})\, . \label{eq:ByCoordinate}
\end{align}
Since in this case the variables $\{\bA_{ij}\}_{j\in [n]}$ are
i.i.d. with mean zero and variance $1/n$, it is natural to guess --by
central limit theorem-- 
$\theta_i^{(t+1)}$ to be approximately normal with mean $0$ and
variance $(1/n) \sum_{j = 1}^n f_t^2(\theta_j^{(t)})$. 
Repeating this argument inductively, we conclude that 
 that $\theta_i^t \sim \normal{0}{ \sigma_t^2}$, where --by the law of
 large numbers applied to  $(1/n) \sum_{j = 1}^n
 f_t^2(\theta_j^{(t)})$--
\begin{align}
	\sigma_{t+1}^2 = \E\big\{f_t(\sigma_t \,Z)^2\big\}\, ,
\label{lect6-eq-sigma}
\end{align}
where the expectation is taken with respect to $Z \sim \normal{0}{1}$.
The initialization $\theta^{(0)} = u$ implies $\sigma_1^2 = f_0(1)^2$
\item  For $i \in S$, we have $\bA_{ij} = \kappa$  if $j\in S$ as well,
  and $\bA_{ij} = \bZ_{ij}/\sqrt{n}$ having zero mean  and variance
  $1/n$ otherwise. Hence
\begin{align}
\theta_i^{(t+1)} = \kappa\sum_{j \in S }f_t(\theta_j^{(t)})+\frac{1}{\sqrt{n}}\sum_{j
  \in [n]\setminus S }\bZ_{ij}f_t(\theta_j^{(t)})\, .
\end{align}
By the same argument as above, the second part gives rise to a
zero-mean Gaussian contribution, with variance $\sigma_t^2$, and  the
first has non-zero mean and negligible variance. 
We conclude that $\theta_i^{(t)}$ is approximately
$\normal{\mu_t}{\sigma_t^2}$ with $\sigma_t$ given recursively by Eq.~(\ref{lect6-eq-sigma}).
Applying the law of large numbers to the non-zero mean contribution,
we get the recursion
\begin{align}\label{lect6-eq-mu}
\mu_{t+1} = \kappa \, \E\big\{f_t(\mu_t + \sigma_t \, Z)\big\}\, ,
\end{align}
where the expectation is taken with respect to $Z \sim \normal{0}{1}$,
and the initialization
$\theta^{(0)} = u$ implies $\mu_1 = \kappa\, f_0(1)$ (recall that
$\kappa$ is defined as the limit of $k/\sqrt{n}$.
\end{enumerate}
A few important remarks.

\emph{The above derivation is of course incorrect!}
The problem is that the central limit theorem cannot be applied to the
right-hand side of Eq.~(\ref{eq:ByCoordinate}) because the summands
are not independent. Indeed, each term $f_t(\theta^{(t)}_j)$ depends
on all the entries of the matrix $\bA$. 

\emph{The conclusion that we reached is incorrect}. It is not
true that, asymptotically, $\theta^{(t)}_i$ is approximately
Gaussian, with the above mean and variance. 

\emph{Surprisingly, the conclusion is correct for a slightly modified
  algorithm}, namely a message passing algorithm that will be
introduced in the next section. This is a highly non-trivial
phenomenon

\subsection{A message passing algorithm}

We modify the non-linear power iteration (\ref{eq:ByCoordinate})
by transforming it into a message passing algorithm, whose 
underlying graph is the complete graph with $n$ vertices. 
The iteration variables are `messages' $\theta^{(t)}_{i\to j}$ for 
each $i\neq j$ (with $\theta^{(t)}_{i\to j}\neq \theta^{(t)}_{j\to
  i}$. These are updated using the rule
\begin{align}
\theta_{i\to j}^{(t)} = \sum_{k \in [n]\setminus j} \bA_{ik}
f_t(\theta_{k\to i}^{(t)})\, .\label{eq:MPClique}
\end{align}
The only difference with respect to the iteration 
(\ref{eq:ByCoordinate}) is that we exclude the term 
$k=j$ from the sum. Despite this seemingly negligible change (one out of $n$ terms is dropped),
the statistical properties of this algorithm are significantly
different from the ones of the nonlinear power iteration 
(\ref{eq:ByCoordinate}), even in the limit $n\to\infty$.
In particular, the Gaussian limit derived heuristically in the
previous section, holds for the message passing algorithm. Informally,
we have, as $n\to\infty$,
\begin{align}
\theta_{i\to j}^{(t)} \sim \begin{cases}
\normal{\mu_t}{\sigma_t^2} & \mbox{ if $i\in S$},\\
\normal{0}{\sigma_t^2} & \mbox{ if $i\not\in S$},
\end{cases}
\end{align}
where $\mu_t,\sigma_t$ are determined by the state evolution equations
(\ref{lect6-eq-sigma}) and (\ref{lect6-eq-sigma}).

Let us stress that we did not yet choose the functions $f_t(\,\cdot\,)$:
we defer this choice, as well as an analysis of state evolution to the
next section.
Before this, we note that --as in the case of sparse regression-- an
approximate message passing (AMP) version of this algorithm can be
\index{Approximate Message Passing}
derived by writing $\theta^{(t)}_{i\to j}
=\theta^{(t)}_i+\delta\theta^{(t)}_{i\to j}$ and linearizing in the
latter correction.
This calculation leads to the simple AMP iteration
\begin{align}
\theta^{(t+1} = \bA f_t(\theta^{(t)}) -\ons_t f_{t-1}(\theta^{(t-1)}), \label{eq:AMPClique}
\end{align}
where  the  `Onsager term' $\ons_t$ is given in this case by
\index{Onsager}
\begin{align}
\ons_t = \frac{1}{n}\sum_{i=1}^n f'_t(\theta^{(t)}_i)\, .
\end{align}

\subsection{Analysis and optimal choice of $f_t(\,\cdot\,)$}

We now consider the implications of state evolution for the
performance of the above message
passing algorithms. For the sake of simplicity, we will refer to the
AMP form (\ref{eq:AMPClique}),
\index{Approximate Message Passing}
but analogous statements hold for the message passing version
(\ref{eq:MPClique}).
Informally, state evolution implies that
\begin{align}
\theta^{(t)} \approx \mu_t\, u_S + \sigma_t\, z\, ,\label{eq:GaussianLimit}
\end{align}
where $z\sim\normal{0}{\id_n}$, and this statement holds
asymptotically in the sense of finite-dimensional marginals. 

In other words, we can interpret $\theta^{(t)}$ as a noisy
observation of the unknown vector  $u_S$, corrupted by Gaussian
noise. 
This suggest to choose $f_t(\,\cdot\,)$ as the posterior expectation
denoiser. Namely, for $y\in \reals$ 
\begin{align}
f^{{\rm opt}}_t(y) = \E\{U |\, \mu_t\, U+\sigma_t \, Z = y\big\}\, ,
\end{align}
where $U\sim \Bernoulli(p)$ for $p =  k/n = \kappa/\sqrt{n}$, and
$Z\sim \normal{0}{1}$ independently of $U$.
A simple  calculation yields the explicit expression
\begin{align}
f^{{\rm opt}}_t(y) = \frac{\delta}{\delta + (1-\delta)\,
  \exp\Big(-\frac{\mu_t}{\sigma_t^2} y+ \frac{\mu_t^2}{\sigma_t^2}
  \Big)}\, . \label{eq:OptimalF}
\end{align}
This is indeed empirically the best choice for the non-linearity
$f^{{\rm opt}}_t(\, \cdot\, )$.
We shall next rederive it from a different point of view, which also
allow to characterize its behavior.

Reconsider again the Gaussian limit (\ref{eq:GaussianLimit}). It is
clear that  the quality of the information contained in $\theta^{(t)}$
depends on signal to noise ratio  $\mu_t/\sigma_t$. Note that $u_S$ is
very sparse, hence the vector $\theta^{(t)}$ is undistinguishable from
a zero-mean Gaussian vector unless $(\mu_t/\sigma_t)\to \infty$.
Indeed, unless this happens, the entries $\theta^{(t)}_i$, $i\in S$,
are hidden in the tail of the zero-mean entries $\theta^{(t)}_i$,
$i\in [n]\setminus S$, see figure below.
It turns out that, by optimally choosing $f_t(\,\cdot\,)$, 
this happens if and only if $\kappa > 1/\sqrt{e}$.
In other words, the message passing algorithm succeeds with high
probability as long as $k$
is larger than $(1+\eps)\sqrt{n/e}$, for any $\eps>0$.
\vspace{0.5cm}
\begin{center}
   \def\svgwidth{0.5\textwidth}
\begingroup%
  \makeatletter%
  \providecommand\color[2][]{%
    \errmessage{(Inkscape) Color is used for the text in Inkscape, but the package 'color.sty' is not loaded}%
    \renewcommand\color[2][]{}%
  }%
  \providecommand\transparent[1]{%
    \errmessage{(Inkscape) Transparency is used (non-zero) for the text in Inkscape, but the package 'transparent.sty' is not loaded}%
    \renewcommand\transparent[1]{}%
  }%
  \providecommand\rotatebox[2]{#2}%
  \ifx\svgwidth\undefined%
    \setlength{\unitlength}{282.95128bp}%
    \ifx\svgscale\undefined%
      \relax%
    \else%
      \setlength{\unitlength}{\unitlength * \real{\svgscale}}%
    \fi%
  \else%
    \setlength{\unitlength}{\svgwidth}%
  \fi%
  \global\let\svgwidth\undefined%
  \global\let\svgscale\undefined%
  \makeatother%
  \begin{picture}(1,0.90810687)%
    \put(0,0){\includegraphics[width=\unitlength]{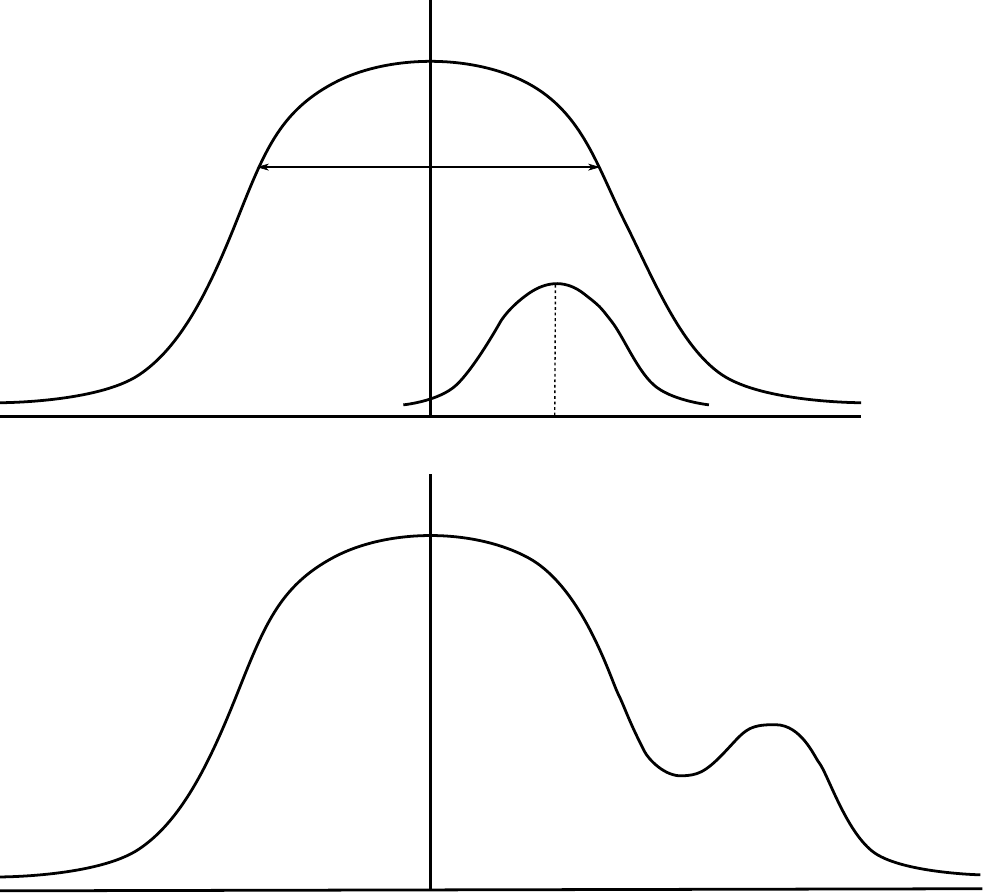}}%
    \put(0.30689231,0.70587013){\color[rgb]{0,0,0}\makebox(0,0)[lb]{\smash{$\sigma_t$}}}%
    \put(0.54108816,0.45239592){\color[rgb]{0,0,0}\makebox(0,0)[lb]{\smash{$\mu_t$}}}%
    \put(0.59392497,0.85081454){\color[rgb]{0,0,0}\makebox(0,0)[lb]{\smash{not distinguishable}}}%
    \put(0.59392497,0.36814272){\color[rgb]{0,0,0}\makebox(0,0)[lb]{\smash{distinguishable}}}%
  \end{picture}%
\endgroup%

\end{center}
\vspace{0.5cm}

In order to determine the whether $\tmu_t \equiv (\mu_t / \sigma_t)\to
\infty$, note that --without loss of generality-- we can  rescale the functions $f_t(\,\cdot\,)$ so that
$\sigma_t =1$ for all $t$ (simply replacing $f_t(z)$ by
$f_t(z)/\E\{f_t(Z)^2\}^{1/2}$ in Eq.~(\ref{eq:MPClique}), or in Eq.~(\ref{eq:AMPClique})).
After this normalization,  Eq.~(\ref{lect6-eq-mu}) yields
\begin{align}
\tmu_{t+1} = \kappa \frac{\E\left\{f_t(\tmu_t +
    Z)\right\}}{\E\{f_t(Z)^2\}^{1/2}}\ .
\end{align}
Note that
\begin{align}
\E\left\{f_t(\tmu_t +   Z)\right\} &= \int f_t(x) \,
\frac{1}{\sqrt{2\pi}}\, e^{-(x-\mu_t)^2/2} \, \de x\\
&= e^{-\tmu_t^2/2}\E\big\{f_t(Z) \, e^{\tmu_t x}\big\}\le
e^{\tmu_t^2} \E\big\{f_t(Z)^2\big\}^{1/2}\, .
\end{align}
where the last inequality follows from  Cauchy-Schwartz inequality.
The inequality is saturated by taking $f_t(x) = e^{\tmu_t x -
  (\tmu_t^2/2)}$,   that yields the state evolution recursion 
\begin{align}
\tmu_{t+1} = \kappa\,  e^{\tmu_t^2/2}\, .
\end{align}
It is immediate to study this recursion, and conclude that
$\tmu_t\to\infty$ if and only if 
$\kappa> 1/\sqrt{e}$.

The above analysis indeed yields the correct threshold for a message
passing algorithm, as proved in \cite{deshpande2013finding}.
(For proving the theorem below, a `cleaning' step is added to the
message passing algorithm.)
\begin{theorem}[Deshpande, Montanari, 2014]
There exists an algorithm with time complexity $O(n^2 \log n)$, that
outputs an estimate $\hS$ such that --if $k > (1 + \epsilon)
\sqrt{\frac{n}{e}}$--
then $\hS = S$ with probability  converging to one as $n\to\infty$.
\end{theorem}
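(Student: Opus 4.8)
The plan is to instantiate the message passing algorithm \eqref{eq:MPClique} (equivalently, its AMP form \eqref{eq:AMPClique}) with the Bayes-optimal denoiser \eqref{eq:OptimalF}, rescaled at each step so that $\sigma_t=1$; as computed above this amounts to taking $f_t(x)=\exp\!\big(\tmu_t x-\tmu_t^2/2\big)$ together with the scalar recursion $\tmu_{t+1}=\kappa\, e^{\tmu_t^2/2}$, started (after the rescaling of the all-ones initialization) from $\tmu_1=\kappa$. One then runs the iteration for $t_0$ steps, rounds $\theta^{(t_0)}$ to a vertex set, and finishes with a combinatorial clean-up. Since $k>(1+\eps)\sqrt{n/e}$ forces $\kappa>(1+\eps)/\sqrt e>1/\sqrt e$, the map $\tmu\mapsto\kappa e^{\tmu^2/2}$ stays strictly above the diagonal (its minimum gap is governed by $\max_{\tmu}\tmu e^{-\tmu^2/2}=1/\sqrt e$), so it has no finite fixed point, $\tmu_t\uparrow\infty$, and because $\tmu_{t+1}^2=\kappa^2 e^{\tmu_t^2}$ the divergence is doubly exponential; hence a number of iterations $t_0=O(\log n)$ (in fact far fewer would do) is enough to reach, say, $\tmu_{t_0}\ge 4\sqrt{\log n}$.

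The technical heart is to prove that state evolution holds rigorously: for each fixed $t$, as $n\to\infty$ the empirical distribution of $\{(\theta^{(t)}_i,(u_S)_i)\}_{i\in[n]}$ converges, with $\theta^{(t)}_i$ asymptotically $\normal{0}{1}$ on $[n]\setminus S$ and concentrated near $\tmu_t$ on $S$, and one must check that the remainder terms are uniform enough to let $t$ grow slowly with $n$. I would establish this by the conditioning method of Bolthausen \cite{bolthausen2014iterative}, as developed for AMP in \cite{bayati2011dynamics}, adapted to a \emph{symmetric} matrix carrying a rank-one spike: condition on the $\sigma$-algebra generated by the previous iterates, use that the Onsager term $-\ons_t f_{t-1}(\theta^{(t-1)})$ exactly cancels the leading self-feedback of the iterates on the matrix randomness, so that conditionally $\bZ$ behaves like a fresh Wigner matrix acting on the complement of the span of the past iterates, and propagate the Gaussian approximation inductively, peeling off the rank-one part $u_Su_S^{\sT}$ at each step. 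Since $\bZ$ here has $\pm1$ entries, the last ingredient is a universality transfer from the Gaussian ensemble, or, alternatively, the direct moment computation for the Bernoulli case carried out in \cite{deshpande2013finding}. I expect this step --- and in particular controlling it for a number of iterations growing with $n$ --- to be the main obstacle; note that the analogous statement is \emph{false} for the plain non-linear power iteration \eqref{eq:ByCoordinate}, and the whole purpose of the Onsager correction is precisely to restore the conditional independence that makes the central-limit heuristic legitimate.

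Granting state evolution for this $t_0$, the rounding step is a Gaussian-tail-plus-union-bound argument identical in spirit to the denoising analysis of Section \ref{sec:Denoising}. Let $B=\{i:\theta^{(t_0)}_i\ge\tmu_{t_0}/2\}$. The expected number of indices outside $S$ landing in $B$ is at most $n\,\Phi(-\tmu_{t_0}/2)\le n\,e^{-\tmu_{t_0}^2/8}\to 0$, so with high probability $B\subseteq S$; and since a $\normal{\tmu_{t_0}}{1}$ variable exceeds $\tmu_{t_0}/2$ with probability $1-o(1)$, one also gets $|B|\ge(1-o(1))k$. (State evolution controls the bulk, which is what counting captured coordinates requires; it does not by itself control the extreme coordinates, which is exactly why a robust clean-up is still needed rather than simply taking the top $k$ entries of $\theta^{(t_0)}$.)

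Finally, recover $S$ exactly from $B$. Every vertex of $S$ is adjacent to all of $B$ (since $B\subseteq S$ and $S$ is a clique), whereas a vertex $i\notin S$ is adjacent to all of $B$ only with probability $2^{-|B|}=2^{-\Omega(\sqrt n)}$, so by a union bound over the at most $n$ such vertices no vertex outside $S$ has all of $B$ as neighbours, with high probability. Hence $\hS:=\{i:\ i\ \text{is adjacent to every vertex of}\ B\}$ equals $S$ with high probability; to make the edges examined here independent of $B$ one reserves a small independent portion of the graph for this step, as in \cite{deshpande2013finding} (equivalently one can take the $k$ vertices with the most neighbours in $B$, using $(1-o(1))|B|>\tfrac12|B|+O(\sqrt{|B|\log n})$ for $|B|=\Theta(\sqrt n)\gg\log n$). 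For the complexity, each AMP iteration costs $O(n^2)$, there are $O(\log n)$ of them, and the clean-up costs $O(n^2)$, for a total of $O(n^2\log n)$.
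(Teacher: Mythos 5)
The paper itself does not prove this theorem: it states it with a citation to \cite{deshpande2013finding}, and the preceding text supplies only the heuristic AMP derivation, state evolution, and the threshold computation $\kappa > 1/\sqrt{e}$ together with a one-line remark that a ``cleaning'' step is needed. Your outline follows that heuristic faithfully and correctly isolates the main technical pillars (rigorous state evolution for a symmetric spiked $\pm 1$ Wigner matrix, possibly over an iteration count that grows slowly with $n$; a thresholding/rounding step; and a combinatorial cleanup), so at the level of structure it is the same approach.

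There is, however, a genuine gap at the passage from state evolution to ``with high probability $B \subseteq S$.'' You compute $\mathbb{E}\bigl[\#\{i\notin S : \theta^{(t_0)}_i \ge \tmu_{t_0}/2\}\bigr] \le n\,\Phi(-\tmu_{t_0}/2)$ with $\tmu_{t_0} \ge 4\sqrt{\log n}$ and invoke Markov. But state evolution in its standard Bayati--Montanari form gives only weak convergence of the empirical distribution (i.e.\ convergence of pseudo-Lipschitz averages), which does not yield the quantitative marginal bound $\mathbb{P}\bigl[\theta^{(t_0)}_i \ge z\bigr] = \Phi(-z)\,(1+o(1))$ in the regime $z \asymp \sqrt{\log n}$ where the target probability is itself $o(1/n)$: the $o(1)$ state-evolution error, harmless for bulk statistics, completely swamps a target of size $n^{-2}$. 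You yourself note that state evolution ``does not by itself control the extreme coordinates,'' but the very next step uses precisely such an extreme-tail estimate, so the two statements are in tension as written. Closing this gap --- either by a non-asymptotic state-evolution bound with error rates beating $1/n$ and uniform over the (slowly growing) iteration count, or by a cleanup engineered to tolerate a much weaker guarantee on $B$ --- is exactly the nontrivial technical content of the cited paper, and it does not follow from the Bolthausen conditioning argument alone. The remaining pieces of your outline (divergence of $\tmu_t$ above $\kappa = 1/\sqrt{e}$, the role of the Onsager term in restoring conditional Gaussianity, the neighbourhood-counting cleanup with a reserved edge set, and the $O(n^2\log n)$ accounting) are sound.
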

In other words, the message passing algorithm is able to find 
cliques smaller by a factor $1/\sqrt{e}$ with respect to spectral
methods, with no increase in complexity.
A natural research question is the following:
\begin{quote}
\emph{Is it possible to planted find cliques of size $(1-\eps)\sqrt{n/e}$ in
time $O(n^2\log n)$?}
\end{quote}
The paper \cite{deshpande2013finding} provides a partially positive
answer to this question, by 
showing that no `local algorithm' (a special class of linear-time
algorithm) can beat message passing algorithms for a sparse-graph
version of the planted clique problem.

Let us conclude by showing how the last derivation 
agrees in fact with the guess (\ref{eq:OptimalF}) for the optimal
non-linearity.
Note that $\delta =\kappa/\sqrt{n}\to 0$ as $n\to\infty$. In this
limit
\begin{align}
f^{{\rm opt}}_t(y) \approx C_t \, \exp\Big\{\frac{\mu_t}{\sigma_t^2}\, y\Big\}\,.
\end{align}
This coincides with the choice optimizing the state evolution
threshold, once we set $\sigma_t=1$ (that entails no loss of generality).

\label{lecture_6}

\bibliographystyle{IEEEtran}
\bibliography{ChapMontanari/references}

\end{document}